\numberwithin{equation}{section}
\newtheorem{theorem}{Theorem}[section]
\newtheorem{proposition}[theorem]{Proposition} %[section]
\newtheorem{corollary}[theorem]{Corollary} %[section]
\newtheorem{lemma}{Lemma}
\newtheorem{definition}[theorem]{Definition}
\newtheorem{remark}[theorem]{Remark}
\newtheorem{example}[theorem]{Example}
\newtheorem{choice}{Choice}
\newtheorem{observation}[theorem]{Observation}
\newtheorem{convention}{Convention}
\newenvironment{proof}{{\it Proof. }}{{ \vskip 0.1cm
 \hfill{$\square$}}  \vspace{0.0cm} \medskip}
 \DeclareMathOperator{\Map}{Map}
\DeclareMathOperator{\start}{\bullet}
\DeclareMathOperator{\cov}{cov}
\DeclareMathOperator{\bE}{{\mathbb E}}
\DeclareMathOperator{\Image}{Image}
\DeclareMathOperator{\arc}{arc}
 \DeclareMathOperator{\Tr}{Tr}
\DeclareMathOperator{\GL}{GL}
\DeclareMathOperator{\gl}{gl}
 \DeclareMathOperator{\Mat}{Mat}
\DeclareMathOperator{\supp}{supp}
 \DeclareMathOperator{\Det}{Det}
\DeclareMathOperator{\dom}{dom}
 \DeclareMathOperator{\End}{End}
\DeclareMathOperator{\Hom}{Hom}
\DeclareMathOperator{\Ad}{Ad}
\DeclareMathOperator{\ad}{ad}
 \DeclareMathOperator{\WLO}{WLO}
\DeclareMathOperator{\wind}{wind}
\DeclareMathOperator{\sgn}{sgn}
\DeclareMathOperator{\gleam}{gleam}
\DeclareMathOperator{\aff}{aff}
\DeclareMathOperator{\Aff}{Aff}
\DeclareMathOperator{\Hol}{Hol}
\DeclareMathOperator{\mean}{mean}
\renewcommand{\l}{\lambda}
\renewcommand{\r}{\rho}
\renewcommand{\th}{\theta}
\renewcommand{\a}{\alpha}
\newcommand{\be}{\begin{equation}}
\newcommand{\ee}{\end{equation}}
\newcommand{\vf}{\varphi}
\newcommand{\bN}{{\mathbb N}}
\newcommand{\bR}{{\mathbb R}}
\newcommand{\bC}{{\mathbb C}}
\newcommand{\bZ}{{\mathbb Z}}
\newcommand{\bS}{{\mathbb S}}
\newcommand{\ct}{{\mathfrak t}}
\newcommand{\cG}{{\mathfrak g}}
\newcommand{\ck}{{\mathfrak k}}
\newcommand{\cg}{c_{\cG}}
\newcommand{\cA}{{\mathcal A}}
\newcommand{\cB}{{\mathcal B}}
\newcommand{\cC}{{\mathcal C}}
\newcommand{\cK}{{\mathcal K}}
\newcommand{\cP}{{\mathcal P}}
\newcommand{\cR}{{\mathcal R}}
\newcommand{\cW}{{\mathcal W}}
\newcommand{\CW}{{\mathcal C}}
\newcommand{\G}{{\mathcal G}}
\newcommand{\iprime}{j}
\newcommand{\face}{\mathfrak F}
\newcommand{\orth}{\perp}
\newcommand{\eps}{\epsilon}
\newcommand{\id}{{ \rm id}}
\begin{document}

\title{From simplicial Chern-Simons theory to the shadow invariant II}

% \date{\today}

\maketitle

\begin{center} \large
Atle Hahn
\end{center}

\begin{center}
\it   \large  Grupo de  F{\'i}sica Matem{\'a}tica da Universidade de Lisboa \\
Av. Prof. Gama Pinto, 2\\
PT-1649-003 Lisboa, Portugal\\
Email: atle.hahn@gmx.de
  \end{center}

\begin{abstract} This is the second of a series of  papers
in which we introduce and study a rigorous ``simplicial'' realization
of the non-Abelian Chern-Simons path integral for manifolds
$M$ of the form $M=\Sigma \times S^1$ and arbitrary simply-connected
compact structure groups $G$.
More precisely, we introduce, for general links $L$ in $M$,
 a rigorous simplicial version $\WLO_{rig}(L)$  of  the corresponding
 Wilson loop observable $\WLO(L)$ in the so-called
 ``torus gauge''  by Blau and Thompson (Nucl. Phys. B408(2):345--390, 1993).
For a simple class of links $L$ we then evaluate $\WLO_{rig}(L)$
explicitly in a non-perturbative way,
 finding agreement with Turaev's shadow invariant $|L|$.
\end{abstract}

\medskip

{AMS subject classifications:}  57M27,   81T08,  81T45

\medskip

\section{Introduction}
\label{sec1}

Recall from  \cite{Ha7a} that our goal is to find,  for manifolds $M$ of the form $M=\Sigma \times S^1$,
 a rigorous realization of the
non-Abelian Chern-Simons path integral in the torus gauge.
We want to achieve this with the help of a suitable ``simplicial'' approach.
In \cite{Ha7a} we introduced  such an approach  and stated without proof our main result, i.e. Theorem 6.4 in \cite{Ha7a} (= Theorem \ref{main_theorem} in Sec. \ref{subsec4.10} below, to be  proven in Sec. \ref{sec5}
below).

In the present paper  we will briefly recall the simplicial approach in \cite{Ha7a}.
In order to understand the motivation for some of the definitions and constructions
(e.g. in Sec. \ref{subsec4.1} and Sec. \ref{subsec4.3} below) the reader will probably find it helpful  to have a look at Sec. 5 in \cite{Ha7a} (and at  Sec. 1 and Sec. 3 of \cite{Ha7a}).\par

The paper is organized as follows:\par

In Sec. \ref{sec2}  we will recall some of the notation from \cite{Ha7a}
and we will recall the basic heuristic formula from \cite{Ha7a}, cf. Eq. \eqref{eq2.48} below.
In  Sec. \ref{sec4} we recall the discretization approach of \cite{Ha7a}
and restate the main result,  i.e. Theorem \ref{main_theorem}.
In Sec. \ref{sec3} we study ``oscillatory Gauss-type measures'' on Euclidean spaces
and discuss some of their properties.
In Sec. \ref{sec5} we prove Theorem \ref{main_theorem}.
 In Sec. \ref{sec6} we comment briefly on the case of general (simplicial ribbon) links.
   In Sec. \ref{sec7}  we then give an outlook on some promising further directions within the framework of $BF_3$-theory before we conclude the main part of this paper
   with a short discussion of our results in  Sec. \ref{sec8}. \par

 The present paper has an appendix  consisting of four parts:
part \ref{appB} contains a list of the Lie theoretic notation which will be relevant
in the present paper. (This list is a continuation of the list in Appendix A in
\cite{Ha7a}).
In part \ref{appA} we recall the definition of Turaev's shadow invariant $|L|$ for links $L$
in  3-manifolds $M$ of the type $M=\Sigma \times S^1$.
In part \ref{appC'} we recall the definition of $BF$-theory in 3 dimensions and we
briefly comment on the relationship between $BF_3$-theory and CS theory.
In  part \ref{appJ} we sketch  possible reformulations/modifications of the discretization approach
in Sec. \ref{subsec7.3}.

\section{The basic heuristic formula in  \cite{Ha7a}}
\label{sec2}

\subsection{Basic  spaces}
\label{subsec2.1}

As in \cite{Ha7a} we fix
a   simply-connected compact Lie group
 $G$  and  a maximal torus $T$ of $G$.
By  $\cG$ and $\ct$ we  will denote the Lie algebras of $G$ and  $T$
and by $\langle \cdot , \cdot \rangle_{\cG}$ or simply by $\langle \cdot , \cdot \rangle$
 the unique $\Ad$-invariant scalar product
on $\cG$ satisfying the normalization condition $\langle \Check{\alpha} , \Check{\alpha} \rangle = 2$
for every short coroot $\Check{\alpha}$ w.r.t. $(\cG,\ct)$, cf. part \ref{appB} of the Appendix.
   For later use let us also fix a  Weyl chamber   $\CW \subset \ct$.\par

Moreover, we will fix  a compact oriented 3-manifold $M$
of the form $M = \Sigma \times S^1$ where $\Sigma$ is a (compact oriented) surface,
and an ordered oriented link  $L=(l_1, \ldots, l_m)$, $m \in \bN$,
 in $M= \Sigma \times S^1$.
Each $l_i$ is ``colored'' with an irreducible, finite-dimensional, complex representation $\rho_i$ of $G$.

\smallskip

As in  \cite{Ha7a} we will use the following notation\footnote{recall that  $\Omega^p(N,V)$ denotes the space of $V$-valued  $p$-forms
on a smooth manifold $N$}
\begin{subequations} \label{eq_basic_spaces_cont}
\begin{align}
\cB & = C^{\infty}(\Sigma,\ct)  \cong \Omega^0(\Sigma,\ct)\\
\cA & =  \Omega^1(M,\cG)\\
\cA_{\Sigma} & =  \Omega^1(\Sigma,\cG) \\
\cA_{\Sigma,\ct} & = \Omega^1(\Sigma,\ct), \quad  \cA_{\Sigma,\ck}  = \Omega^1(\Sigma,\ck) \\
\cA^{\orth} & =  \{ A \in \cA \mid A(\partial/\partial t) = 0\}\\
\label{eq_part_f}
\Check{\cA}^{\orth} & = \{ A^{\orth} \in \cA^{\orth} \mid \int A^{\orth}(t) dt \in \cA_{\Sigma,\ck} \} \\
\label{eq_part_g}   \cA^{\orth}_c & = \{ A^{\orth} \in \cA^{\orth} \mid \text{ $A^{\orth}$ is constant and
 $\cA_{\Sigma,\ct}$-valued}\}
 \end{align}
\end{subequations}
Here $\ck$ is the orthogonal complement of $\ct$ in $\cG$ w.r.t.
$\langle \cdot, \cdot \rangle$, $dt$ is the normalized (translation-invariant) volume form on $S^1$,
 $\partial/\partial t$ is the vector field on $M=\Sigma \times S^1$
obtained by ``lifting'' the standard vector field $\partial/\partial t$ on $S^1$
 and  in Eqs. \eqref{eq_part_f} and \eqref{eq_part_g}
 we used the ``obvious'' identification (cf. Sec. 2.3.1 in \cite{Ha7a})
\begin{equation}
\cA^{\orth}  \cong C^{\infty}(S^1,\cA_{\Sigma})
\end{equation}
where  $C^{\infty}(S^1,\cA_{\Sigma})$ is the space of maps
$f:S^1 \to \cA_{\Sigma}$ which are ``smooth'' in the sense that
$\Sigma \times S^1 \ni (\sigma,t) \mapsto (f(t))(X_{\sigma}) \in \cG$
is smooth for every smooth vector field $X$ on $\Sigma$.
It follows from the definitions above that
\begin{equation} \label{eq_cAorth_decomp}
\cA^{\orth} = \Check{\cA}^{\orth} \oplus  \cA^{\orth}_c
\end{equation}

\subsection{The heuristic Wilson loop observables}
\label{subsec2.1b}

Recall that in the special case when $G$ is simple\footnote{see Remark \ref{rm2.1} below for the
case when $G$  is not simple} the Chern-Simons action function $S_{CS}: \cA \to \bR$
associated to $M$, $G$, and the ``level''  $k \in \bZ \backslash \{0\}$ is   given by
 \begin{equation} \label{eq2.2'} S_{CS}(A) = - k \pi \int_M \langle A \wedge dA \rangle
   + \tfrac{1}{3} \langle A\wedge [A \wedge A]\rangle, \quad
A \in \cA \end{equation}
where $[\cdot \wedge \cdot]$  denotes the wedge  product associated to the
Lie bracket $[\cdot,\cdot] : \cG \times \cG \to \cG$
and where   $\langle \cdot \wedge  \cdot \rangle$
  denotes the wedge product  associated to the
 scalar product $\langle \cdot , \cdot \rangle : \cG \times \cG \to \bR$.\par

Recall also that the heuristic Wilson loop observable $\WLO(L)$ of a link $L=(l_1,l_2,\ldots,l_m)$ in $M$
with ``colors'' $(\rho_1,\rho_2,\ldots,\rho_m)$ is given by the informal expression
\begin{equation} \label{eq_WLO_orig}
\WLO(L) := \int_{\cA} \prod_i  \Tr_{\rho_i}(\Hol_{l_i}(A)) \exp( i S_{CS}(A)) DA
\end{equation}
where $\Hol_l(A)$ is the holonomy of $A \in \cA$ around the loop $l \in \{l_1, \ldots, l_m\}$.
The following explicit formula for $\Hol_l(A)$ proved to be useful in \cite{Ha7a}:
\begin{equation} \label{eq_Hol_heurist}
\Hol_l(A) = \lim_{n \to \infty} \prod_{k=1}^n \exp\bigl(\tfrac{1}{n}  A(l'(t))\bigr)_{| t=k/n}
\end{equation}
where $\exp:\cG \to G$ is the exponential map of $G$.

\begin{remark} \label{rm_sec1.1} \rm One can assume without loss of generality
that  $G$ is a closed subgroup of $U({\mathbf N})$ for some $\mathbf N \in \bN$.
In the special case where   $G$ is simple we can then   rewrite Eq. \eqref{eq2.2'}  as
$$ S_{CS}(A) =  k \pi \int_M \Tr\bigl( A \wedge dA
   + \tfrac{2}{3} A\wedge A \wedge A\bigr)$$
with
$\Tr:= c \cdot \Tr_{\Mat({\mathbf N},\bC)}$ where $c \in \bR$
is chosen such that $\langle A, B \rangle = - \Tr(A \cdot B)$ for all $A, B \in \cG \subset u(N) \subset
\Mat({\mathbf N},\bC)$.
Clearly, making this assumption is a bit inelegant
but it has some practical advantages, which is why we made
use of it in \cite{Ha7a}.
In the present paper we will use
this assumption only at a later stage, namely in part \ref{appC'} of the Appendix below (with $G$
replaced by $\tilde{G}$).
\end{remark}

\begin{remark} \label{rm2.1}  \rm
Recall from Remark 2.2 in \cite{Ha7a}
that if $G$ is a general  simply-connected compact Lie group then
$G$ will be  of the form $G = \prod_{i=1}^{r} G_i$, $r \in
\bN$, where each $G_i$ is a simple simply-connected compact Lie
group. We can generalize the definition of $S_{CS}$  to this general situation
by setting  -- for any fixed  sequence $(k_i)_{i \le r}$ of non-zero integers --
$$S_{CS}(A) :=  \sum_{i=1}^r S_{CS,i}(A_i) \quad \forall A \in \cA$$
where $S_{CS,i}$ is the  Chern-Simons action function associated to $M$, $G_i$, and  $k_i$
and where $(A_i)_i$ are the components of $A$ w.r.t. to the decomposition
 $\cG = \oplus_{i=1}^r \cG_i$ ($\cG_i$ being the Lie algebra of $G_i$).

 \smallskip

In the present paper  only two special cases will play a role, namely the case
 $r=1$ (i.e. $G$ simple) and the case $r=2$, $G_2 = G_1$ and $k_2
= - k_1$, cf. Sec. \ref{sec7} below.
\end{remark}

\subsection{The basic heuristic formula}
\label{subsec2.2}

The starting point for the main part of \cite{Ha7a} was a second heuristic formula for $\WLO(L)$
which one obtains from Eq. \eqref{eq_WLO_orig} above by applying ``torus gauge fixing'',
 cf. Sec. 2.2.4 in  \cite{Ha7a}.

Let $\sigma_0 \in \Sigma \backslash \bigl( \bigcup_{i=1}^m \arc(l^i_{\Sigma})\bigr)$ be fixed. Here we have set
$l^i_{\Sigma} := \pi_{\Sigma} \circ l_i$,  $i \le m$,  where $\pi_{\Sigma}: \Sigma \times S^1 \to \Sigma$
is the canonical projection.
Then we have (cf. Eq. (2.53) in \cite{Ha7a})
\begin{multline}  \label{eq2.48} \WLO(L)
 \sim \sum_{y \in I}  \int_{\cA^{\orth}_c \times \cB} \biggl\{
 1_{C^{\infty}(\Sigma,\ct_{reg})}(B)  \Det_{FP}(B)\\
 \times   \biggl[ \int_{\Check{\cA}^{\orth}} \biggl( \prod_{i=1}^m  \Tr_{\rho_i}\bigl(
 \Hol_{l_i}(\Check{A}^{\orth} + A^{\orth}_c, B)\bigr) \biggr)
\exp(i  S_{CS}( \Check{A}^{\orth}, B)) D\Check{A}^{\orth} \biggr] \\
 \times \exp\bigl( - 2\pi i k  \langle y, B(\sigma_0) \rangle \bigr) \biggr\}
 \exp(i S_{CS}(A^{\orth}_c, B)) (DA^{\orth}_c \otimes DB)
\end{multline}
where $I:= \ker(\exp_{| \ct})  \subset \ct$
and  $\ct_{reg}:= \exp^{-1}(T_{reg})$ (with $T_{reg}$ denoting the set of regular elements of $T$)
 and where
 for each $B \in \cB$, $A^{\orth} \in \cA^{\orth}$ we have set
\begin{align}
S_{CS}(A^{\orth},B) & := S_{CS}(A^{\orth} + B dt )\\
\label{eq_Hol_heurist_abbr} \Hol_{l}(A^{\orth},  B) &  := \Hol_{l}(A^{\orth}    + B dt)
\end{align}
Here  $dt$ is the real-valued 1-form on  $M=\Sigma \times S^1$
obtained by  pulling back the 1-form $dt$ on $S^1$ in the obvious way.
Finally, $\Det_{FP}(B)$ is the informal expression   given by
\begin{equation} \label{eq_DetFP} \Det_{FP}(B) :=    \det\bigl(1_{\ck}-\exp(\ad(B))_{|\ck}\bigr)
\end{equation}

For the rest of this paper we will now
 fix an auxiliary Riemannian metric ${\mathbf g}$ on $\Sigma$.
 After doing so  we obtain a scalar product   $\ll \cdot , \cdot \gg_{\cA^{\orth}}$ on
  $\cA^{\orth} \cong C^{\infty}(S^1, \cA_{\Sigma})$
 in a natural way. Moreover, we then have a well-defined
Hodge star operator
 $\star: \cA_{\Sigma} \to \cA_{\Sigma}$ which induces an operator
 $\star: C^{\infty}(S^1, \cA_{\Sigma}) \to C^{\infty}(S^1, \cA_{\Sigma})$ in the obvious way.
 According to  Eq. (2.48) in \cite{Ha7a}
 we then have the following explicit formula
 \begin{equation} \label{eq_SCS_expl0} S_{CS}(A^{\orth},B)  =  \pi k  \ll A^{\orth},
\star  \bigl(\tfrac{\partial}{\partial t} + \ad(B) \bigr) A^{\orth} \gg_{\cA^{\orth}}
 +  2 \pi k  \ll\star  A^{\orth},  dB \gg_{\cA^{\orth}}
\end{equation}
for all $B \in \cB$ and $A^{\orth} \in \cA^{\orth}$,
 which implies
 \begin{align} \label{eq_SCS_expl}
S_{CS}(\Check{A}^{\orth},B) & =  \pi k  \ll \Check{A}^{\orth},
\star  \bigl(\tfrac{\partial}{\partial t} + \ad(B) \bigr) \Check{A}^{\orth} \gg_{\cA^{\orth}} \\
\label{eq_SCS_expl2}
 S_{CS}(A^{\orth}_c,B) & =   2 \pi k  \ll\star  A^{\orth}_c,  dB \gg_{\cA^{\orth}}
\end{align}
for  $B \in \cB$, $\Check{A}^{\orth} \in \Check{\cA}^{\orth}$, and $A^{\orth}_c \in \cA^{\orth}_c$.

\begin{remark} \label{rm2.2} \rm
In view of Sec. \ref{subsec4.10a} below we recall that --
according to Remark 2.7 in \cite{Ha7a}  --
we can replace  space $\cB$ appearing in the outer integral  $\int_{\cA^{\orth}_c \times \cB} \cdots (DA^{\orth}_c \otimes DB)$ in Eq. \eqref{eq2.48} above by the space
$$ \cB^{loc}_{\sigma_0}:= \{B \in \cB | B \text{ is locally constant around $\sigma_0$} \}$$
\end{remark}

\section{Simplicial realization of $\WLO(L)$}
\label{sec4}

In the present section we briefly recall the definition of the rigorous simplicial analogue
 $\WLO_{rig}(L)$ for the RHS of Eq. \eqref{eq2.48} above which we gave in Sec. 5 in \cite{Ha7a}
 and we recall the main result of \cite{Ha7a}, namely Theorem 6.4 (= Theorem \ref{main_theorem} below).

 \smallskip

 Anyway, the reader will probably find it useful to
 have a look at Sec. 4 and Sec. 5 in \cite{Ha7a} where we explain
  in  much more detail the motivation of our  constructions.

\setcounter{subsection}{-1}
\subsection{Review of the simplicial setup in Sec. 4 in \cite{Ha7a}}
\label{subsec4.00}

Recall from Sec. 4.1 in \cite{Ha7a}
that for a finite oriented polyhedral cell complex $\cP$
we denote by $\face_p(\cP)$, $ p \in \bN_0$, the set of $p$-faces of $\cP$,
 and --  for  every fixed real vector space $V$ --
we denoted by  $C^p(\cP,V)$ the space of maps $\face_p(\cP) \to V$  (``$V$-valued
 $p$-cochains of $\cP$''). The elements of $\face_0(\cP)$ (resp. $\face_1(\cP)$)
  will be called the ``vertices'' (resp. ``edges'') of $\cP$. Instead of  $C^p(\cP,\bR)$ we will often write $C_p(\cP)$.  By $d_{\cP}$ we will denote the usual coboundary operator $C^p(\cP,V) \to C^{p+1}(\cP,V)$.

 \medskip

In \cite{Ha7a} we actually only considered the special situation
$\cP \in \{\bZ_N,\cK,\cK',q\cK, \cK \times \bZ_N,\cK' \times \bZ_N, q\cK \times \bZ_N \}$
where $\bZ_N$, $\cK$, $\cK'$, and $q\cK$ are given as follows:

\smallskip

Recall that in Sec. 4.4 in \cite{Ha7a}
we fixed $N \in \bN$ and used  the finite cyclic group
$\bZ_{N}$ with the ``obvious''\footnote{i.e. the set of edges is given by $\{(t,t+1) \mid t \in \bZ_N\}$} (oriented) graph  structure as a discrete analogue of the Lie group $S^1$.

\medskip

Moreover, we fixed a finite oriented smooth polyhedral cell decomposition
$\cC$  of $\Sigma$. By $\cC'$ we denoted the dual polyhedral cell decomposition, equipped with an orientation.
By $\cK$ and $\cK'$  we denoted the corresponding (oriented)
  polyhedral cell complexes, i.e. $\cK := (\Sigma,\cC)$ and $\cK' := (\Sigma,\cC')$.\par

Instead of $\cK$ (resp. $\cK'$)
 we usually wrote $K_1$ (resp. $K_2$)
  and  we set $K:= (K_1,K_2)$.

\smallskip

We then introduced a joint sub division $q\cK:= (\Sigma,q\cC)$ of $\cK = K_1$ and $\cK' = K_2$
which can be characterized by the conditions
$$\face_0(q\cK) = \face_0(b\cK), \quad
\face_1(q\cK) = \face_1(b\cK) \backslash \{e \in \face_1(b\cK) \mid \text{ both endpoints of $e$
lie in $\face_0(K_1) \sqcup  \face_0(K_2)$} \}$$
where $b\cK$ is the barycentric sub division of $\cK$.
The set $\face_2(q\cK)$ is uniquely determined by $\face_0(q\cK)$ and $\face_1(q\cK)$.
Observe that each $F \in \face_2(q\cK)$ is a tetragon.
We introduced the notation
 \begin{equation}  \label{eq_def_K1|K2} \face_0(K_1 | K_2) := \face_0(q\cK) \backslash (\face_0(K_1) \cup \face_0(K_2))
\end{equation}
Recall that also the faces of $q\cK$ were equipped with an orientation.
 For convenience we chose the orientation on the edges of  $q\cK$  to be ``compatible''\footnote{more precisely,
 for  each $e \in \face_1(q\cK)$ we choose the orientation which is induced by orientation of the unique
 edge $e' \in \face_1(K_1) \cup \face_1(K_2)$ which contains $e$} with
 the orientation on the edges of $K_1$ and $K_2$.

\medskip

Recall from Sec. 4.2 in \cite{Ha7a}  that
a  ``simplicial curve'' in $\cP$ is a finite sequence
$x=(x^{(k)})_{k \le n}$, $n \in \bN$, of  vertices in $\cP$
 such that for every $k \le n$
the two vertices  $x^{(k)}$ and  $x^{(k+1)}$
   either coincide or are the two endpoints
 of an edge $e \in \face_1(\cP)$.
  If $x^{(n)} = x^{(1)}$ we will call
$x= (x^{(k)})_{k \le n}$ a ``simplicial loop'' in $\cP$.\par
Recall also that every simplicial curve  $x= (x^{(k)})_{k \le n}$ with $n > 1$ induces a
sequence $(e^{(k)})_{k \le n-1}$ of ``generalized edges'', i.e. elements of
$\face_1(\cP) \cup \{0\} \cup (- \face_1(\cP))\subset
C_1(\cP)$ in a canonical way.
Unless $x=(x^{(k)})_{k \le n}$ is constant, we can reconstruct $x=(x^{(k)})_{k \le n}$
from $(e^{(k)})_{k \le n-1}$. We write $\start e^{(k)}$ instead of $x^{(k)}$ (for $k \le n-1$).

\medskip

Recall from Sec. 4.3 in \cite{Ha7a}  that a
``(closed)\footnote{we will often omit the word ``closed''}
 simplicial  ribbon''  in $\cP$ is a finite sequence $R = (F_i)_{i \le n}$ of 2-faces of $\cP$
such that every $F_i$ is a tetragon
and such that  $F_i \cap  F_{j} = \emptyset$ unless $i = j$ or $j= i \pm 1$ (mod n).
In the latter case  $F_i$ and $F_{j}$ intersect in  a (full) edge.\par

\begin{convention} \label{conv1} \rm

Let $V$ be a fixed finite-dimensional real vector space.
\begin{enumerate}
\item We set
$C_1(K):= C_1(K_1) \oplus C_1(K_2)$ \ and \ $C^1(K,V):= C^1(K_1,V) \oplus C^1(K_2,V) $.

\item Let $\psi: C_1(K) \to C_1(q\cK)$ be the (injective) linear map given by
$$\psi(e)=e_1 + e_2 \quad \text{ for all } \quad e \in \face_1(K_1) \cup  \face_1(K_2)$$
where $e_1 = e_1(e), e_2 = e_2(e) \in  \face_1(q\cK)$ are the two edges of $q\cK$
``contained'' in $e$.
In the following we will identify $C_1(K)$ with the subspace $\psi(C_1(K))$ of $C_1(q\cK)$.
Moreover, using the identifications
$  C^1(q\cK,V) \cong C_1(q\cK) \otimes_{\bR} V$ and $C^1(K,V) \cong C_1(K) \otimes_{\bR} V$
 we  naturally obtain the linear map
$$\psi^V := \psi \otimes \id_V: C^1(K,V) \to C^1(q\cK,V)$$
We will identify $C^1(K,V)$ with the subspace  $\psi^V(C^1(K,V))$ of $C^1(q\cK,V)$.
\end{enumerate}

\end{convention}

\subsection{The basic spaces}
\label{subsec4.0}

As in Sec. 5.1 in \cite{Ha7a} we introduce
the following discrete analogues of the spaces $\cB$, $\cA_{\Sigma}$ and $\cA^{\orth}$
in Sec. \ref{subsec2.1} above:
\begin{subequations} \label{eq_basic_spaces}
\begin{align}
 \cB(qK) & :=C^0(q\cK,\ct) \\
 \cA_{\Sigma}(q\cK) & := C^1(q\cK,\cG) \\
\cA^{\orth}(q\cK) & := \Map(\bZ_N,\cA_{\Sigma}(q\cK))
 \end{align}
 \end{subequations}

\noindent
  Clearly, the scalar product $\langle \cdot, \cdot \rangle_{\cG}$ on $\cG$  induces
 scalar products $\ll \cdot, \cdot\gg_{\cB(q\cK)}$ and $\ll \cdot, \cdot\gg_{\cA_{\Sigma}(q\cK)}$
 on $\cB(q\cK)$ and $\cA_{\Sigma}(q\cK)$  in the standard way.
 We introduce a  scalar product
  $\ll \cdot , \cdot \gg_{\cA^{\orth}(q\cK)}$      on $\cA^{\orth}(q\cK) = \Map(\bZ_N,\cA_{\Sigma}(q\cK))$ by
   \begin{equation} \label{eq_norm_scalarprod}
\ll A^{\orth}_1 , A^{\orth}_2 \gg_{\cA^{\orth}(q\cK)}  =   \tfrac{1}{N} \sum_{t \in \bZ_N} \ll
A^{\orth}_1(t) , A^{\orth}_2(t) \gg_{\cA_{\Sigma}(q\cK)}
\end{equation} for all $A^{\orth}_1 , A^{\orth}_2 \in \cA^{\orth}(q\cK)$.

\begin{convention} \rm \label{conv1neu}
We identify $\cA_{\Sigma}(q\cK)$ with the subspace
$\{ A^{\orth} \in \Map(\bZ_N,\cA_{\Sigma}(q\cK)) \mid A^{\orth}  \text{ is constant}\}$  of $\cA^{\orth}(q\cK)$
 in the obvious way.
\end{convention}

For technical reasons\footnote{namely, in order to obtain a nice simplicial analogue of the Hodge star operator,
cf. Sec. \ref{subsec4.2} below}  we will not only work with the full spaces $\cA_{\Sigma}(q\cK)$ and $\cA^{\orth}(q\cK)$
but also their subspaces  (cf. Convention \ref{conv1} above)
 $\cA_{\Sigma}(K)$ and $\cA^{\orth}(K)$ given by
\begin{align}
\cA_{\Sigma}(K) & :=C^1(K_1,\cG)  \oplus  C^1(K_2,\cG) \subset \cA_{\Sigma}(q\cK)\\
\cA^{\orth}(K) & := \Map(\bZ_N,\cA_{\Sigma}(K)) \subset \cA^{\orth}(q\cK)
\end{align}

\subsubsection*{The decomposition $\cA^{\orth}(K) =  \Check{\cA}^{\orth}(K) \oplus \cA^{\orth}_c(K)$}

In order to obtain  a  discrete analogue of the decomposition $\cA^{\orth} =  \Check{\cA}^{\orth}
\oplus \cA^{\orth}_c$ in  Eq. \eqref{eq_cAorth_decomp} above
let us introduce the following spaces:
\begin{subequations}
\begin{align}
  \cA_{\Sigma,\ct}(K) & := C^1(K_1,\ct) \oplus C^1(K_2,\ct)\\
    \cA_{\Sigma,\ck}(K) & := C^1(K_1,\ck) \oplus C^1(K_2,\ck)\\
\label{eq_CheckcA_disc} \Check{\cA}^{\orth}(K) & :=
 \{ A^{\orth} \in  \cA^{\orth}(K) \mid
\sum\nolimits_{t \in \bZ_{N}}  A^{\orth}(t) \in  \cA_{\Sigma,\ck}(K) \} \\
 \cA^{\orth}_c(K) & := \{ A^{\orth} \in  \cA^{\orth}(K) \mid
 \text{ $A^{\orth}(\cdot)$ is constant and $ \cA_{\Sigma,\ct}(K)$-valued}\} \cong \cA_{\Sigma,\ct}(K)
 \end{align}
 \end{subequations}
 Observe that  we have
\begin{equation} \label{eq4.30}
\cA^{\orth}(K) =  \Check{\cA}^{\orth}(K) \oplus \cA^{\orth}_c(K)
\end{equation}
which is indeed a  discrete analogue of the decomposition $\cA^{\orth} =  \Check{\cA}^{\orth}
\oplus \cA^{\orth}_c$ in  Eq. \eqref{eq_cAorth_decomp} above.

\subsection{Discrete analogue of the operator $\tfrac{\partial}{\partial t} + \ad(B):\cA^{\orth} \to \cA^{\orth}$}
\label{subsec4.1}

Let us recall the definition of the operator $L^{(N)}(B):\cA^{\orth}(K) \to \cA^{\orth}(K)$
which we introduced in \cite{Ha7a} as the discrete analogue of the continuum operator
$\tfrac{\partial}{\partial t} + \ad(B):\cA^{\orth} \to \cA^{\orth}$ in Eq. \eqref{eq_SCS_expl0} above.\par

Let $\tau_x$, for  $x \in \bZ_N$,  denote   the translation operator  $\Map(\bZ_N,\cG) \to  \Map(\bZ_N,\cG)$  given by $(\tau_x f)(t) = f(t +x)$ for all $t \in \bZ_N$ and $f \in \Map(\bZ_N,\cG)$.
 Instead of $\tau_0$ we will simply write $1$ in the following.\par

In \cite{Ha7a} we introduced, for fixed $b \in \ct$,
the following  natural
discrete analogues $L^N(b): \Map(\bZ_N,\cG) \to \Map(\bZ_N,\cG)$
 of  the continuum operator $L(b):= \tfrac{\partial}{\partial t} + \ad(b): C^{\infty}(S^1,\cG) \to  C^{\infty}(S^1,\cG)$
 (cf.  Sec. 5.2 in \cite{Ha7a} where we also explain why these operators really are natural):
\begin{subequations}  \label{eq_def_LOp}
\begin{align}
\label{eq_def_LOp_a} \hat{L}^{(N)}(b) & := N( \tau_1 e^{\ad(b)/N} -1)\\
\label{eq_def_LOp_b} \Check{L}^{(N)}(b) & := N(1 - \tau_{-1} e^{-\ad(b)/N}) \\
\label{eq_def_LOp_c} \bar{L}^{(N)}(b) & := \tfrac{N}{2}( \tau_1 e^{\ad(b)/N} - \tau_{-1} e^{-\ad(b)/N}) \quad \text{ if $N$ is even }
\end{align}
\end{subequations}

Let $B \in \cB(q\cK)$. The operator  $L^{(N)}(B):\cA^{\orth}(K) \to \cA^{\orth}(K)$
mentioned above is the linear operator which, under the identification
 $$\cA^{\orth}(K) \cong \Map(\bZ_N, C^1(K_1,\cG)) \oplus \Map(\bZ_N, C^1(K_2,\cG)),$$
 is   given by
 \begin{equation} \label{def_LN} L^{(N)}(B) = \left( \begin{matrix}
 \hat{L}^{(N)}(B) && 0 \\
0 && \Check{L}^{(N)}(B)
\end{matrix} \right)
\end{equation}
Here the linear operators $\hat{L}^{(N)}(B):  \Map(\bZ_N, C^1(K_1,\cG))  \to \Map(\bZ_N, C^1(K_1,\cG)) $
and $\Check{L}^{(N)}(B):  \Map(\bZ_N, C^1(K_2,\cG))  \to \Map(\bZ_N, C^1(K_2,\cG)) $
are given by
\begin{subequations}
\begin{align} \label{eq_LN_ident1}
\hat{L}^{(N)}(B)) & \cong \oplus_{\bar{e} \in  \face_0(K_1 | K_2)}
 \hat{L}^{(N)}(B(\bar{e})) \\
 \label{eq_LN_ident2} \Check{L}^{(N)}(B)) & \cong \oplus_{\bar{e} \in  \face_0(K_1 | K_2)}
 \Check{L}^{(N)}(B(\bar{e}))
\end{align}
\end{subequations}
where $\face_0(K_1 | K_2)$ is as in Eq. \eqref{eq_def_K1|K2} above.
In Eqs.  \eqref{eq_LN_ident1} and \eqref{eq_LN_ident2} we used the obvious identification
$$ \Map(\bZ_N, C^1(K_j,\cG)) \cong \oplus_{e \in  \face_1(K_j)}
\Map(\bZ_N,\cG) \cong \oplus_{\bar{e} \in  \face_0(K_1 | K_2)}
\Map(\bZ_N,\cG)$$

We remark that  $L^{(N)}(B)$ leaves the subspace $\Check{\cA}^{\orth}(K)$ of $\cA^{\orth}(K)$
 invariant. The restriction of $L^{(N)}(B)$  to $\Check{\cA}^{\orth}(K)$
 will also be denoted by $L^{(N)}(B)$ in the following.

\subsection{Definition of $S^{disc}_{CS}(A^{\orth},B)$}
\label{subsec4.2}

Recall that   in \cite{Ha7a} we introduced
discrete Hodge operators
$\star_{K_1}: C^1(K_1,\cG) \to C^{1}(K_2,\cG)$ and $\star_{K_2}: C^1(K_2,\cG) \to C^{1}(K_1,\cG)$, cf. Sec. 4.5 in \cite{Ha7a}.
Moreover, we introduced two different operators denoted by $\star_K$ (cf. Sec. 4.5 and Sec. 5.3 in \cite{Ha7a}).
Firstly, the operator   $\star_K:  \cA_{\Sigma}(K) \to  \cA_{\Sigma}(K) =  C^1(K_1,\cG) \oplus C^1(K_2,\cG)$
given by
\begin{equation} \label{eq_Hodge_matrix}
\star_K := \left(\begin{matrix} 0 && \star_{K_2} \\ \star_{K_1} && 0 \end{matrix}
 \right)
\end{equation}
and,  secondly, the operator   $\star_K: \cA^{\orth}(K)  \to  \cA^{\orth}(K) $ given by
\begin{equation} \label{eq_star_K_vor_rm}
 (\star_K A^{\orth})(t) = \star_K (A^{\orth}(t)) \quad \quad \forall A^{\orth} \in \cA^{\orth}(K), t \in \bZ_N
 \end{equation}

As the  discrete analogues of the continuum expression
$S_{CS}(A^{\orth},B)$ in Eq. \eqref{eq_SCS_expl0} above
we  use the expression
 \begin{subequations}  \label{eq_SCS_expl_disc}
\begin{equation}  S^{disc}_{CS}(A^{\orth},B) :=   \pi  k \biggl[ \ll A^{\orth},
\star_K  L^{(N)}(B)
   A^{\orth} \gg_{\cA^{\orth}(q\cK)}
 + 2 \ll  \star_K A^{\orth},  d_{q\cK}  B \gg_{\cA^{\orth}(q\cK)}  \biggr]
\end{equation}
  for $B \in \cB(q\cK)$,  $A^{\orth} \in  \cA^{\orth}(K) \subset \cA^{\orth}(q\cK)$.
Observe that this implies
 \begin{align}  \label{eq_SCS_expl_discb} S^{disc}_{CS}(\Check{A}^{\orth},B) & =   \pi  k  \ll \Check{A}^{\orth},
\star_K  L^{(N)}(B)    \Check{A}^{\orth} \gg_{\cA^{\orth}(q\cK)} \\
\label{eq_SCS_expl_discc} S^{disc}_{CS}(A^{\orth}_c,B) & =  2 \pi  k  \ll  \star_K A^{\orth}_c,  d_{q\cK}  B \gg_{\cA^{\orth}(q\cK)}
\end{align}
 \end{subequations}
for $B \in \cB(q\cK)$,  $\Check{A}^{\orth} \in  \Check{\cA}^{\orth}(K)$, $A^{\orth}_c \in  \cA^{\orth}_c(K)$.

\smallskip

Recall that we have (cf. Proposition 5.3 in \cite{Ha7a}):
\begin{proposition} \label{prop4.2}
The  operator  $\star_K L^{(N)}(B): \cA^{\orth}(K) \to \cA^{\orth}(K)$
is symmetric  w.r.t to the scalar product $\ll  \cdot, \cdot  \gg_{\cA^{\orth}(q\cK)}$.
\end{proposition}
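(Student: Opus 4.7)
The plan is to exploit the block structure of both operators under the decomposition
$$\cA^{\orth}(K) \cong \Map(\bZ_N, C^1(K_1,\cG)) \oplus \Map(\bZ_N, C^1(K_2,\cG)).$$
In this decomposition $L^{(N)}(B)$ is block-diagonal with blocks $\hat{L}^{(N)}(B), \Check{L}^{(N)}(B)$ by Eq. \eqref{def_LN}, while $\star_K$ is block-antidiagonal with entries $\star_{K_2}, \star_{K_1}$. Hence
$$\star_K L^{(N)}(B) \;=\; \begin{pmatrix} 0 & \star_{K_2}\Check{L}^{(N)}(B) \\ \star_{K_1}\hat{L}^{(N)}(B) & 0 \end{pmatrix}$$
is itself block-antidiagonal, and symmetry reduces to the single operator identity $(\star_{K_1}\hat{L}^{(N)}(B))^{*} = \star_{K_2}\Check{L}^{(N)}(B)$, the second off-diagonal identity following by taking adjoints.

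I would verify this by combining three ingredients. First, the discrete Hodge duality $\star_{K_1}^{*} = -\star_{K_2}$, which is the natural analogue of the skew-symmetry of $\star$ on 1-forms in 2D Riemannian geometry ($\star^{2} = -1$); in the present setting this should be read off from the specific definitions of $\star_{K_1}, \star_{K_2}$ given in Sec.~4.5 of \cite{Ha7a}. Second, the adjoint relation $\hat{L}^{(N)}(B)^{*} = -\Check{L}^{(N)}(B)$, obtained from the two elementary facts $\tau_1^{*} = \tau_{-1}$ on $\Map(\bZ_N,V)$ (using the normalization in Eq. \eqref{eq_norm_scalarprod}) and $(e^{\ad(B(\bar e))/N})^{*} = e^{-\ad(B(\bar e))/N}$ (since $\ad(b)$ is skew on $\cG$ for $b \in \ct$, by $\Ad$-invariance of $\langle\cdot,\cdot\rangle$); together these give $(\tau_1 e^{\ad(B)/N})^{*} = \tau_{-1} e^{-\ad(B)/N}$, whence
$$\hat{L}^{(N)}(B)^{*} = N(\tau_{-1}e^{-\ad(B)/N} - 1) = -\Check{L}^{(N)}(B).$$
Third, the commutation $\star_{K_1}\,\Check{L}^{(N)}(B) = \Check{L}^{(N)}(B)\,\star_{K_1}$ (and analogously with $\hat{L}^{(N)}$). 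Under the identifications \eqref{eq_LN_ident1}--\eqref{eq_LN_ident2} the factors $\tau_{\pm 1}$ and $e^{\pm \ad(B(\bar e))/N}$ act respectively on the $\bZ_N$-factor and pointwise on $\cG$, whereas $\star_{K_1}$ is a purely combinatorial operation on the cochain-factor exchanging $e \in \face_1(K_1)$ with the dual edge in $\face_1(K_2)$. Crucially both edges share the same midpoint $\bar e \in \face_0(K_1|K_2)$, so the coefficient $B(\bar e)$ is identical on the two sides of the commutation, and all three families act on mutually commuting tensor factors.

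Chaining the three ingredients yields
$$(\star_{K_1}\hat{L}^{(N)}(B))^{*} \;=\; \hat{L}^{(N)}(B)^{*}\,\star_{K_1}^{*} \;=\; (-\Check{L}^{(N)}(B))(-\star_{K_2}) \;=\; \Check{L}^{(N)}(B)\,\star_{K_2} \;=\; \star_{K_2}\,\Check{L}^{(N)}(B),$$
which is exactly what is required.

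The only subtle step in this scheme is the first ingredient, namely extracting the correct sign $\star_{K_1}^{*} = -\star_{K_2}$ from the conventions in Sec.~4.5 of \cite{Ha7a}. That the sign has to come out negative is forced by the heuristic continuum picture: a direct calculation using $\star^{*} = -\star$ and $(\tfrac{\partial}{\partial t} + \ad(B))^{*} = -(\tfrac{\partial}{\partial t} + \ad(B))$ shows that $\star(\tfrac{\partial}{\partial t} + \ad(B))$ is symmetric precisely because the two minus signs cancel against the commutativity of $\star$ with $\ad(B)$, and our discrete argument is just the rigorous counterpart of this. Once the sign convention for $\star_{K_j}$ is verified, the rest reduces to the routine block-matrix computation above.
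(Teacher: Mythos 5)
Your argument is correct, and it is the natural (and, as far as one can tell, the intended) proof: the present paper does not actually prove Proposition \ref{prop4.2} but merely cites Proposition 5.3 of \cite{Ha7a}, so there is no in-text proof to compare against. The block decomposition, the adjoint relations $\tau_1^{*}=\tau_{-1}$ and $(e^{\ad(b)/N})^{*}=e^{-\ad(b)/N}$ giving $\hat{L}^{(N)}(b)^{*}=-\Check{L}^{(N)}(b)$, and the commutation of the $L$-blocks with the Hodge operators via the shared midpoint $\bar e\in\face_0(K_1|K_2)$ all check out; the one input you cannot verify from this paper alone is the sign convention $\star_{K_1}^{*}=-\star_{K_2}$, which you correctly isolate and which does hold for the definitions in Sec.~4.5 of \cite{Ha7a} (the discrete Hodge operators are isometries with $\star_{K_2}\star_{K_1}=-\id$).
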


\subsection{Definition of $\Hol^{disc}_{R}(A^{\orth},  B)$  }
\label{subsec4.3}

Let  $A^{\orth} \in \cA^{\orth}(K) \subset \cA^{\orth}(q\cK)$ and $B \in \cB(q\cK)$.
Moreover, let  $R  = (F_k)_{k \le n}$, $n \in \bN$, be a  closed simplicial
ribbon in $q\cK \times \bZ_N$.
According to Remark 4.3 in \cite{Ha7a}
 $R$ induces a pair  $(l,l')$ of
  simplicial loops $l = (l^{(k)})_{k \le n}$ and   $l' = (l^{'(k)})_{k \le n}$
      in $q\cK \times \bZ_N$ in the obvious way ($l$ and $l'$ are simply the two loops ``on the boundary'' of $R$).
Let $l_{\Sigma}$, $l'_{\Sigma}$, $l_{S^1}$, $l'_{S^1}$  denote
the corresponding ``projected'' simplicial loops in $q\cK$ and $\bZ_N$,
 cf. Sec. 4.4.4 in \cite{Ha7a}.\par

The simplicial analogue of the continuum expression
$\Hol_{l}(A^{\orth},   B)$ we used in \cite{Ha7a} (cf.  Sec. 5.4 in \cite{Ha7a}; here it is very helpful
to recall the discussion in Sec. 5.4 in \cite{Ha7a} since otherwise the motivation
for the RHS of Eq. \eqref{eq4.21} may not become clear) was
\begin{multline} \label{eq4.21}
\Hol^{disc}_{R}(A^{\orth},   B) :=
  \prod_{k=1}^n \exp\biggl( \tfrac{1}{2}\bigl(A^{\orth}(\start l^{(k)}_{S^1})\bigr)(l^{(k)}_{\Sigma}) + \tfrac{1}{2} \bigl(A^{\orth}(\start l^{'(k)}_{S^1})\bigr)(l^{'(k)}_{\Sigma}) \\
  +  \tfrac{1}{2} B(\start l^{(k)}_{\Sigma})
 \cdot dt^{(N)}(l^{(k)}_{S^1}) + \tfrac{1}{2} B(\start l^{'(k)}_{\Sigma})  \cdot dt^{(N)}(l^{'(k)}_{S^1}) \biggr)
\end{multline}
  where $dt^{(N)} \in C^1(\bZ_{N},\bR) \cong \Hom_{\bR}(C_1(\bZ_{N}),\bR)$ is   given by
 \begin{equation}dt^{(N)}(e)= \tfrac{1}{N} \quad \quad \forall e \in \face_1(\bZ_N)
 \end{equation}
  and where we have
  made    the identification $\cA_{\Sigma}(q\cK) = C^1(q\cK,\cG) \cong \Hom(C_1(q\cK),\cG)$.

\begin{remark} \label{rm_full_ribbons} \rm In view of Remark \ref{rm_full_ribbons2} below let us point out that
 instead of working with simplicial ribbons in $q\cK \times \bZ_N$ (``half ribbons''\footnote{observe that
     every simplicial ribbon $R$  in $\cK \times \bZ_N$
      can be considered as the union of two   simplicial ribbons $R_+$
      and $R_-$  in $q\cK \times \bZ_N$ in a natural way})
   one could also work with simplicial ribbons in $\cK \times \bZ_N$ (``full ribbons'').
   Observe that every closed simplicial ribbon $R$ in
     $\cK \times \bZ_N$ induces three loops $l^+ = (l^{+(k)})_{k \le n}$, $l^- = (l^{-(k)})_{k \le n}$, and
     $l = (l^{(k)})_{k \le n}$, $n \in \bN$, in $q\cK \times \bZ_N$
     in a natural way,  $l^+$ and $l^-$ being the two boundary loops
     and $l$ being the loop ``inside''\footnote{More precisely, the loop $l$
      is the loop ``lying on the intersection'' of the two associated half ribbons $R_+$ and $R_-$}  $R$.
      We now define $\Hol^{disc}_{R}(A^{\orth},   B) $ by
   \begin{multline} \label{eq4.21_full}
\Hol^{disc}_{R}(A^{\orth},   B) :=
  \prod_{k=1}^n \exp\biggl( \bigl( \sum_{\pm} \tfrac{1}{4}\bigl(A^{\orth}(\start l^{\pm (k)}_{S^1})\bigr)(l^{\pm (k)}_{\Sigma}) \bigr) + \tfrac{1}{2} \bigl(A^{\orth}(\start l^{(k)}_{S^1})\bigr)(l^{(k)}_{\Sigma}) \\
  + \bigl( \sum_{\pm} \tfrac{1}{4} B(\start l^{\pm(k)}_{\Sigma})
 \cdot dt^{(N)}(l^{\pm(k)}_{S^1}) \bigr) + \tfrac{1}{2} B(\start l^{(k)}_{\Sigma})  \cdot dt^{(N)}(l^{(k)}_{S^1}) \biggr)
\end{multline}
where we use  the notation $\sum_{\pm} \cdots $ in the obvious way\footnote{eg,
$\sum_{\pm} \tfrac{1}{4}  A^{\orth}(\start l^{\pm(k)}_{S^1})(l^{\pm(k)}_{\Sigma})$
is a short form of $\tfrac{1}{4} A^{\orth}(\start l^{+(k)}_{S^1})(l^{+(k)}_{\Sigma}) +
\tfrac{1}{4} A^{\orth}(\start l^{-(k)}_{S^1})(l^{-(k)}_{\Sigma})$}.
 \end{remark}

\subsection{Definition of $\Det^{disc}_{FP}(B)$}
\label{subsec4.4}

 In  \cite{Ha7a} our original\footnote{recall that in \cite{Ha7a} we later modified this definition; we will do this also in the present paper,
 cf. Sec. \ref{subsec4.10a} below} ansatz  for the discrete analogue $\Det^{disc}_{FP}(B)$ of  the heuristic expression $\Det_{FP}(B) =
 \det(1_{\ck}-\exp(\ad(B))_{|\ck})$ given by Eq. \eqref{eq_DetFP} was
    \begin{align}  \label{eq_def_DetFPdisc}
 \Det^{disc}_{FP}(B) & :=
   \prod_{x \in \face_0(q\cK)}  \det\bigl(1_{{\ck}}-\exp(\ad(B(x)))_{| {\ck}}\bigr)
  \end{align}
 for every $B \in \cB(q\cK)$.

\subsection{Discrete version of $1_{C^{\infty}(\Sigma,\ct_{reg})}(B)$}
\label{subsec4.6}

Let us fix a family
$(1^{(s)}_{\ct_{reg}})_{s > 0}$
of elements of $C^{\infty}_{\bR}(\ct)$  with the following properties:
\begin{itemize}
\item $\Image(1^{(s)}_{\ct_{reg}}) \subset [0,1]$ \ and \
 $\supp(1^{(s)}_{\ct_{reg}}) \subset  \ct_{reg}$ \quad for each $s>0$,
\item $1^{(s)}_{\ct_{reg}} \to 1_{\ct_{reg}}$ pointwise as $s \to 0$,
\item Each $1^{(s)}_{\ct_{reg}}$, $s>0$,  is  invariant under the operation of the affine Weyl group $\cW_{\aff}$
 on $\ct$.
\end{itemize}

\noindent
For fixed $s > 0$ and  $B \in \cB(q\cK)$ we will now take
 the expression
\begin{equation}
\prod_{x} 1^{(s)}_{\ct_{reg}}(B(x)):=
\prod_{x \in \face_0(q\cK)}
1^{(s)}_{\ct_{reg}}(B(x))
\end{equation}
as  the discrete analogue of $1_{C^{\infty}(\Sigma,\ct_{reg})}(B)$.
Later we will let $s \to  0$.

\subsection{Discrete versions of the two Gauss-type measures in Eq. \eqref{eq2.48}}
  \label{subsec4.8}

\begin{convention}  \label{conv_EucSpaces} \rm
In the following we will always consider $\cB(q\cK)$, $\cA^{\orth}(K)$
and their subspaces
 as Euclidean spaces in the ``obvious''\footnote{More precisely, we will assume that the space
$\cB(q\cK)$ (or any subspace of $\cB(q\cK)$)  is equipped with the (restriction of the) scalar product $\ll \cdot, \cdot\gg_{\cB(q\cK)}$ on $\cB(q\cK)$, and the space $\cA^{\orth}(K)$ (or any subspace of $\cA^{\orth}(K)$)
is equipped with the  restriction of the scalar product $\ll \cdot, \cdot\gg_{\cA^{\orth}(q\cK)}$,
introduced in Sec. \ref{subsec4.0} above} way.
\end{convention}

i) Let $D\Check{A}^{\orth}$ denote the (normalized) Lebesgue measure
on $\Check{\cA}^{\orth}(K)$.
According to Eq. \eqref{eq_SCS_expl_disc} the complex measure
 \begin{equation} \exp(iS^{disc}_{CS}(\Check{A}^{\orth},B))  D\Check{A}^{\orth}
\end{equation}
is a centered oscillatory Gauss type measure on $\Check{\cA}^{\orth}(K)$ in the sense of
Definition \ref{def3.1} in Sec. \ref{sec3} below.

 \medskip

\noindent ii) Let  $DA^{\orth}_c$  denote the (normalized) Lebesgue measure on $\cA^{\orth}_c(K)$
and  $DB$  the (normalized) Lebesgue measure on $\cB(q\cK)$.\par
According to Eq. \eqref{eq_SCS_expl_disc} above, the complex measure
\begin{equation}  \exp(i  S^{disc}_{CS}(A^{\orth}_c,B))    (DA^{\orth}_c \otimes  DB)
\end{equation}
is a  centered  oscillatory Gauss type measure
on $\cA^{\orth}_c(K) \oplus \cB(q\cK)$ in the sense of Definition \ref{def3.1} in Sec. \ref{sec3} below.

\subsection{Definition of $\WLO^{disc}_{rig}(L)$ and $\WLO_{rig}(L)$}
 \label{subsec4.9}

 For the rest of this
paper we will fix a simplicial ribbon link $L= (R_1, R_2, \ldots, R_m)$ in
$q\cK \times \bZ_{N}$ with ``colors''
$(\rho_1,\rho_2,\ldots,\rho_m)$, $m \in \bN$.

\smallskip

Using the definitions of the previous
subsections  we then arrive at the following
 simplicial   analogue $\WLO^{disc}_{rig}(L)$
of the heuristic expression $\WLO(L)$ in Eq. \eqref{eq2.48}
\begin{multline} \label{eq_def_WLOdisc}
\WLO^{disc}_{rig}(L)  :=  \lim_{s \to 0}
  \sum_{y \in I}\int\nolimits_{\sim} \biggl\{ \bigl( \prod_{x} 1^{(s)}_{\ct_{reg}}(B(x))
  \bigr) \Det^{disc}_{FP}(B)\\
\times \biggl[\int\nolimits_{\sim} \biggl(  \prod_{i=1}^m  \Tr_{\rho_i}\bigl( \Hol^{disc}_{R_i}(\Check{A}^{\orth} +  A^{\orth}_c, B)\bigr)  \biggr) \exp(iS^{disc}_{CS}(\Check{A}^{\orth}, B))
D\Check{A}^{\orth} \biggr] \\
 \times   \exp\bigl( - 2 \pi i k  \langle y, B(\sigma_0) \rangle \bigr)  \biggr\}
  \exp(i  S^{disc}_{CS}(A^{\orth}_c,B))    (DA^{\orth}_c \otimes  DB)
\end{multline}
where we use the notation $\int\nolimits_{\sim}  \cdots$ as defined in Definition \ref{def3.2}
in Sec. \ref{sec3} below
and where $\sigma_0$ is an arbitrary fixed point of $\face_0(q\cK)$
which does not lie in   $\bigcup_{i \le m}  \Image(R^i_{\Sigma})$.
Here  $R^i_{\Sigma}$ is the (``reduced'') $\Sigma$-projection of the closed simplicial ribbon $R_i$
(cf. Sec. 4.4.4 in \cite{Ha7a})
and we consider each $R^i_{\Sigma}$ as a map $[0,1] \times S^1 \to \Sigma$,
  cf.  Remark 4.3 in \cite{Ha7a}.

\medskip

Finally, we set\footnote{at this stage we do not yet claim that $\WLO^{disc}_{rig}(L)$ and $\WLO_{rig}(L)$
are actually well-defined}
\begin{equation} \label{eq4.42}
\WLO_{rig}(L) :=    \frac{\WLO^{disc}_{rig}(L)}{\WLO^{disc}_{rig}(\emptyset)}
\end{equation}
where   $\emptyset$ is the ``empty'' link\footnote{so
$\WLO^{disc}_{rig}(\emptyset)$ is a discrete analogue of the partition function
$Z(\Sigma \times S^1)$.
Explicitly,  $\WLO^{disc}_{rig}(\emptyset)$
is given by the expression which  we get from the RHS  of Eq. \eqref{eq_def_WLOdisc} after omitting the product
  $\prod_{i=1}^m  \Tr_{\rho_i}\bigl( \Hol^{disc}_{R_i}(\Check{A}^{\orth}+ A^{\orth}_c,   B)\bigr)$}.

\subsection{Two modifications}
\label{subsec4.10a}

As we discovered in \cite{Ha7a} we have to modify our original approach
if we want to obtain the correct values for $\WLO_{rig}(L)$.
 In order to do so we will now
make two modifications (Mod1) and (Mod2).
More precisely, we will redefine
$\WLO^{disc}_{rig}(L)$ according to the modifications (Mod1) and (Mod2).
  $\WLO_{rig}(L)$ will again be given by Eq. \eqref{eq4.42} (with the redefined version of $\WLO^{disc}_{rig}(L)$
  appearing on the RHS).

\subsubsection*{Modification (Mod1)}

 Let us now reconsider the question
of what a suitable  discrete analogue
$\Det^{disc}_{FP}(B)$ of the continuum expression
 $\Det_{FP}(B) = \det\bigl(1_{{\ck}}-\exp(\ad(B))_{| {\ck}}\bigr)$
 should be.
Above we made the ansatz
\begin{equation} \label{eq_ Det_disc_FP0} \Det^{disc}_{FP}(B) =
 \prod_{x \in \face_0(q\cK)}  \det\bigl(1_{{\ck}}-\exp(\ad(B(x)))_{| {\ck}}\bigr)
\end{equation}
We will now modify Eq. \eqref{eq_ Det_disc_FP0} and make instead the ansatz
\begin{equation} \label{eq_ Det_disc_FP1}
\Det^{disc}_{FP}(B) :=
 \prod_{x \in \face_0(q\cK)}  \det\nolimits^{1/2}\bigl(1_{{\ck}}-\exp(\ad(B(x)))_{| {\ck}}\bigr)
\end{equation}
where $\det\nolimits^{1/2}\bigl(1_{{\ck}}-\exp(\ad(\cdot))_{| {\ck}}\bigr): \ct \to \bR$
is any of the two {\it smooth} functions $f:\ct \to \bR$ fulfilling
$\forall b \in \ct: f(b)^2 = \det\bigl(1_{{\ck}}-\exp(\ad(b))_{| {\ck}}\bigr)$, and
which  is given explicitly by\footnote{here the sign  $+$ or $-$ on the RHS is of course independent of $b \in \ct$}
 \begin{equation} \label{eq_def_det1/2}
 \forall b \in \ct: \quad \det\nolimits^{1/2}\bigl(1_{{\ck}}-\exp(\ad({b}))_{|{\ck}}\bigr) =
 \pm  \prod_{{\alpha} \in {\cR_+}}  \bigl( 2  \sin( \pi  \langle \alpha, b \rangle) \bigr)
\end{equation}
 where $\cR_+$ on the RHS is the set of positive real roots associated to $(\cG,\ct)$ and the Weyl chamber
 $\CW$ fixed above. Without loss of generality we will assume that
the $+$ sign on the RHS of Eq. \eqref{eq_def_det1/2} holds.

\begin{remark} \label{rm_det_exponent} \rm
The inclusion of the exponent $1/2$ on the RHS of Eq. \eqref{eq_ Det_disc_FP1} is
necessary if we want to obtain the correct values for the WLOs.
 We will see later (cf. Appendix \ref{appJ.2} below) that --
 after making the transition to
the $BF_3$-theoretic setting as  explained in Sec. \ref{sec7} --
it may be possible to obtain a better understanding of the origin of this  exponent $1/2$. \par

Alternatively, one can simply bypass this point by rewriting the heuristic equation \eqref{eq2.48}
in a suitable way before discretizing it, cf. Sec. 2.5 and Sec. 3.6 in \cite{Ha9}.

\end{remark}

\subsubsection*{Modification (Mod2)}

In the following we will replace\footnote{so, in particular, $DB$ will denote the normalized Lebesgue
measure on $\cB_0(q\cK)$ where we have equipped $\cB_0(q\cK)$ with the scalar product induced
by the one on $\cB(q\cK)$} the space $\cB(q\cK) = C^0(q\cK,\ct)$  appearing on the RHS of Eq. \eqref{eq_def_WLOdisc}
by a certain subspace $\cB_0(q\cK)$ (chosen as naturally
as possible) with the property that
\begin{equation} \label{eq_obs1}
\ker(\pi \circ (d_{q\cK})_{| \cB_0(q\cK)}) = \cB_{c}(q\cK)
\end{equation}
holds where $\pi: C^1(q\cK,\ct) \to C^1(K,\ct)$ is the orthogonal projection w.r.t.
$\ll \cdot, \cdot \gg_{\cA_{\Sigma}(q\cK)}$ and where we have set
\begin{equation}
\cB_{c}(q\cK):= \{ B \in C^0(q\cK,\ct) \mid B \text{ constant}\}
\end{equation}
We remark that Eq. \eqref{eq_obs1} will play a crucial role in the proof of Theorem \ref{main_theorem}.
(We also remark that we cannot choose simply $\cB_0(q\cK) =  \cB(q\cK)$
because $\ker(\pi \circ d_{q\cK}) \neq \cB_{c}(q\cK)$).

\smallskip

The following choice is probably the best  when working with  simplicial ribbons in $\cK \times \bZ_N$ instead of
simplicial ribbons in $q\cK \times \bZ_N$ (cf. Remark \ref{rm_full_ribbons} above, and
Remark \ref{rm_full_ribbons2} and Sec. \ref{sec7} below).

\begin{choice} \label{example3} \rm
$\cB_0(q\cK):= \psi(\cB(\cK))$ where $\cB(\cK):= C^0(\cK,\ct)$ and where $\psi: \cB(\cK) \to \cB(q\cK)$
is the linear injection which associates to each $B \in  \cB(\cK)$ the extension $\bar{B} \in \cB(q\cK)$
given by
$$\bar{B}(x) = \mean_{y \in C(x)} B(y) \quad \quad \text{for all $x \in \face_0(q\cK)$} $$
with  ``mean'' referring to the arithmetic mean.
Above $C(x)$ denotes the set of all $y \in \face_0(\cK)$ which
 lie in the closure of the unique open cell of $\cK$  containing $x$.
\end{choice}

In the main part of the present paper
we will work with simplicial ribbons in   $q\cK \times \bZ_N$. In this case Choice \ref{example3} will not work
and we will   make  the following choice (motivated by Remark \ref{rm2.2} above):

\begin{choice} \label{example1} \rm

$\cB_0(q\cK) :=  \cB^{loc}_{\sigma_0}(q\cK) \cap \cB_{\aff}(q\cK)$
 with
\begin{align*} \cB^{loc}_{\sigma_0}(q\cK) & := \{ B \in \cB(q\cK) \mid B \text{ is constant on $U(\sigma_0) \cap \face_0(q\cK)$}  \}, \quad \\
 \cB_{\aff}(q\cK) & := \{ B \in \cB(q\cK) \mid B \text{ is affine on each $F \in \face_2(q\cK)$} \}
 \end{align*}
Here $  U(\sigma_0) \subset \Sigma$ is the union of those few $F \in \face_2(q\cK)$
which contain the point $\sigma_0$  and by ``$B$ is affine on $F$'' we mean that
\begin{equation} \label{eq_for_aff_B_0} B(p_1)+B(p_4) = B(p_2)+B(p_3)
\end{equation}
 holds where $p_1, p_2, p_3,p_4$ are the four vertices
of $F$ and numbered in such a way that $p_1$ is diagonal to $p_4$ and therefore
$p_2$ is diagonal to $p_3$.

\end{choice}

\begin{remark} \label{conv_newBqK} \rm
\begin{enumerate}
\item It is not difficult  to verify that $\cB_0(q\cK) =  \cB^{loc}_{\sigma_0}(q\cK) \cap \cB_{\aff}(q\cK)$ indeed satisfies Eq. \eqref{eq_obs1} above.

\item  Above we said that  we replace the space $\cB(q\cK)$ in Eq. \eqref{eq_def_WLOdisc} above by the subspace $\cB_0(q\cK)$.  In fact, in Sec. \ref{sec5} below many statements will hold
 (and many definitions make sense) for all $B \in \cB(q\cK)$ and not only for $B \in \cB_0(q\cK)$.
 Only in Sec. \ref{subsec5.2}  the use of the space $\cB_0(q\cK)$  will be essential.
 In the other parts of Sec. \ref{sec5} we can and will work with the full space $ \cB(q\cK)$.
\end{enumerate}
\end{remark}

For later use we will also define the space
$$  C^0_{\aff}(q\cK,\bR)  := \{ f \in C^0(q\cK,\bR) \mid f \text{ is affine on each $F \in \face_2(q\cK)$} \}$$
where ``$f$ is affine on $F$'' we mean that
\begin{equation} \label{eq_for_aff_B} f(p_1)+f(p_4) = f(p_2)+f(p_3)
\end{equation}
where $p_1, p_2, p_3,p_4$ are the four vertices
of $F$  numbered again as above.

\subsection{The main result}
\label{subsec4.10}

Recall that in  Sec. \ref{subsec4.9} above  we fixed a
 simplicial ribbon link $L= (R_1, R_2, \ldots, R_m)$
 in $q\cK \times \bZ_N$ with colors $(\rho_1, \rho_2, \ldots, \rho_m)$.
Let $\Lambda_+$ denote the set of dominant real weights  associated to the pair $(\cG,\ct)$
 and the Weyl chamber $\CW$.
For each $i \le m$ we denote by $\lambda_i \in \Lambda_+$ the highest weight of $\rho_i$.

\smallskip

We will now restrict ourselves to the special situation
where $L$ fulfills the following two conditions:
   \begin{description}
   \item[(NCP)'] The maps $R^i_{\Sigma}$ neither intersect each other nor themselves\footnote{
   i.e. these maps have pairwise disjoint
   images and each $R^i_{\Sigma}$, considered
   as a continuous map $[0,1] \times S^1 \to \Sigma$, is an embedding}
    \item[(NH)'] Each of the maps $R^i_{\Sigma}$, $i \le m$ is null-homotopic.
  \end{description}
   Here  $R^i_{\Sigma}$ is as in  Sec. \ref{subsec4.9} above
 and where we consider each $R^i_{\Sigma}$ as a map $[0,1] \times S^1 \to \Sigma$,
  cf.  Remark 4.3 in \cite{Ha7a}.

  \medskip

Recall that in \cite{Ha7a} we stated the following theorem which will be proven in Sec. \ref{sec5} below
(and recall also the comments we made in Remark 6.5, Remark 6.6, and Remark 6.7 in \cite{Ha7a})

\begin{theorem} \label{main_theorem}
Assume that the (colored) simplicial ribbon link $L= (R_1, R_2, \ldots, R_m)$ in $q\cK \times \bZ_N$ fixed in Sec. \ref{subsec4.9} above fulfills conditions (NCP)' and (NH)' above.
 Assume also that\footnote{the situation  $0 < k < \cg$ is not interesting since in this case
the set $\Lambda^k_+$ is empty, cf. Remark \ref{rm_app0} below.
 Accordingly,  $|L| = |\emptyset| = 0$. It turns out that we then also have $\WLO^{disc}_{rig}(L) = \WLO^{disc}_{rig}(\emptyset) = 0$} $k \ge \cg$ where $\cg$ is the dual Coxeter number of $\cG$
 and that $\lambda_i \in \Lambda^k_+$, $i \le m$, where $\Lambda^k_+$ is as in   part \ref{appB} of the Appendix below. Then  $\WLO_{rig}(L)$  is well-defined and we have
 \begin{equation} \label{eq_maintheorem}
 \WLO_{rig}(L) = \frac{|L|}{|\emptyset|}
 \end{equation}
 where $\emptyset$ is the ``empty link'' and
  where $|\cdot|$  is the  shadow invariant associated
 to  $\cG$ and $k$, cf. part \ref{appA} of the Appendix.
 \end{theorem}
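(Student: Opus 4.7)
The plan is to evaluate the triple integral in \eqref{eq_def_WLOdisc} in the natural ``inside out'' order: first the inner Gaussian integral over $\Check{\cA}^{\orth}(K)$, then the joint integral over $\cA^{\orth}_c(K) \oplus \cB_0(q\cK)$, then the sum over $y \in I$, and finally the regularizing limit $s \to 0$. At each stage I would identify the output with a structure that appears in Turaev's state sum for $|L|$: the Gaussian step should produce the framing/linking phases, the $A^{\orth}_c$-integration together with the $y$-sum should reduce the $B$-integration to a finite sum of \emph{locally constant} $\ct$-valued fields indexed by affine weights, the Faddeev--Popov determinant should give the quantum dimensions, and the regularization should pin the sum to the fundamental alcove $\Lambda^k_+$.

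\textbf{Inner Gaussian step.} By Proposition \ref{prop4.2}, $\star_K L^{(N)}(B)$ is symmetric on $\cA^{\orth}(K)$; restricting to the invariant subspace $\Check{\cA}^{\orth}(K)$ and using the non-degeneracy that should follow from the regularity assumption $B(x) \in \ct_{reg}$, the measure $\exp(iS^{disc}_{CS}(\Check{A}^{\orth},B))\, D\Check{A}^{\orth}$ is a centered oscillatory Gauss-type measure in the sense of Definition \ref{def3.1}. The $\Check{A}^{\orth}$-dependence of $\Hol^{disc}_{R_i}(\Check{A}^{\orth}+A^{\orth}_c,B)$ comes in through exponentials of linear functionals on $\Check{\cA}^{\orth}(K)$. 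Decomposing each representation $\rho_i$ into $T$-weight spaces, the $B$-part acts diagonally by characters (scalar phases) along the longitude $l^{i}_{S^1}$ of each ribbon, while the $\Check{A}^{\orth}$-fluctuations contribute the off-diagonal part; collecting these and computing the resulting Gaussian integral (by completing the square against $\star_K L^{(N)}(B)$) will produce a determinant $\det(\star_K L^{(N)}(B))^{-1/2}$ together with a quadratic phase in the $B$-characters which, by conditions (NCP)' and (NH)', has the structure of the writhe/self-linking of the individual $R^i_\Sigma$ (no cross-linking since the ribbons are disjoint and null-homotopic).

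\textbf{Outer $(A^{\orth}_c,B)$-step.} With the notation of Choice \ref{example2}, the action $S^{disc}_{CS}(A^{\orth}_c,B)=2\pi k \ll \star_K A^{\orth}_c, d_{q\cK} B \gg$ is linear in $A^{\orth}_c$. Integrating over $\cA^{\orth}_c(K) \cong \cA_{\Sigma,\ct}(K)$ produces a ``$\delta$-type'' constraint on $\pi \circ d_{q\cK} B$, which by Eq. \eqref{eq_obs1} forces $B \in \cB_c(q\cK)$ up to shifts by the linear source produced in the previous step. After incorporating the source this is exactly the statement that $B$ is constant on each connected component of $\Sigma \setminus \bigcup_j \Image(R^j_\Sigma)$, and the jump across each ribbon is a prescribed weight read off from the $\rho_i$-characters. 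The sum over $y \in I$ together with the remaining $B$-integral at $\sigma_0$ then implements Poisson summation, collapsing the $\ct$-integral on each face of $\Sigma \setminus \bigcup_j \Image(R^j_\Sigma)$ to a sum over the weight lattice, shifted by $\rho$ so as to land on the affine-regular alcove once the factor $1^{(s)}_{\ct_{reg}}$ and the Weyl group symmetry (built into $\Det^{disc}_{FP}$) are taken into account.

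\textbf{Matching with $|L|$.} What remains is a finite sum over area colorings $\varphi : \pi_0(\Sigma \setminus \bigcup R^i_\Sigma) \to \Lambda^k_+$ of a product of (i) quantum dimensions from $\Det^{disc}_{FP}(B)$ using the Weyl denominator formula (this is where the $1/2$-exponent of Remark \ref{rm_det_exponent} is essential, since the product is over \emph{all} vertices of $q\cK$ in each face, and the $1/2$ converts a squared Weyl denominator into a single one), (ii) twist factors $e^{2\pi i k \langle \cdot,\cdot\rangle / \text{something}}$ from the writhe contribution of the Gaussian step, and (iii) gleam-type phases from the interaction between adjacent faces along each $R^i_\Sigma$. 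Comparing with the defining formula for $|L|$ in part \ref{appA}, ratio with the empty-link normalization in \eqref{eq4.42} cancels the $\Sigma$-dependent global factors (partition function of $\Sigma \times S^1$ absorbed), and we obtain \eqref{eq_maintheorem}. The hardest steps to execute carefully are the non-abelian Gaussian step — in particular verifying that the off-diagonal $\ck$-fluctuations produce precisely the twist contribution (not a spurious linking between distinct $R^i_\Sigma$, which is where (NCP)' and (NH)' enter) — and the final identification of the combined $\Det^{disc}_{FP}$ and Gaussian determinants with the correct power of the Weyl denominator needed to reproduce the shadow state sum.
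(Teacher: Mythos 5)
Your overall strategy coincides with the paper's: evaluate the integrals inside out, let the $A^{\orth}_c$-integration and the $y$-sum (via Poisson summation) quantize $B$ to locally constant lattice-valued colorings shifted by step functions across the ribbons, and match the surviving determinants and phases with Turaev's state sum. Steps two and three of your sketch are essentially Steps 2--5 of the paper, and your reading of the $1/2$-exponent in $\Det^{disc}_{FP}$ (combined with the Gaussian normalization $Z^{disc}_B$, it produces $\det(1_{\ck}-e^{\ad(B(Y))})^{\chi(Y)/2}\sim\dim(\varphi(Y))^{\chi(Y)}$ face by face via an Euler-characteristic count) is the right mechanism, even if you leave the vertex-counting argument of Lemma \ref{lem2} unexecuted.

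There is, however, one concrete error in your inner Gaussian step. You predict that completing the square against $\star_K L^{(N)}(B)$ yields, besides the determinant, ``a quadratic phase \ldots\ which has the structure of the writhe/self-linking of the individual $R^i_\Sigma$.'' For the class of links covered by the theorem this quadratic phase is identically \emph{zero}, and proving that is precisely the point of Step 1 (Lemma \ref{lem1}). The linear functionals $Y^{i,a}_k$ attached to the ribbon edges satisfy $\cov_{\sim}(Y^{i,a}_k,Y^{i',a'}_{k'}) \sim \ll \Check{j}_a,(\star_K L^{(N)}(B))^{-1}\Check{j}'_{a'}\gg = 0$ because, by \eqref{eq_FC6}, $\star_K$ sends $K_1$-edges to $K_2$-edges and the boundary loops $l^i_\Sigma$, $l'^i_\Sigma$ of the (half-)ribbons never meet each other's Hodge duals; consequently Proposition \ref{prop3.3} applies and the holonomy traces simply factor out with $\Check{A}^{\orth}$ set to $0$. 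The twist/gleam factors $\exp(\pi i\,\eps_j\langle\alpha_j,B(\sigma_j)+B(\sigma'_j)\rangle)^{\phantom{1}}$ are \emph{not} produced by the Gaussian fluctuations; they come from the $B\,dt^{(N)}$ part of the holonomy, i.e.\ from $\Phi_j(B)$ evaluated at the shifted argument $B=b+\tfrac1k\sum_i\alpha_i f_i$ after the $A^{\orth}_c$-integration, where the winding numbers $\eps_j=\wind(l^j_{S^1})$ appear through $\sum_k dt^{(N)}(l^{j(k)}_{S^1})$. As written, your plan would either look for a contribution that is not there or double-count the gleam phases between your items (ii) and (iii), and the final matching with $|L|_2^{\varphi}$ would fail. (The nonvanishing ``2-cluster'' covariances you are implicitly anticipating only arise for general ribbon links with crossings, which is exactly the situation excluded by (NCP)$'$ and discussed separately in Sec.\ \ref{sec6}.)
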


\begin{remark} \label{rm_full_ribbons2} \rm As mentioned in Remark \ref{rm_full_ribbons} above,
instead of working with simplicial ribbons in $q\cK \times \bZ_N$ (``half ribbons'')
   one could try to work with simplicial ribbons in
     $\cK \times \bZ_N$ (``full ribbons'').
   This    would have several important  advantages, cf. Choice \ref{example3} above,
      Remark \ref{rm_Step5_full_ribbon}
   in Sec. \ref{subsec5.5} below, and   Remark \ref{rm_sec6_full_ribbon}
   in Sec. \ref{sec6} below.
 On the other hand the use of ``full ribbons''  instead of ``half ribbons''
   would also have an important disadvantage, cf. again Remark \ref{rm_Step5_full_ribbon}.
    This is why in Theorem \ref{main_theorem}
    we only consider the case of half ribbons. We will come back to the case of full ribbons in Sec. \ref{sec7}
    below and in \cite{Ha9}.
\end{remark}

\section{Oscillatory Gauss-type measures on Euclidean spaces}
\label{sec3}

In the present section we will  recall two  definitions introduced in Sec. 5 in \cite{Ha7a}
and then derive several elementary results which will play an important role
in the proof of Theorem \ref{main_theorem}.

\subsection{Basic Definitions}
\label{subsec3.1}

Let us fix a  Euclidean vector space  $(V, \langle \cdot, \cdot \rangle)$
and set $d:= \dim(V)$.

\begin{definition} \label{def3.1}
  An ``oscillatory
 Gauss-type measure'' on  $(V, \langle \cdot, \cdot \rangle)$
 is a  complex Borel measure $d\mu$ on $V$
 of the form
 \begin{equation} \label{eq3.1}
 d\mu(x) = \tfrac{1}{Z} e^{ - \tfrac{i}{2} \langle x - m, S (x-m) \rangle} dx
\end{equation}
with $Z \in \bC \backslash \{0\}$,
   $m \in V$, and where  $S$ is a  symmetric endomorphism of $V$  and
 $dx$  the normalized\footnote{i.e. unit hyper-cubes have volume $1$ w.r.t. $dx$}
 Lebesgue measure on $V$.
 Note that $Z$, $m$ and $S$ are uniquely determined by $d\mu$
 so we can use the notation $Z_{\mu}$, $m_{\mu}$ and $S_{\mu}$
 in order to denote these objects.
\begin{enumerate}
\item We call $d\mu$ ``centered''iff $m=0$.

\item  We call $d\mu$ ``degenerate'' iff $S$ is not invertible

\item We call $d\mu$ ``normalized'' iff
$Z=  \frac{(2 \pi)^{d/2}}{ \det^{\frac{1}{2}}(i  S')}$ where $S' :=
S_{| \ker(S)^{\orth}}$. (See Example \ref{obs1} below for the
definition
 of  $\det^{\frac{1}{2}}(i S)$ and a  motivation
for the term ``normalized'').
 \end{enumerate}
 \end{definition}

\begin{definition} \label{def3.2} Let  $d\mu$  be an oscillatory
 Gauss-type measure on  $(V, \langle \cdot, \cdot \rangle)$.
 A (Borel) measurable function
  $f: V \to \bC$ will be called improperly integrable w.r.t. $d\mu$
  iff\footnote{Observe that
$\int_{\ker(S_{\mu})}  e^{- \eps \|x\|^2} dx =
(\tfrac{\eps}{\pi})^{-n/2}$. In particular, the factor
$(\tfrac{\eps}{\pi})^{n/2} $ in Eq. \eqref{eq3.2} above  ensures
that also for degenerate oscillatory
 Gauss-type measure the improper integrals  $\int\nolimits_{\sim} 1 \ d\mu$  exists, cf.
 Example \ref{obs1} below}
 \begin{equation}\label{eq3.2} \int\nolimits_{\sim} f d\mu := \int\nolimits_{\sim} f(x)
   d\mu(x): =
    \lim_{\eps \to 0} (\tfrac{\eps}{\pi})^{n/2} \int f(x) e^{- \eps |x|^2} d\mu(x)
  \end{equation}
  exists. Here  we have set  $n:=\dim(\ker(S_{\mu}))$.
   Note that if $d\mu$ is non-degenerate we have $n=0$ so the factor $(\tfrac{\eps}{\pi})^{n/2}$
is then trivial.
 \end{definition}

Most of the time we will consider non-degenerate
 oscillatory Gauss-type measures, the exception being
 Proposition \ref{prop3.5} below. \par
Using a simple analytic continuation argument and the corresponding
explicit formulas for Gaussian probability measures   we can easily
 prove the existence
of $\int\nolimits_{\sim} f \ d\mu$ and compute the corresponding
value explicitly for a large class of functions $f$. Let us
illustrate this by looking at some simple examples:

\begin{example} \label{ex1} Consider the special case where $V=\bR$, where $\langle \cdot, \cdot \rangle$ is
 the  scalar product given by $\langle x, y \rangle = x  y$, and where
 $d\mu(x) =  \exp(i \langle x,x \rangle)dx = \exp(i x^2)dx$. Then the improper integrals
 $$ \int\nolimits_{\sim} 1 \ d\mu(x), \quad  \int\nolimits_{\sim} x \ d\mu(x), \quad
 \int\nolimits_{\sim} x^2 \ d\mu(x), \quad  \int\nolimits_{\sim} e^{cx} \ d\mu(x), \ c \in \bC$$
 exist and are given explicitly by
 \begin{itemize}
 \item  $\int\nolimits_{\sim} 1 \ d\mu(x) =  \sqrt{i \pi }  = \sqrt{\pi} \ e^{\frac{\pi}{4} i}$
 \item $\int\nolimits_{\sim} x \ d\mu(x) =  0$
 \item $\int\nolimits_{\sim} x^2 \ d\mu(x) =  \tfrac{i}{2} \ \sqrt{i \pi }  $
 \item $\int\nolimits_{\sim} e^{cx} \ d\mu(x) = e^{i \frac{c^2}{4}} \ \sqrt{i \pi }$
  \end{itemize}
 where $\sqrt{.}: \bC \backslash (-\infty,0) \to
\bC$ denotes the standard square root.

\smallskip

In order to show the existence  (and to compute the explicit value)
 of $ \int\nolimits_{\sim} 1 \ d\mu(x)$  we consider  the analytic function $F: \{z \mid
Re(z) > 0\} \to \bC$ given by $F(z):= \int \exp( - z x^2) dx$. According to a well-known formula
 we have $F(a) = \sqrt{\pi/a}$ for all $a \in (0,\infty)$. The
obvious uniqueness argument for analytic functions now implies that
$F(z) = \sqrt{\pi/z}$ for all $z \in \bC$
with $Re(z) > 0$.
 Thus $\int\nolimits_{\sim} 1 \ d\mu =  \lim_{\eps
\to 0} F(\eps -i ) = \lim_{\eps \to 0} \sqrt{\pi/(\eps - i)} =  \sqrt{i \pi } $.

\smallskip

The other three integrals can be dealt with in a similar way.
\end{example}

\noindent In the  next example  $(V, \langle \cdot, \cdot \rangle)$ is again an  arbitrary Euclidean space.

\begin{example} \label{obs1}
  Let $d\mu$ be a non-degenerate oscillatory
 Gauss-type measure on $(V, \langle \cdot, \cdot \rangle)$
  with  $S$, $m$, and $Z$ given as in Eq. \eqref{eq3.1}
\begin{enumerate}
\item We have\footnote{we remark that if $d\mu$ is degenerate then an analogous
statement will hold with $S$  replaced by $S':= S_{| \ker(S)^{\orth}}$}

 \begin{equation} \label{eq3.3} \int\nolimits_{\sim} 1 \ d\mu  = \frac{1}{Z}
 \frac{(2 \pi)^{d/2}}{\det^{\frac{1}{2}}(i S)}
 \end{equation}
where we have set
$\det^{\frac{1}{2}}(i S):= \prod_k \sqrt{ i
\lambda_k} =  e^{\frac{\pi i}{4} \sum_k \sgn(\lambda_k)} (\prod_k |\lambda_k|^{1/2})$
 where $(\lambda_k)_k$ are the (real) eigenvalues of the symmetric matrix
$S$.
In particular, $d\mu$ is
normalized in the sense of Definition \ref{def3.1} above iff
$\int\nolimits_{\sim} 1 \ d\mu = 1$.

\item  In the special case when $d\mu$ is normalized  we have for all $v, w \in V$
\begin{equation} \label{eq3.4}
 \int\nolimits_{\sim} \langle v, x \rangle \ d\mu(x)  =   \langle v, m \rangle  , \quad \quad
 \int\nolimits_{\sim} \langle v, x \rangle \langle w, x \rangle \ d\mu(x)
  = \tfrac{1}{i} \langle v, S^{-1}  w \rangle +  \langle v, m \rangle \langle w, m \rangle
\end{equation}
\end{enumerate}

\end{example}

We will not try to identify the largest possible class of functions
$f$ for which $\int\nolimits_{\sim} f \ d\mu$ exists.
 For our purposes the function algebra $\cP_{exp}(V)$ defined in the next
definition  will be sufficient.

\begin{definition} \label{def3.3} \begin{enumerate}
\item Let $W$ be a  finite-dimensional associative $\bR$-algebra (with the standard topology).
By $\cP_{exp}(V,W)$ we  will denote  the subalgebra of $\Map(V,W)$
which is generated by the affine
maps $\varphi:V \to W$ and their ``exponentials''
$\exp_W \circ \varphi$. Here $\exp_W:W \to W$ denotes the exponential map of $W$.

\item By $\cP_{exp}(V)$ we denote the subalgebra
 of $\Map(V,\bC)$
which is generated by the functions of the form $\theta \circ f$
with $f \in  \cP_{exp}(V,W)$ and $\theta \in \Hom_{\bR}(W,\bC)$
where $W$ is any\footnote{note that we do not keep $W$ fixed here}
  finite-dimensional associative $\bR$-algebra.
   \end{enumerate}
\end{definition}

Using  analytic continuation arguments, some explicit formulas for Gaussian probability measures,
and suitable growth estimates one can prove the  following result (which is easy to believe):
\begin{proposition} \label{prop3.1} Let $d\mu$ be a non-degenerate oscillatory
 Gauss-type measure on  $(V, \langle \cdot, \cdot \rangle)$.
Then for every  $f \in  \cP_{exp}(V)$ the  improper integral  $\int\nolimits_{\sim} f \ d\mu \in \bC$
exists.
\end{proposition}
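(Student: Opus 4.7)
The plan is to show that for every $f \in \cP_{exp}(V)$ the function
$$ F_f(z) := \int f(x) \, e^{-\tfrac{i}{2}\langle x-m, S(x-m)\rangle - z|x|^2} \, dx, $$
originally defined on $\{\mathrm{Re}(z) > 0\}$, extends holomorphically to a neighborhood of $z = 0$. Since $d\mu$ is non-degenerate we have $n = 0$ in Definition \ref{def3.2}, so this immediately yields $\int\nolimits_{\sim} f \, d\mu = Z_{\mu}^{-1} \lim_{\eps \to 0^+} F_f(\eps)$. As preparation I would record two structural properties of $\cP_{exp}(V)$: every $f \in \cP_{exp}(V)$ admits an entire extension $\tilde{f}: V_{\bC} \to \bC$, and there exist a polynomial $P$ and a constant $C > 0$ with $|\tilde{f}(x)| \le P(|x|) e^{C|x|}$ for all $x \in V_{\bC}$. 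Both properties are immediate for the generators, using $\| \exp_{W_{\bC}}(\varphi_{\bC}(x)) \| \le \exp(\|\varphi_{\bC}(x)\|)$ for affine $\varphi$, and both are preserved under the operations $\theta \circ \cdot$, finite sums, and finite products. Hence the integral defining $F_f(z)$ is absolutely convergent for $\mathrm{Re}(z) > 0$ and $F_f$ is holomorphic there by differentiation under the integral.

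To continue $F_f$ past the imaginary axis I would perform a Wick-type contour rotation. Choose an orthonormal basis of $V$ in which $S$ has (nonzero) diagonal entries $\lambda_1, \ldots, \lambda_d$ and substitute, for each $k$ separately, $x_k = m_k + e^{-i\sgn(\lambda_k)\pi/4} y_k$ with $y_k \in \bR$. Applying Cauchy's theorem one coordinate at a time (via Fubini), on the sector of the complex $x_k$-plane bounded by the real axis and the ray $m_k + e^{-i\sgn(\lambda_k)\pi/4}\bR$, one should obtain
$$ F_f(z) = e^{-i\pi\sigma/4}\int_{\bR^d} \tilde{f}(x(y)) \, e^{-\tfrac{1}{2} \sum_k |\lambda_k| y_k^2} \, e^{-z \sum_k x_k(y)^2} \, dy, $$
where $x_k(y) := m_k + e^{-i\sgn(\lambda_k)\pi/4}y_k$ and $\sigma := \sum_k \sgn(\lambda_k)$. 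After this rotation the oscillatory phase is replaced by a genuine Gaussian. For $|z| < \tfrac{1}{4}\min_k|\lambda_k|$ the remaining factor $|e^{-z\sum_k x_k(y)^2}|$ is crudely bounded by $e^{|z|(|m|^2 + 2|m|\,|y| + |y|^2)}$, so the whole integrand is dominated by $P(|y|) e^{C|y|} e^{-\tfrac{1}{4}\sum_k |\lambda_k| y_k^2}$ uniformly on this disc; dominated convergence combined with the holomorphy in $z$ of the integrand then produces a holomorphic extension $G$ of $F_f$ to a neighborhood of $0$, and $\lim_{\eps \to 0^+} F_f(\eps) = G(0)$.

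The main obstacle, as is typical for such oscillatory existence statements, is to justify the contour shift rigorously. On each arc $\{m_k + Re^{i\phi}\}$ with $\phi$ running between $0$ and $-\sgn(\lambda_k)\pi/4$, one needs the integrand to decay as $R \to \infty$ strictly faster than $R^{-1}$ in order to apply Cauchy's theorem on a closed contour and let the arc tend to infinity. A direct computation gives
$$ \mathrm{Re}\bigl(-\tfrac{i\lambda_k}{2}(x_k - m_k)^2 - z x_k^2\bigr) = \tfrac{\lambda_k}{2} R^2 \sin(2\phi) - \mathrm{Re}(z) R^2 \cos(2\phi) + O(R), $$
and one checks that both quadratic terms are $\le 0$ throughout the chosen sector (with strict negativity in its interior) provided $\mathrm{Re}(z) > 0$; this gives a Gaussian decay in $R$ that easily overwhelms the at-most-linear exponential growth $e^{CR}$ of $\tilde{f}$. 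Combining these arc estimates with Fubini closes the iterated one-variable Cauchy arguments and validates the displayed rotated form of $F_f(z)$.
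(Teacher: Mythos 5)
The paper never actually proves Proposition \ref{prop3.1}; it only records the one-line indication that ``analytic continuation arguments, some explicit formulas for Gaussian probability measures, and suitable growth estimates'' suffice, and Example \ref{ex1} illustrates the one-dimensional case by continuing the closed-form expression $F(z)=\sqrt{\pi/z}$. Your argument is a correct and complete realization of that strategy, but the mechanism differs in one respect: where the paper's hint suggests expanding $f$ into generators and analytically continuing \emph{explicit} Gaussian (exponential-)moment formulas in the regularization parameter, you perform a single Wick-type contour rotation $x_k = m_k + e^{-i\sgn(\lambda_k)\pi/4}y_k$ that converts the oscillatory phase into a genuine Gaussian once and for all; this handles every $f\in\cP_{exp}(V)$ uniformly via the growth bound $P(|x|)e^{C|x|}$ and avoids any closed-form computation. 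Two small points to tighten: the exponential bound on the generators $\exp_W\circ\varphi$ requires fixing a submultiplicative norm on $W_{\bC}$ (available in any finite-dimensional associative algebra, e.g.\ via the regular representation after adjoining a unit); and your arc estimate as written ($-\mathrm{Re}(z)R^2\cos(2\phi)$) is only exact for real $z$, so the cleanest route is to justify the rotation for $z=\eps\in(0,\infty)$ only and then identify the rotated integral with $F_f$ on the overlap of the disc $|z|<\tfrac14\min_k|\lambda_k|$ with the right half-plane by the identity theorem, which is all that is needed to conclude $\lim_{\eps\to 0^+}F_f(\eps)=G(0)$ and hence the existence of $\int\nolimits_{\sim} f\,d\mu$.
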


\subsection{Three propositions}
\label{subsec3.2}

 Let us fix for a while
a  normalized non-degenerate oscillatory Gauss-type measure $d\mu$ on
 $(V, \langle \cdot,\cdot \rangle)$
 and  introduce  the  notation
 \begin{subequations} \label{eq3.5}
\begin{align} \bE_{\sim}[X] & := \int\nolimits_{\sim} X d\mu \in \bC \\
 \cov_{\sim}(X,X') & := \bE_{\sim}[X X'] - \bE_{\sim}[X] \bE_{\sim}[X'] \in \bC
 \end{align}
 \end{subequations}
 for   maps $X,X' \in \cP_{exp}(V)$
 (in  analogy to the case of (Gaussian or non-Gaussian) probability measures on $V$).

\begin{observation} \label{prop3.2}  \label{prop_obs2}
 Let $d\mu$  be as above
 and let $X_1, X_2, \ldots, X_n$ be a sequence  of  affine maps $V \to \bR$.
\begin{enumerate}
\item If $\bE_{\sim}[X_i]=0$ for every $i \le n$ then
\begin{equation} \label{eq3.6}
 \bE_{\sim}[ \prod_j X_j ]  =  \begin{cases}  \frac{1}{(n/2)! 2^{n/2}} \sum_{\sigma \in S_{n}}
  \prod_{i=1}^{n/2} \cov_{\sim}(X_{\sigma(2i-1)}, X_{\sigma(2i)}) & \text{ if $n$ is even }\\
  0 & \text{ if $n$ is odd } \end{cases}
\end{equation}
\item If   $\cov_{\sim}(X_i,X_j) = 0$ for all $i, j \le n$ with $i \neq j$
 we have
\begin{equation} \label{eq3.7}  \bE_{\sim}[ \prod_j X_j ] =   \prod_j \bE_{\sim}[  X_j ]
\end{equation}
\end{enumerate}
Here Eq. \eqref{eq3.6} follows from the analogous formula for the
moments of a Gaussian probability measure and a suitable analytic
continuation argument. Clearly, in the special case where
$\cov_{\sim}(X_i,X_j) = 0$ for $i, j \le n$ with $i \neq j$ Eq.
\eqref{eq3.6} reduces to $\bE_{\sim}[ \prod_j X_j ] = 0$. By
applying the latter equation to the subsequences of the sequence
$X'_1, X'_2, \ldots, X'_n$  given by  $X'_j:= X_j - \bE[X_j]$ we  arrive at Eq. \eqref{eq3.7}.
\end{observation}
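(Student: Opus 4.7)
My plan is to reduce everything to the standard Wick/Isserlis formula for genuine Gaussian probability measures and then transport the identities to oscillatory Gauss-type measures by analytic continuation, which is essentially the mechanism already used in Example \ref{ex1} and invoked in Proposition \ref{prop3.1}. For a centered, normalized Gaussian probability measure on $(V,\langle\cdot,\cdot\rangle)$ with positive definite covariance operator $C$, the classical Wick formula states that $\bE[\prod_{j=1}^n X_j]$ vanishes when $n$ is odd and equals $\frac{1}{(n/2)!\,2^{n/2}}\sum_{\sigma\in S_n}\prod_{i=1}^{n/2}\cov(X_{\sigma(2i-1)},X_{\sigma(2i)})$ when $n$ is even, for any affine maps $X_j:V\to\bR$ with vanishing mean. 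This is what I want to move to the oscillatory setting.

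To perform the continuation, I fix the affine functionals $X_1,\dots,X_n$ and the mean $m$ and consider, for $z\in\bC$ with $\mathrm{Re}(z)>0$, the complex measure $d\mu_z(x)=Z(z)^{-1}\exp(-\tfrac{1}{2}\langle x-m,(zS+\eps\,\id)(x-m)\rangle)\,dx$, where $\eps>0$ is a small regulator and $Z(z)$ is chosen so that $\int d\mu_z=1$. For $z>0$ real this is an honest Gaussian probability measure, so the Wick formula holds. Both sides of the Wick identity, viewed as functions of $z$, are analytic on $\{\mathrm{Re}(z)>0\}$ and extend continuously to $z=-i$ once I multiply through by $\int f(x)e^{-\eps|x|^2}d\mu_z(x)$ and take $\eps\downarrow 0$; the required growth bounds for polynomial-in-$x$ integrands against a Gaussian with a small real part give the dominated convergence needed to commute limits. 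Uniqueness of analytic continuation then delivers Eq.~\eqref{eq3.6}. The hypothesis of normalization is used to absorb the prefactor $(\eps/\pi)^{n/2}$ correctly in the limit, as discussed in Example \ref{obs1}.

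For (ii), I apply (i) to the centered sequence $X'_j:=X_j-\bE_{\sim}[X_j]$. By bilinearity, $\cov_{\sim}(X'_i,X'_j)=\cov_{\sim}(X_i,X_j)$, which vanishes for $i\ne j$ by hypothesis, so the pair products in Eq.~\eqref{eq3.6} all vanish and consequently $\bE_{\sim}[\prod_j X'_j]=0$. Expanding
\begin{equation}
\prod_{j=1}^n X_j=\prod_{j=1}^n\bigl(X'_j+\bE_{\sim}[X_j]\bigr)
\end{equation}
and taking $\bE_{\sim}$ of both sides, every subset $\emptyset\ne J\subsetneq\{1,\dots,n\}$ gives a term of the form $\bigl(\prod_{j\notin J}\bE_{\sim}[X_j]\bigr)\,\bE_{\sim}[\prod_{j\in J}X'_j]$, and each of these factors vanishes by the same Wick argument applied to the shorter centered sequence. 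Only the extremes survive: the fully centered term (zero) and the fully uncentered term $\prod_j\bE_{\sim}[X_j]$, giving Eq.~\eqref{eq3.7}.

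The main technical obstacle is the analytic-continuation step. One must make sure that multiplying the integrand by a polynomial (coming from the product $\prod_j X_j$) does not destroy the validity of interchanging the $\eps\downarrow 0$ limit with the integral, and that the determinant prefactor $\det^{1/2}(iS)$ with its branch choice is compatible, under continuation in $z$, with the square-root prescription fixed in Example \ref{obs1}. Both are standard once the regulator $\eps>0$ is kept until the very end, but they are where any genuine work lies; the combinatorial identities themselves are transported verbatim from the Gaussian case.
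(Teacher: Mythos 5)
Your proposal follows essentially the same route as the paper: Eq.~\eqref{eq3.6} via the Gaussian Wick formula transported by analytic continuation in a regularized parameter, and Eq.~\eqref{eq3.7} by expanding $\prod_j(X'_j+\bE_{\sim}[X_j])$ and applying the vanishing of centered moments to each subsequence. The only quibble is that your continuation domain $\{\mathrm{Re}(z)>0\}$ should be shrunk to a neighborhood of the path from small positive $z$ to $z=-i$ on which $\mathrm{Re}(z\lambda_k)+\eps>0$ for every eigenvalue $\lambda_k$ of $S$, but this is routine and the paper's own justification is no more detailed.
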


 Let us now consider the case where
  $Y:V \to \bR$ is an affine map with\footnote{Note that
  the condition $\cov_{\sim}(Y,Y) = 0$ does not imply that $Y$ is a constant   map on $V$.
 This is in sharp contrast to the situation
 for (Gaussian or non-Gaussian) probability measures
 where the  relation $\cov(Y,Y) = 0$  always implies $Y= \bE[Y]$ $d\mu$-a.s.}
  $\cov_{\sim}(Y,Y) = 0$
and let us consider the trivial sequence $(X_i)_{i=1}^n$ where $X_i
= Y$ for each $i \le n$. Since $\cov(Y,Y)_{\sim} = 0$ we trivially
have $\cov_{\sim}(X_i,X_j) = 0$ for  $i \neq j$ (and  even for
$i=j$)  so according to Eq. \eqref{eq3.7}  we have $ \bE_{\sim}[ Y^n
] = \bE_{\sim}[Y]^n$,
 from which we conclude, for example, that
\begin{equation} \label{eq3.8}  \bE_{\sim}[ \exp(Y)] = \bE_{\sim}[ \sum_n  \tfrac{Y^n}{ n !}]
\overset{(*)}{=} \sum_n  \bE_{\sim}[\tfrac{Y^n}{ n !}] =   \sum_n
\frac{\bE_{\sim}[Y]^n}{ n !} = \exp(\bE_{\sim}[Y])
\end{equation}
In order to see that  step $(*)$ above holds observe that for every fixed $\eps > 0$ we have
$$ \int  \sum_n \biggl| \tfrac{Y(x)^n}{ n !}  e^{ - \eps |x|^2} \tfrac{1}{Z}
 e^{ - \tfrac{i}{2} \langle x - m, S (x-m) \rangle}\biggr| dx \le
\int  \sum_n  \tfrac{|Y(x)|^n}{ n !}  \tfrac{1}{Z} e^{ - \eps |x|^2}
 dx  < \infty$$ and we can therefore conclude  from the
dominated convergence theorem\footnote{applied to the positive measure $dx$ ``appearing'' in $d\mu$}  that
 $$\int  \sum_n  \tfrac{Y(x)^n}{ n
!} e^{ - \eps |x|^2}  d\mu(x) = \sum_n   \int  \tfrac{Y(x)^n}{ n !}
e^{ - \eps |x|^2}  d\mu(x)$$
 for each $\eps > 0$. Accordingly, in order to prove step $(*)$ it is enough to prove that
$\lim_{\eps \to 0}  \sum_n  \tfrac{ I(n,\eps)}{n!} =  \sum_n  \lim_{\eps \to 0} \tfrac{ I(n,\eps)}{n!}$
 where we have set $I(n,\eps):=\int Y(x)^n e^{ - \eps |x|^2}  d\mu(x)$.
The latter claim can easily be proven by computing the integrals  $I(n,\eps)$ explicitly.\par
 More generally, we obtain\footnote{Observe
that $\Phi(Y) \in \cP_{exp}(V)$ so the
 existence of the LHS of Eq. \eqref{eq3.9} is guaranteed
by Proposition \ref{prop3.1}}  for every  $\Phi \in \cP_{exp}(\bR)$
\begin{equation} \label{eq3.9}\bE_{\sim}[\Phi(Y)] =\Phi(\bE_{\sim}[Y])\end{equation}
 since every such $\Phi$ is necessarily entire analytic
and the coefficients $(c_n)_n$, given by $\Phi(x)=
\sum_{n=0}^{\infty} c_n x^n$ for all $x \in \bR$, have the property
that $c_n \overset{n \to \infty}{\longrightarrow} 0$ rapidly enough
so that
\begin{itemize}
\item[1.] we can again apply the dominated convergence theorem in a
similar way as above  and prove that the two limit procedures $\int \cdots dx$ and $\sum_n $
  can be interchanged
\item[2.] we can  prove again that the two limit procedures $\lim_{\eps \to 0}$ and   $\sum_n $ can be interchanged
\end{itemize}
Finally, we can  generalize Eq. \eqref{eq3.9} to the case where we have a  $\Phi \in \cP_{exp}(\bR^n)$, $n
\in \bN$, and where $(Y_k)_{k \le n}$ is  a sequence of affine maps
$V \to \bR$ such that $\cov_{\sim}(Y_i,Y_j) = 0$ holds for all $i,j
\le n$. We then arrive at the following result,  which will be the
key argument in Sec. \ref{subsec5.1} below.  In order to make the
application of Proposition \ref{prop3.3} in Sec.  \ref{subsec5.1}
 more transparent we avoid the  use of
 the notation  $\bE_{\sim}[\cdot]$ and $\cov_{\sim}(\cdot,\cdot)$
 from now on.
\begin{proposition} \label{prop3.3} Let $d\mu$ be a normalized
 non-degenerate oscillatory  Gauss-type measure on  $(V, \langle \cdot, \cdot \rangle)$
 and let $(Y_k)_{k \le n}$, $n \in \bN$, be  a sequence of affine maps $V \to \bR$ such that
\begin{equation} \label{eq3.10}  \int\nolimits_{\sim} Y_i Y_j d\mu  = \bigl( \int\nolimits_{\sim} Y_i d\mu  \bigr) \bigl(
\int\nolimits_{\sim} Y_j d\mu \bigr)
\end{equation}
 holds for all   $i, j \le n$.
Then for every  $\Phi \in \cP_{exp}(\bR^n)$   we have
\begin{equation} \label{eq3.11} \int\nolimits_{\sim}  \Phi((Y_k)_{k}) d\mu
 =\Phi\bigl(\bigl(  \int\nolimits_{\sim}  Y_k  d\mu \bigr)_{k} \bigr)
\end{equation}
\end{proposition}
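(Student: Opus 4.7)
The plan is to promote the one-variable computations in Observation~\ref{prop3.2} and Eqs.~(3.8)--(3.9) of the excerpt to the $n$-variable setting, following the template suggested there. First I center the maps: set $c_k := \int\nolimits_{\sim} Y_k \, d\mu$ and $Z_k := Y_k - c_k$. Each $Z_k$ is affine, $\int\nolimits_{\sim} Z_k \, d\mu = 0$ by linearity, and the hypothesis \eqref{eq3.10}---including the diagonal case $i=j$, which gives $\int\nolimits_{\sim} Y_i^2 \, d\mu = (\int\nolimits_{\sim} Y_i \, d\mu)^2$---yields $\cov_{\sim}(Z_i, Z_j) = 0$ for \emph{every} pair $i, j \le n$.

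Next I treat the polynomial case. For any multi-index $\alpha$ with $|\alpha| \ge 1$, I regard $Z_1^{\alpha_1} \cdots Z_n^{\alpha_n}$ as a product $X_1 X_2 \cdots X_{|\alpha|}$ where each $X_r$ is one of the $Z_k$'s, and apply Observation~\ref{prop3.2}(i). If $|\alpha|$ is odd, the integral vanishes immediately. If $|\alpha|$ is even, every summand in \eqref{eq3.6} is a product of terms $\cov_{\sim}(X_{\sigma(2i-1)}, X_{\sigma(2i)}) = \cov_{\sim}(Z_{k}, Z_{k'})$, which vanishes by the previous step (even when $k = k'$). Hence $\int\nolimits_{\sim} Z_1^{\alpha_1} \cdots Z_n^{\alpha_n} \, d\mu = 0$ whenever $|\alpha| \ge 1$. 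Expanding a polynomial $P$ around $(c_k)_k$ as $P(y) = \sum_\alpha d_\alpha (y - c)^\alpha$, only the constant term $d_0 = P(c_1, \ldots, c_n)$ survives integration against $d\mu$, which establishes \eqref{eq3.11} for every polynomial $\Phi$.

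Finally I extend from polynomials to all of $\cP_{exp}(\bR^n)$. Every $\Phi \in \cP_{exp}(\bR^n)$ is entire analytic, since it is built by polynomial operations from affine maps and the entire functions $\theta \circ \exp_W \circ \varphi$; the Cauchy estimates for the Taylor coefficients of $\Phi$ around $(c_k)_k$ are of factorial decay uniformly on bounded polydiscs. Let $\Phi_N$ denote the truncation of the Taylor series to $|\alpha| \le N$. The polynomial case already gives $\int\nolimits_{\sim} \Phi_N((Y_k)_k) \, d\mu = \Phi_N((c_k)_k) \longrightarrow \Phi((c_k)_k)$ as $N \to \infty$. To pass to the limit on the left-hand side, I insert the definition \eqref{eq3.2} of $\int\nolimits_{\sim}$ and interchange $\sum_\alpha$ first with the Lebesgue integral $\int \cdots dx$ (dominated convergence on the \emph{positive} measure $dx$, using a pointwise majorant of the form $|\Phi(x)| \cdot (\text{polynomial in } x)$ which is integrable against the regulator $e^{-\eps|x|^2}$), and then with $\lim_{\eps \to 0}$. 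Both interchanges are justified exactly as in step $(*)$ of \eqref{eq3.8} and in the derivation of \eqref{eq3.9}: the Gaussian-type moment integrals $\int (Y - c)^\alpha \, e^{-\eps|x|^2} d\mu(x)$ can be computed explicitly, and the rapid decay of $d_\alpha$ produces the absolute summability needed to commute the limits.

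The main obstacle is this last step: multi-index bookkeeping of the tail estimates, uniformly in $N$ and $\eps$. It is not conceptually deep, since it mirrors the scalar argument in the excerpt, but it does require combining Cauchy-type decay of $d_\alpha$ with explicit growth bounds on the regulated Gaussian moments of affine functions, so that the two interchanges of limits ($\sum_\alpha \leftrightarrow \int dx$ and $\sum_\alpha \leftrightarrow \lim_{\eps \to 0}$) can be performed simultaneously.
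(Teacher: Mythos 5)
Your proposal is correct and follows essentially the same route as the paper: center the $Y_k$, use the Wick-type formula of Observation~\ref{prop3.2} together with the vanishing of all (including diagonal) covariances to handle monomials/polynomials, and then pass to general $\Phi\in\cP_{exp}(\bR^n)$ via the entire power-series expansion, justifying the interchanges of $\sum_\alpha$ with $\int\cdots dx$ and with $\lim_{\eps\to 0}$ exactly as in the paper's Eqs.~\eqref{eq3.8}--\eqref{eq3.9}. The level of detail you leave to the "multi-index bookkeeping" matches what the paper itself leaves implicit in its one-line generalization from $\cP_{exp}(\bR)$ to $\cP_{exp}(\bR^n)$.
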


\begin{remark} \label{rm3.1}  \rm In Sec.  \ref{subsec5.1} below we will actually apply a reformulation
of Proposition \ref{prop3.3} where the  sequence $(Y_k)_{k \le n}$,
$n \in \bN$, is replaced by a family $(Y^{i,a}_k)_{k \le n, i \le m,
a \le D}$ of affine maps fulfilling the obvious analogue of Eq.
\eqref{eq3.10} above and the function  $\Phi \in \cP_{exp}(\bR^n)$
is replaced by a function $\Phi \in \cP_{exp}(\bR^{m \times n \times
D})$.
\end{remark}

\begin{example} \label{ex2}
Consider the (non-degenerate normalized  centered) oscillatory
 Gauss-type measure
$d\mu(x):= \frac{1}{2 \pi} \exp(i \langle x_1,x_2 \rangle ) dx_1 dx_2$
on $V=\bR^2$. For every $f \in \cP_{exp}(\bR)$ we have
\begin{equation} \label{eq3.12}
\int_{\sim} f(x_1) d\mu(x) =  f(0)
\end{equation}
This follows by applying  Proposition \ref{prop3.3} with $\Phi = f$
and $Y_1(x):=x_1$. Observe that  Eq. \eqref{eq3.10} is indeed
fulfilled since according to Example \ref{obs1} we have
$\int\nolimits_{\sim} Y_1 d\mu =  \int\nolimits_{\sim} \langle x,e_1
\rangle  d\mu(x) = 0$ and $\int\nolimits_{\sim} Y_1 Y_1 d\mu =
\tfrac{1}{i} \langle e_1, S_{\mu}^{-1} e_1 \rangle = 0$ where we
have set
 $e_1:= \biggl(\begin{matrix} 1 \\ 0 \end{matrix}\biggr)$
 and used that $S_{\mu}= - \biggl(\begin{matrix} 0 && 1 \\ 1 && 0
\end{matrix}\biggr)$.
\end{example}

Using  a different argument one can easily prove that Eq.
\eqref{eq3.12}  holds for arbitrary continuous bounded functions
$f$. Moreover,  we can  include an additional ``exponential
factor'', and we can in fact consider more general
  oscillatory  Gauss-type measures $d\mu$, cf. Proposition \ref{prop3.5} below,
  which will play a key role in Sec.  \ref{subsec5.2} below.\par

  As a preparation for Proposition \ref{prop3.5} let
  us first consider the special case
  where the oscillatory  Gauss-type measure $d\mu$ is non-degenerate:

\begin{proposition} \label{prop3.4}
Assume that $V=V_1 \oplus V_2$ where $V_1$ and $V_2$
are two isomorphic subspaces of $V$ which are orthogonal to each other.
For each  $j=1,2$ we denote  the $V_j$-component of $x \in V$ by $x_j$.
Moreover, let   $d\mu$ be a  (non-degenerate  centered)
 normalized   oscillatory  Gauss-type measure on  $(V, \langle \cdot, \cdot \rangle)$
of the form $d\mu(x)= \tfrac{1}{Z} \exp(i \langle x_2,M x_1) dx$ for
some linear isomorphism $M: V_1 \to V_2$. Then for every bounded
continuous function $f:V_1 \to \bC$ and every fixed $v \in V_2$ we
have
\begin{equation} \label{eq3.13}
\int_{\sim} f(x_1) \exp(i \langle x_2,v \rangle ) d\mu(x) =  f(-
M^{-1} v)
\end{equation}
(In particular, the LHS of Eq. \eqref{eq3.13} exists; note that
the present situation  is not covered by Proposition \ref{prop3.1}
above).
\end{proposition}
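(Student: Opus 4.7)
The plan is to reduce the improper integral to a standard regularized Fourier-type integral, perform the $x_2$-Gaussian explicitly for fixed $\epsilon>0$, and recognize the resulting $x_1$-kernel as an approximate delta-function at $x_1=-M^{-1}v$. Before starting, I note that since $M:V_1\to V_2$ is a linear isomorphism we automatically have $d':=\dim V_1=\dim V_2$ and $d=2d'$. The symmetric operator associated with $d\mu$ is $S_\mu=\left(\begin{smallmatrix} 0 & -M^* \\ -M & 0 \end{smallmatrix}\right)$ with respect to $V=V_1\oplus V_2$, which is invertible; hence $n:=\dim\ker(S_\mu)=0$ and the prefactor $(\epsilon/\pi)^{n/2}$ in Definition \ref{def3.2} equals $1$. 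A short block computation shows that the eigenvalues of $S_\mu$ come in pairs $\pm\sigma_k$, with $\sigma_k$ the singular values of $M$, so that $\det^{1/2}(iS_\mu)=\prod_k\sigma_k=|\det M|$, and the normalization hypothesis pins $Z$ down to $Z=(2\pi)^{d'}/|\det M|$.

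For fixed $\epsilon>0$ the integrand $\tfrac{1}{Z}f(x_1)\exp(i\langle x_2,v+Mx_1\rangle)e^{-\epsilon|x|^2}$ is absolutely integrable since $f$ is bounded, so Fubini permits me to carry out the $x_2$-integration first as a standard convergent Gaussian,
\[
\int_{V_2}e^{i\langle x_2,v+Mx_1\rangle}e^{-\epsilon|x_2|^2}\,dx_2=(\pi/\epsilon)^{d'/2}e^{-|v+Mx_1|^2/(4\epsilon)}.
\]
Substituting this back and performing the affine change of variables $y=Mx_1+v$ with Jacobian $|\det M|^{-1}$, then writing $(\pi/\epsilon)^{d'/2}e^{-|y|^2/(4\epsilon)}=(2\pi)^{d'}g_\epsilon(y)$ where $g_\epsilon(y):=(4\pi\epsilon)^{-d'/2}e^{-|y|^2/(4\epsilon)}$ is the standard Gaussian approximate identity on $V_2$, the regularized expression collapses to
\[
\frac{(2\pi)^{d'}}{Z|\det M|}\int_{V_2}f(-M^{-1}v+M^{-1}y)\,e^{-\epsilon|M^{-1}(y-v)|^2}\,g_\epsilon(y)\,dy.
\]

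It remains to let $\epsilon\to 0$. The extra factor $e^{-\epsilon|M^{-1}(y-v)|^2}$ is uniformly bounded by $1$ and converges pointwise to $1$; since $f$ is bounded and continuous and $g_\epsilon\,dy$ converges weakly to $\delta_0$, dominated convergence forces the integral to tend to $f(-M^{-1}v)$. Combined with the identity $(2\pi)^{d'}/(Z|\det M|)=1$, obtained from the computation of $Z$ above, this yields both the existence of the improper integral and the claimed value. The only nontrivial step is bookkeeping of the normalization constants through the block-matrix determinant computation for $S_\mu$; everything else reduces to an elementary heat-kernel/delta argument on $V_2$.
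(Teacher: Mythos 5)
Your proof is correct and takes essentially the same route as the paper's: regularize, integrate out $x_2$ into a Gaussian approximate identity $\delta_\eps$, change variables $y=Mx_1+v$, pass to the limit by a dominated-convergence/heat-kernel argument, and cancel the normalization constant using the $\pm\sigma_k$ pairing of the eigenvalues of $S_\mu$. The only (cosmetic) difference is that you write $|\det M|$ directly as the product of singular values, whereas the paper fixes an isometry $\psi:V_2\to V_1$ and works with $|\det(M\psi)|$ --- the same quantity.
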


\begin{proof}  Let $dx_1$ resp. $dx_2$ be the normalized Lebesgue measure on $V_1$ resp. $V_2$
(equipped with the scalar product induced by the one on $V$). We have
\begin{align} \label{eq3.14}
 &\int_{\sim} f(x_1) \exp(i \langle x_2,v \rangle ) d\mu(x) \nonumber \\
&   =   \frac{1}{Z} \lim_{\eps \to 0}  \int_{V_1} \int_{V_2}
    e^{- \eps (|x_1|^2 + |x_2|^2)}
f( x_1)  \exp(i \langle x_2,v + M x_1 \rangle ) dx_2 dx_1 \nonumber \\
&  =   \frac{(2 \pi)^{d_2}}{Z} \lim_{\eps \to 0}  \int_{V_1}    e^{- \eps |x_1|^2}
f( x_1)  \delta_{\eps}(v + M x_1) dx_1
\end{align}
where $d_2:= \dim(V_2) = d/2$ and where
  $\delta_{\eps}: V_2 \to \bR$ is given by
\begin{equation} \label{eq3.15}
\delta_{\eps}(w) :=   \tfrac{1}{(2 \pi)^{d_2}} \int_{V_2}  e^{- \eps
|x_2|^2}
 \exp(i \langle x_2, w \rangle ) dx_2
= \tfrac{1}{(2 \pi)^{d_2}}  \bigl( \tfrac{\pi}{\eps}\bigr)^{d_2/2}
 e^{- \tfrac{1}{2} \tfrac{|w|^2}{2\eps}} = \tfrac{1}{(4 \eps \pi)^{d_2/2}}
  e^{- \tfrac{|w|^2}{4\eps}}
\end{equation}
 for all $w \in V_2$.
Let us now fix an (arbitrary) isometry $\psi:V_2 \to V_1$.
Clearly, the pushforward  $(\psi)_* dx_2$ of $dx_2$ coincides with $dx_1$ so we have
\begin{equation} \label{eq3.16_0}
 \int_{V_1}    e^{- \eps |x_1|^2}
f( x_1)  \delta_{\eps}(v + M x_1) dx_1
= \int_{V_2}    e^{- \eps |\psi  x_2|^2}
f( \psi  x_2)  \delta_{\eps}(v + M \psi x_2) dx_2
\end{equation}
Making the change  of variable $ M \psi x_2 + v \to y_2$ on the RHS of Eq. \eqref{eq3.16_0}
we  obtain
\begin{equation} \label{eq3.16}
 \int_{V_1}    e^{- \eps |x_1|^2}
f( x_1)  \delta_{\eps}(v + M x_1) dx_1
= |\det(M\psi)|^{-1} \int_{V_2}    e^{- \eps | M^{-1}(y_2 - v)|^2}
f( M^{-1}(y_2 - v))  \delta_{\eps}(y_2) dy_2
\end{equation}
where  $dy_2$ is the normalized Lebesgue measure on $V_2$.
Since $(\delta_{\eps})_{\eps >0}$ is an ``approximation to the identity'' (i.e. converges
weakly to the Dirac distribution $\delta_0$)
 it is therefore clear\footnote{a formal proof of Eq. \eqref{eq3.17} can be obtained
after a suitable change of variable and the application of the dominated convergence
theorem, cf. the proof of Eq. \eqref{eq3.20} below which generalizes
Eq. \eqref{eq3.17}} that
\begin{equation} \label{eq3.17}
 \lim_{\eps \to 0}  \int_{V_1}    e^{- \eps |x_1|^2}
f( x_1)  \delta_{\eps}(v + M x_1) dx_1 = f(- M^{-1} v) \cdot
|\det(M\psi)|^{-1}
\end{equation}
 The assertion of the proposition
now follows from  Eq. \eqref{eq3.14}, Eq. \eqref{eq3.17}
and the following equation:
\begin{equation} \label{eq3.17_plus1}
\frac{(2 \pi)^{d_2}}{Z} =  \frac{(2 \pi)^{d/2}}{Z} \overset{(*)}{=} \det\nolimits^{\frac{1}{2}}(i S_{\mu})
\overset{(**)}{=}  \det(i S_{\mu})^{1/2}  \overset{(***)}{=} \det((M\psi)^t M\psi )^{1/2} = |\det(M\psi)|
 \end{equation}
 Here step $(*)$ follows from  Example \ref{obs1} and the
 assumption that $d\mu$ is normalized,
 step $(**)$ follows because\footnote{this can be seen, e.g.,
 from the equation  $J^{-1} S_{\mu} J = - S_{\mu}$ where
 $J:=  \left(\begin{matrix} -1 && 0 \\ 0 && 1
\end{matrix}\right)$,  $1$ denoting both  the identity in $\End(V_1)$ and $\End(V_2)$}
for each eigenvalue $\lambda$ of $S_{\mu}$ also $-\lambda$ is an
eigenvalue of $S_{\mu}$ and has the same multiplicity as $\lambda$, and
  step $(***)$ follows because
 after making the identification $V_2 \cong_{\psi} V_1$ we
 have $S_{\mu} = - \left(\begin{matrix} 0 && M \psi \\ (M \psi)^t && 0
\end{matrix}\right)$.
\end{proof}

\begin{convention} \label{conv3.1} \rm
 For a continuous function $f:V_0 \to \bC$ on a $d_0$-dimensional Euclidean space
$V_0$ we set
\begin{equation}
 \int^{\sim}_{V_0} f(x_0) dx_0 := \tfrac{1}{\pi^{d_0/2}} \lim_{\eps \to 0} \eps^{d_0/2}   \int_{V_0} e^{-\eps |x_0|^2} f(x_0) dx_0
\end{equation}
provided that the expression on the RHS  of the previous equation is well-defined.
Here $dx_0$ is the normalized Lebesgue measure on $V_0$.
\end{convention}

\begin{proposition} \label{prop3.5}
Assume that $V= V_0 \oplus V_1 \oplus V_2$ where $V_0$, $V_1$, $V_2$ are pairwise
orthogonal subspaces of $V$. For each $j=0,1,2$ we denote
 the $V_j$-component of $x \in V$ by  $x_j$.
Moreover, let $d\mu$ be a  (centered) normalized  oscillatory  Gauss-type measure on  $(V, \langle \cdot, \cdot \rangle)$ of the form $d\mu(x)=  \tfrac{1}{Z} \exp(i \langle x_2,M x_1) dx$
for some linear isomorphism $M: V_1 \to V_2$.
Then for every fixed $v \in V_2$
and every bounded uniformly  continuous function $F:V_0 \oplus V_1 \to \bC$
  the LHS of the following equation exists  iff the RHS  exists
and in this case we have
\begin{equation} \label{eq3.18}
\int_{\sim} F(x_0 + x_1) \exp(i \langle x_2,v \rangle ) d\mu(x) =  \int^{\sim}_{V_0} F(x_0  - M^{-1} v) dx_0
\end{equation}
where $dx_0$ is the normalized Lebesgue measure on $V_0$.
\end{proposition}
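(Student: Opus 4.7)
The strategy is to reduce to Proposition \ref{prop3.4} via Fubini, separating the ``degenerate'' $V_0$-direction from the non-degenerate $V_1 \oplus V_2$-directions. Set $d_j := \dim V_j$ for $j=0,1,2$, and note that $\ker(S_\mu) = V_0$. Expanding the LHS of Eq.~\eqref{eq3.18} via Definition \ref{def3.2} one obtains
\[
\text{LHS} = \lim_{\eps \to 0} \bigl(\tfrac{\eps}{\pi}\bigr)^{d_0/2} \int_{V_0} e^{-\eps |x_0|^2} J_\eps(x_0) \, dx_0,
\]
where
\[
J_\eps(x_0) := \tfrac{1}{Z} \int_{V_1 \oplus V_2} F(x_0 + x_1)\, e^{i \langle x_2, v + M x_1 \rangle}\, e^{-\eps (|x_1|^2 + |x_2|^2)}\, dx_1 dx_2
\]
and $Z$ is the normalizing constant of $d\mu$.

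For each fixed $x_0 \in V_0$, I would apply the argument from the proof of Proposition \ref{prop3.4} to $J_\eps(x_0)$ almost verbatim: first carry out the $V_2$-integration to produce the Gaussian approximate identity $\delta_\eps(v + M x_1)$ acting in the $V_1$-variable, then substitute $x_1 = -M^{-1} v - y_1$ and rescale $y_1 = \sqrt{\eps}\, u$ (using $d_1 = d_2$, so that the Jacobian compensates the $\eps$-power appearing in $\delta_\eps$). The uniform continuity and boundedness of $F$, together with $e^{-\eps|x_1|^2} \to 1$ uniformly on compact $u$-sets, yield $J_\eps(x_0) \to F(x_0 - M^{-1} v)$ as $\eps \to 0$. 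The key point is that this convergence is uniform in $x_0$: the $\delta_\eps$-concentration happens in the $V_1$-direction, which is independent of $x_0$, while $F$ is uniformly continuous on $V_0 \oplus V_1$.

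To interchange the limit $\eps \to 0$ with the outer $V_0$-integration, I decompose $J_\eps(x_0) = F(x_0 - M^{-1} v) + r_\eps(x_0)$ with $\sup_{x_0} |r_\eps(x_0)| \to 0$. Since $\bigl(\tfrac{\eps}{\pi}\bigr)^{d_0/2} \int_{V_0} e^{-\eps|x_0|^2} dx_0 = 1$, the remainder contributes at most $\sup_{x_0} |r_\eps(x_0)| \to 0$ to the outer integral. Consequently
\[
\text{LHS} = \lim_{\eps \to 0} \bigl(\tfrac{\eps}{\pi}\bigr)^{d_0/2} \int_{V_0} e^{-\eps|x_0|^2} F(x_0 - M^{-1} v) \, dx_0 = \text{RHS},
\]
with the understanding that either both limits exist and agree, or neither does.

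The main obstacle is ensuring that the prefactor in the limit of Step 2 is exactly $1$ rather than some $C \neq 1$. This amounts to checking that the normalization of $d\mu$ on the full space $V$ and the $(\eps/\pi)^{d_0/2}$-prefactor in Definition \ref{def3.2} together reduce $Z$ to the same value $(2\pi)^{d_2}/|\det(M\psi)|$ that appears for the non-degenerate case in the proof of Proposition \ref{prop3.4} (cf.\ Eq.~\eqref{eq3.17_plus1}). This should work out because $V_0$ lies entirely in $\ker(S_\mu)$ and hence contributes only through the regularization factor, which is cancelled precisely by the $(\eps/\pi)^{d_0/2}$-weight built into Definition \ref{def3.2}.
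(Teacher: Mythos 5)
Your proof is correct and follows the same overall strategy as the paper: integrate out the $V_2$-variable to produce the approximate identity $\delta_{\eps}(v+Mx_1)$, identify $(2\pi)^{d_2}/Z$ with $|\det(M\psi)|$ via the normalization hypothesis, and then show that the full expression differs from $(\eps/\pi)^{d_0/2}\int_{V_0}e^{-\eps|x_0|^2}F(x_0-M^{-1}v)\,dx_0$ by a quantity tending to zero, which gives the ``exists iff exists'' statement for free. Where you diverge is in how that last interchange is justified. The paper forms the single difference $T(\eps)$, rescales \emph{both} $x_0\to y_0/\sqrt{\eps}$ and the $V_2$-variable, and applies dominated convergence on $V_0\oplus V_2$, using uniform continuity of $F$ pointwise in $(y_0,y_2)$ even though the base point $y_0/\sqrt{\eps}-M^{-1}v$ escapes to infinity. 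You instead never touch the $x_0$-variable: you prove that $J_{\eps}(x_0)\to F(x_0-M^{-1}v)$ \emph{uniformly} in $x_0$ and then observe that the outer Gaussian weight has total mass $1$. This is a more modular argument and, to my mind, slightly cleaner. The one place a careful reader will want more detail is the uniformity claim itself: ``the concentration happens in the $V_1$-direction, which is independent of $x_0$'' is the right intuition, but you should spell out the standard two-region estimate — split the $\delta_{\eps}$-integral into $|w|\le\eps^{1/4}$ (where uniform continuity of $F$ and the convergence $e^{-\eps|x_1|^2}\to 1$ give a bound independent of $x_0$) and its complement (where $\delta_{\eps}$ has vanishing mass and $F$ is bounded). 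With that inserted, the argument is complete; your closing remark about the normalization constant is also exactly the point the paper checks via Eq.~\eqref{eq3.17_plus1}.
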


\begin{proof}  We set $d_j := \dim(V_j)$ for $j=0,1,2$.
 Similarly as in Eq. \eqref{eq3.14} and with $\delta_{\eps}: V_2 \to \bR$  as above we obtain
\begin{align} \label{eq3.19}
& \int_{\sim} F(x_0 + x_1) \exp(i \langle x_2,v \rangle ) d\mu(x) \nonumber \\
& \quad \quad = \lim_{\eps \to 0} \bigl( \tfrac{\eps}{\pi}\bigr)^{d_0/2}
 \int_{V_0} \biggr[   \int_{V_1} \biggl[  \int_{V_2}   e^{- \eps ( |x_0|^2 +  |x_1|^2 + |x_2|^2)}
F(x_0 + x_1)   \exp(i \langle x_2,v \rangle ) \times \nonumber \\
 &  \quad \quad  \quad \quad  \quad \quad \quad  \quad \quad  \quad \quad
  \times \tfrac{1}{Z} \exp(i \langle x_2,M x_1) dx_2 \biggr] dx_1 \biggr]
dx_0 \biggr] \nonumber \\
 &  \quad \quad  = \tfrac{1}{\pi^{d_0/2}} \  \lim_{\eps \to 0}  \eps^{d_0/2} \int_{V_0} dx_0 e^{- \eps |x_0|^2}
 \biggl[ \tfrac{(2\pi)^{d_2}}{Z} \int_{V_1}  dx_1  e^{- \eps |x_1|^2}
F(x_0 + x_1)  \delta_{\eps}(v + M x_1) \biggr].
\end{align}
 Let $\psi: V_2 \to V_1$ be  a fixed isometry.
From the assumption that  $d\mu$ was normalized
it follows  -- using the same argument as in Eq. \eqref{eq3.17_plus1} above --
that  $\frac{(2\pi)^{d_2}}{Z} = \det^{\frac{1}{2}}(i (S_{\mu})_{| V_1 \oplus V_2}) = |\det(M\psi)|$.
 Eq. \eqref{eq3.19} above and the equality just mentioned will therefore imply the
  assertion of the proposition
provided that we can show
\begin{equation} \label{eq3.20} \lim_{\eps \to 0} T(\eps) = 0
\end{equation}
where
\begin{equation}  T(\eps)  := \eps^{d_0/2} \int_{V_0} dx_0 e^{- \eps |x_0|^2}
 \biggl[F(x_0 - M^{-1}v) -   |\det(M\psi)| \int_{V_1}  dx_1  e^{- \eps |x_1|^2}
F(x_0 + x_1)  \delta_{\eps}(v + M x_1)  \biggr]
\end{equation}
In order to prove \eqref{eq3.20}
recall that  $\psi_*(dx_2)=dx_1$
and that $1 = \tfrac{1}{\pi^{d_2/2}} \int_{V_2}  e^{- |y_2|^2} dy_2$
so we obtain
\begin{align}  \label{eq3.21} T(\eps) & =  \eps^{d_0/2} \int_{V_0} dx_0 e^{- \eps |x_0|^2}
\biggl[ \tfrac{1}{\pi^{d_2/2}} \int_{V_2} dy_2    e^{- |y_2|^2}  F(x_0 - M^{-1}v)    \nonumber \\
 & \quad \quad \quad \quad \quad \quad \quad \quad \quad \quad
  -  \ |\det(M\psi)| \int_{V_2} dx_2  e^{- \eps |\psi x_2|^2}
F(x_0 + \psi x_2) \tfrac{1}{(4 \eps \pi)^{d_2/2}}
  e^{- \tfrac{|v+ M \psi x_2|^2}{4\eps}}   \biggr] \nonumber \\
 & \overset{(*)}{=} \frac{1}{\pi^{d_2/2}}
 \int_{V_0 \oplus V_2} dy_0 dy_2 e^{- |y_0|^2 - |y_2|^2}  \biggl[F(\tfrac{y_0}{\sqrt{\eps}} - M^{-1}v) - \gamma_{\eps}(y_2)
F(\tfrac{y_0}{\sqrt{\eps}} - M^{-1}v + \sqrt{\eps} 2 M^{-1} y_2)  \biggr]
\end{align}
In Step $(*)$  we have made  the changes of
variable $\sqrt{\eps} x_0 \to y_0 $ and  $\tfrac{1}{
\sqrt{\eps}} \tfrac{1}{2}(v + M \psi x_2) \to y_2$  and  we have set
$\gamma_{\eps}(y_2):=  e^{- \eps |M^{-1}(2 \sqrt{\eps} y_2 -
v)|^2}$. Relation \eqref{eq3.20} now follows by applying the dominated
convergence theorem to the last expression in  Eq. \eqref{eq3.21}  and taking
into account that for all fixed $y_0 \in V_0$ and $y_2 \in V_2$ we have
\begin{align*}
& \lim_{\eps \to 0} \bigl[F(\tfrac{y_0}{\sqrt{\eps}} - M^{-1}v ) - \gamma_{\eps}(y_2)
F(\tfrac{y_0}{\sqrt{\eps}}- M^{-1}v  + \sqrt{\eps} 2 M^{-1} y_2)  \bigr]\\
& \quad  = \lim_{\eps \to 0} \bigl[F(\tfrac{y_0}{\sqrt{\eps}} - M^{-1}v  ) (1- \gamma_{\eps}(y_2)) \bigr]\\
& \quad   \quad \quad   \quad  \quad + \lim_{\eps \to 0}  \gamma_{\eps}(y_2)\bigl[F(\tfrac{y_0}{\sqrt{\eps}} - M^{-1}v  ) -
F(\tfrac{y_0}{\sqrt{\eps}} - M^{-1}v  + \sqrt{\eps} 2M^{-1} y_2)  \bigr]  = 0 + 0 = 0
\end{align*}
since $\lim_{\eps \to 0} \gamma_{\eps}(y_2) = 1$
and, by assumption, $F$ is bounded and
uniformly continuous.
\end{proof}

The following remark will be useful in Sec.  \ref{subsec5.2}
 and Sec.  \ref{subsec5.4} below.

\begin{remark}\label{rm_last_sec3}  \rm
If $\Gamma$ is a lattice in $V$
and  $f:V \to \bC$ a $\Gamma$-periodic continuous function then  $\int^{\sim}_{V} f(x) dx$ exists and we have
 \begin{equation}  \label{eq2_lastrmsec3}
\int^{\sim}_{V} f(x) dx = \frac{1}{vol(Q)} \int_{Q}  f(x) dx
 \end{equation}
 with  $Q :=\{\sum_i x_i e_i \mid 0 \le x_i \le 1 \forall i \le d\}$
 where $(e_i)_{i \le d}$ is  any fixed basis of the lattice $\Gamma$
 and where  $vol(Q)$ denotes the volume of $Q$.
 Observe that Eq. \eqref{eq2_lastrmsec3} implies
 \begin{equation}  \label{eq1_lastrmsec3}  \forall y \in   V: \quad
 \int^{\sim}_{V} f(x) dx =  \int^{\sim}_{V} f(x + y) dx
 \end{equation}
\end{remark}

\section{Proof of Theorem \ref{main_theorem}}
\label{sec5}

Recall that Theorem \ref{main_theorem} states that in the special situation described above
 $\WLO_{rig}(L)$ is well-defined and has the value $|L|/|\emptyset|$.
  In the following we will concentrate on the ``computational half'' of this statement.
  That  $\WLO_{rig}(L)$ is  well-defined in the first place will also become clear during
  the computations\footnote{in order to check well-definedness we should, of course,
reverse the order of our considerations/computations:
 we first check that the expressions appearing in Step 6 are well-defined.
 Based on this we can verify  that also the expressions in Step 5 must be well-defined and so on
 until we arrive at the expressions in Step 1}
   even though we will rarely make explicit statements in this   directions.

\begin{convention} \label{conv_sim} \rm
 In the following $\sim$ will denote equality up to a  multiplicative non-zero constant.  This ``constant'' may depend  on $G$, $N$, $\cK$,  and $k$
 but it will never depend on the (simplicial ribbon) link $L$.
   \end{convention}

\setcounter{subsection}{-1}
\subsection{Some preparations}
\label{subsec5.0}

\subsubsection*{\it a) Computation of $\det\bigl(L^{(N)}(B)\bigr)$}

Recall that in Sec. \ref{sec4} above we used the notation $L^{(N)}(B)$ both for the linear operator
$\cA^{\orth}(K) \to \cA^{\orth}(K)$ and the restriction of this operator
to the invariant subspace  $\Check{\cA}^{\orth}(K)$. From now on the notation
$L^{(N)}(B)$ will always refer to the restricted operator.

\begin{proposition} \label{obs2}  For $B \in \cB(q\cK)$ we have
\begin{equation} \label{eq_determinante}
\det\bigl(L^{(N)}(B)\bigr)   =  N^{d} \prod_{\bar{e} \in  \face_0(K_1 | K_2)}
  \det\bigl(1_{{\ck}}- \exp(\ad(B(\bar{e})))_{| {\ck}}\bigr)^2
\end{equation}
 where $d:= \dim(\cA^{\orth}(K))$.
In particular, if
\begin{equation}B \in \cB_{reg}(q\cK) := \{ B \in  \cB(q\cK) \mid
B(x) \in \ct_{reg} \text{ for all $x \in \face_0(q\cK)$}\}
\end{equation}
 then we have $\det\bigl(L^{(N)}(B)\bigr) \neq 0$.
\end{proposition}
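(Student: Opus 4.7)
The plan is to exploit the block structure of $L^{(N)}(B)$, combine it with the weight-space decomposition of $\cG$ and discrete Fourier analysis on $\bZ_N$, and reduce the computation to elementary cyclotomic identities.

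First I would use Eq.~\eqref{def_LN} together with Eqs.~\eqref{eq_LN_ident1}--\eqref{eq_LN_ident2} to decompose $L^{(N)}(B)$ on the full space $\cA^{\orth}(K)$ into blocks indexed by $\bar e \in \face_0(K_1|K_2)$: one block of type $\hat L^{(N)}(B(\bar e))$ and one of type $\check L^{(N)}(B(\bar e))$ per $\bar e$, each acting on a copy of $\Map(\bZ_N,\cG)$. Restricting to the invariant subspace $\check\cA^{\orth}(K)$ (cf.~Eq.~\eqref{eq_CheckcA_disc}) amounts to replacing each such copy by
$$V_b := \{f \in \Map(\bZ_N,\cG) \mid \textstyle\sum_{t \in \bZ_N} f(t) \in \ck\}.$$
Since $b\in\ct$, $\ad(b)$ preserves $\cG = \ct\oplus\ck$, so $\hat L^{(N)}(b)$ and $\check L^{(N)}(b)$ both respect the further splitting $V_b = \Map_0(\bZ_N,\ct) \oplus \Map(\bZ_N,\ck)$, where $\Map_0$ denotes the zero-sum subspace.

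Next I would compute the determinants on each summand by simultaneously diagonalising via the discrete Fourier transform on $\bZ_N$ (which diagonalises $\tau_{\pm 1}$ with eigenvalues $\zeta^{\pm m}$, $\zeta := e^{2\pi i/N}$) and via the root-space decomposition of $\cG_\bC$ (on which $\ad(b)$ has eigenvalue $0$ on $\ct$ and $i\alpha(b)$ on the root space of each $\alpha\in R$). On $\Map(\bZ_N,\ck_\bC)$ the eigenvalues of $\hat L^{(N)}(b)$ are $N(\zeta^m e^{i\alpha(b)/N}-1)$ and those of $\check L^{(N)}(b)$ are $N(1-\zeta^{-m}e^{-i\alpha(b)/N})$; the cyclotomic identities $\prod_{m=0}^{N-1}(1-\zeta^m\mu) = 1-\mu^N$ and $\prod_{m=0}^{N-1}(\zeta^m\mu-1) = (-1)^{N-1}(\mu^N-1)$ collapse the product over $m$, after which the product over roots $\alpha$ assembles exactly into $\det(1_\ck-\exp(\ad(b)))_{|\ck}$ (the $(-1)^{(N-1)\dim\ck}$ twist arising from $\hat L$ is trivial because $\dim\ck=|R|$ is even, roots coming in $\pm$-pairs). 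On $\Map_0(\bZ_N,\ct)$, $\ad(b)$ vanishes, so the operators reduce to $N(\tau_{\pm 1}-1)$, and the classical identity $\prod_{m=1}^{N-1}(1-\zeta^m)=N$ (obtained by evaluating $(z^N-1)/(z-1)$ at $z=1$) gives $\pm N^{N\dim\ct}$.

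Assembling these factors, each $\bar e \in \face_0(K_1|K_2)$ contributes $N^{2ND}\det(1_\ck-\exp(\ad(B(\bar e))))_{|\ck}^2$ to $\det(L^{(N)}(B))$, with $D:=\dim\cG$. The accumulated power of $N$ is $2ND\cdot|\face_0(K_1|K_2)| = \dim\cA^{\orth}(K) = d$, yielding \eqref{eq_determinante}. The ``in particular'' assertion is then immediate: for $b\in\ct_{reg}$ every eigenvalue $1-e^{i\alpha(b)}$ of $1_\ck-\exp(\ad(b))$ on $\ck_\bC$ is non-zero. The only genuine subtlety, and the step I would check most carefully, is the interaction between the constraint defining $\check\cA^{\orth}(K)$ and the constant Fourier mode of $\tau_{\pm 1}$: it is precisely this constraint that kills the $\dim\ct$-dimensional kernel of $\hat L^{(N)}(b)$ (and of $\check L^{(N)}(b)$) on each $\Map(\bZ_N,\ct)$-block, making the restricted operator non-degenerate and \eqref{eq_determinante} meaningful. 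A secondary point is tracking the overall sign $(-1)^{(N-1)\dim\ct\cdot|\face_0(K_1|K_2)|}$ coming from the zero-mode contributions of $\hat L^{(N)}$; I would expect this either to cancel through a convention on $\det$ or to be harmless in the subsequent applications.
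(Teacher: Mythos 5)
Your proposal is correct and follows essentially the same route as the paper: block decomposition over $\bar{e} \in \face_0(K_1|K_2)$, diagonalization of each block via the discrete Fourier transform on $\bZ_N$ together with the root-space decomposition of $\cG_{\bC}$, and the cyclotomic identities to collapse the product over Fourier modes (the paper packages the per-block computation as Lemma \ref{lem_obs1}). The residual sign $((-1)^{N-1})^{\dim\ct}$ you flag at the end does indeed cancel, because each $\bar{e}$ contributes one $\hat{L}^{(N)}$-block and one $\Check{L}^{(N)}$-block carrying the same sign, so their product is $+1$.
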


According to  Eqs. \eqref{def_LN}, \eqref{eq_LN_ident1}, \eqref{eq_LN_ident2}
in order to prove Proposition \ref{obs2} it will be enough to prove Lemma \ref{lem_obs1} below.

\smallskip

Recall the definition of the two linear operators
$\hat{L}^{(N)}(b)$ and  $\Check{L}^{(N)}(b)$ on   $\Map(\bZ_N,\cG)$ for fixed $b \in {\ct}$, cf. Eqs. \eqref{eq_def_LOp} above.\par
In the following we will consider the restriction of each of these two operators
to the orthogonal\footnote{w.r.t. the obvious scalar product on $\Map(\bZ_N,\cG)$} complement of its kernel\footnote{for $b \in \ct_{reg}$, which is the case relevant for us,
 we have $\ker( \hat{L}^{(N)}(b))=\ker( \Check{L}^{(N)}(b)) =  \Map_c(\bZ_N,\ct) $
 with  $\Map_c(\bZ_N,\ct)  := \{ f \in \Map(\bZ_N,\cG) \mid \text{ $f$ is a constant function taking values in $\ct$ }\} \cong \ct$. The orthogonal complement
 $\Map'(\bZ_N,\cG)$ of $\Map_c(\bZ_N,\ct)$
is given by $\Map'(\bZ_N,\cG)  = \{ f \in \Map(\bZ_N,\cG) \mid \sum_{t \in \bZ_N} f(t) \in \ck \}$}.
The restrictions will again be denoted by  $\hat{L}^{(N)}(b)$ and  $\Check{L}^{(N)}(b)$.

\begin{lemma} \label{lem_obs1}
We have\footnote{the $-$ sign in Eqs. \eqref{eq_obs2.3a}
and \eqref{eq_obs2.3b} holds iff both $r$ and $N-1$ are odd}
 \begin{align}
 \label{eq_obs2.3a} \det(\hat{L}^{(N)}(b))  & =  \pm \det\bigl(1_{{\ck}}-\exp(\ad({b}))_{|
{\ck}}\bigr) \cdot N^{d},\\
 \label{eq_obs2.3b} \det(\Check{L}^{(N)}(b))  & =  \pm \det\bigl(1_{{\ck}}-\exp(\ad({b}))_{|
{\ck}}\bigr) \cdot N^{d},
\end{align}
where $d := \dim(\Map(\bZ_N,\cG)) =  N \dim(\cG)$, $r:= \dim(\ct)$.
\end{lemma}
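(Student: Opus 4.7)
The plan is to compute the determinants by simultaneous diagonalization of the operators on the complexified space $\Map(\bZ_N,\cG_\bC)$.

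Under the identification $\Map(\bZ_N,\cG)\cong \Map(\bZ_N,\bR)\otimes_{\bR}\cG$ the operator reads $\hat L^{(N)}(b)=N(\tau_1\otimes e^{\ad(b)/N}-1)$. I would diagonalize $\tau_1$ on $\Map(\bZ_N,\bC)$ by the discrete Fourier basis $f_j(t)=\zeta_j^t$ with $\zeta_j:=e^{2\pi ij/N}$ (so $\tau_1 f_j=\zeta_j f_j$) and simultaneously diagonalize $e^{\ad(b)/N}$ on $\cG_\bC$ via the Cartan--root decomposition $\cG_\bC=\ct_\bC\oplus\bigoplus_{\alpha\in\Delta}\cG_\alpha$. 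The complexified $\hat L^{(N)}(b)$ then has joint eigenvalues $N(\zeta_j-1)$ with multiplicity $r$ (on the $\ct_\bC$-modes) and $N(\zeta_j e^{\alpha(b)/N}-1)$ (on the $\cG_\alpha$-modes). For generic $b\in\ct_{reg}$ only the $j=0$ eigenvalues in the $\ct_\bC$-block vanish, which identifies the kernel with $\Map_c(\bZ_N,\ct)$ as in the footnote and confirms that the orthogonal complement is invariant.

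The restricted determinant is then the product of the nonzero eigenvalues, which I would evaluate using the two elementary identities
\[
\prod_{j=1}^{N-1}(\zeta_j-1)=(-1)^{N-1}N,\qquad \prod_{j=0}^{N-1}(\zeta_j z-1)=(-1)^{N-1}(z^N-1),
\]
the first obtained by evaluating $X^{N-1}+\cdots+1=\prod_{j=1}^{N-1}(X-\zeta_j)$ at $X=1$, the second by factoring $X^N-1=\prod_{j=0}^{N-1}(X-\zeta_j)$ and using $\{\zeta_j^{-1}\}=\{\zeta_j\}$. Applied with $z=e^{\alpha(b)/N}$ this yields $(-1)^{N-1}(e^{\alpha(b)}-1)$ for each root $\alpha$; multiplying across $\alpha\in\Delta$, and using that $\dim\ck=\#\Delta$ is even (so $\prod_{\alpha}(e^{\alpha(b)}-1)=\det(1_\ck-\exp(\ad(b))_{|\ck})$) and $\dim\cG\equiv r\pmod 2$, collects all factors of $N$ into $N^{N\dim\cG}=N^d$ and all signs into $(-1)^{(N-1)r}$; this is \eqref{eq_obs2.3a}. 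The argument for \eqref{eq_obs2.3b} is parallel: now $\tau_{-1}f_j=\zeta_j^{-1}f_j$, and one uses the dual identity $\prod_{j=0}^{N-1}(1-\zeta_j w)=1-w^N$ together with the involution $\alpha\leftrightarrow -\alpha$ on $\Delta$ to replace $\prod_\alpha(1-e^{-\alpha(b)})$ by $\det(1_\ck-\exp(\ad(b))_{|\ck})$.

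The only non-routine step is the sign bookkeeping. The critical inputs are that $\dim\ck$ is even (so $\det(\exp(\ad(b))_{|\ck}-1)=\det(1_\ck-\exp(\ad(b))_{|\ck})$ with no extra sign) and that $\dim\cG$ has the same parity as $r$; these two facts are what reduce the a priori sign $(-1)^{(N-1)(r+\#\Delta)}$ to the asserted $(-1)^{(N-1)r}$. A small additional point worth stating is that the eigenvalue analysis presupposes $\alpha(b)\notin 2\pi i\bZ\setminus\{0\}$ for each root $\alpha$; one can either take this as part of the working definition of $\ct_{reg}$ or invoke analytic continuation in $b$ to conclude the formula for all $b\in\ct_{reg}$ from a dense subset.
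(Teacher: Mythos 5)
Your proof is correct and follows essentially the same route as the paper: diagonalize the complexified operator simultaneously via the discrete Fourier modes on $\bZ_N$ and the root space decomposition of $\cG_{\bC}$, evaluate the product of nonzero eigenvalues with the cyclotomic identities, and track the signs using that the number of roots is even. The only cosmetic difference is that the paper first factors out $N^{d'}$ and the orthogonal operator $e^{\ad(b)/N}$ to reduce to $\det(\tau_1-e^{-\ad(b)/N})$ before diagonalizing, which changes nothing of substance.
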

 \begin{proof}  Let us prove Eq. \eqref{eq_obs2.3a}.  The proof of Eq. \eqref{eq_obs2.3b} is similar.
    First observe that
   \begin{equation}  \label{eq1_in_obs3.8}
 \det(\hat{L}^{(N)}(b)) = \det( N( \tau_1 e^{\ad(b)/N} -1)) = N^{d'} \det( \tau_1  - e^{-\ad(b)/N})
 \end{equation}
 where $d' :=  \dim(\Map'(\bZ_N,\cG)) = d - \dim(\ct) $ and where
 we have used that  $e^{\ad(b)/N}$ is orthogonal.
   The  complexified operator
  $( \tau_1  - e^{-\ad(b)/N}) \otimes \id_{\bC}$ is  diagonalizable
  with eigenvalues
  \begin{align*} \lambda_{k,{\alpha}} & := e^{ \frac{2 \pi i k}{N}} - e^{\frac{-\alpha(b)}{N}}  , \quad \quad  \text{ for each } k \in \{1,2,\ldots,N\}, {\alpha} \in {\cR}_{\bC}\\
   \mu_{k,a} & :=    e^{ \frac{2 \pi i k}{N}}-1  \quad \quad \text{ for each }  k \in  \{1,2,\ldots,N-1\},  a \in \{1,2, \ldots, r\}
  \end{align*}
     where  ${\cR}_{\bC}$ denotes the set of complex roots  of
${\cG}$ w.r.t. ${\ct}$ (cf. part \ref{appB} of the Appendix).
Using the two polynomial equations
 $x^N-1= \prod_{k=0}^{N-1} (x-e^{ \frac{2 \pi i k}{N}}) = (-1)^N \prod_{k=0}^{N-1} (e^{ \frac{2 \pi i k}{N}} - x) $
and $x^{N-1} + x^{N-2} + \ldots + 1= \prod_{k=1}^{N-1}
 (x-e^{ \frac{2 \pi i k}{N}}) = (-1)^{N-1} \prod_{k=1}^{N-1}
 (e^{ \frac{2 \pi i k}{N}} - x) $ we therefore  obtain
\begin{align*}& \det( \tau_1  - e^{-\ad(b)/N})
= \det\nolimits_{\bC}\bigl( (\tau_1  - e^{-\ad(b)/N}) \otimes \id_{\bC}\bigr) \\
&  =  \biggl( \prod_{{\alpha} \in {\cR}_{\bC}}  \prod_k
  \bigl\{ e^{ \frac{2 \pi i k}{N}} - e^{\frac{-\alpha(b)}{N}} \bigr\}
 \biggr)
\biggl( \prod_{a =1}^r  \bigl\{ \prod_{k \neq 0} ( e^{ \frac{2 \pi i k}{N}} - 1) \bigr\} \biggr)\\
& =   \biggl(
\prod_{{\alpha} \in {\cR}_{\bC}} (-1)^{N} \bigl\{ e^{-\alpha(b)}- 1\bigr\} \biggr)
 \biggl( \prod_{a =1}^r (-1)^{N-1} \bigl\{ N \bigr\} \biggr) = (-1)^{r (N-1)} N^{r} \prod_{{\alpha} \in {\cR}_{\bC}} \bigl\{ e^{-\alpha(b)}- 1\bigr\}
\end{align*}
The assertion now follows by combining the last equation with Eq. \eqref{eq1_in_obs3.8}
above and by taking into account the relations  $d = d'  + r$
and  $\prod_{{\alpha} \in {\cR}_{\bC}}  (e^{-{\alpha}({b})}-1)   =  \prod_{{\alpha} \in {\cR}_{\bC}}  (1-e^{{\alpha}({b})})=  \det\bigl((1_{{\ck}}-\exp(\ad({b}))_{|{\ck}}) \otimes \id_{\bC}\bigr) =
      \det\bigl(1_{{\ck}}-\exp(\ad({b}))_{|{\ck}}\bigr)$.
 \end{proof}

\subsubsection*{\it b) Some consequences of conditions (NCP)' and (NH)'}

Let $L=(R_1,R_2,\ldots, R_m)$ be the simplicial ribbon link in $q\cK \times \bZ_N$
fixed in Sec. \ref{subsec4.9} above.
Recall that each  $R_i = (F_k)_{k \le n_i}$,  $n_i \in \bN$,
 ``induces''\footnote{$l_i$ and $l'_i$ are just the two loops on the boundary of $R_i$}
  two simplicial loops  $l_i$ and $l'_i$ in $q\cK \times \bZ_N$.
In the following we use the short notation $l^i_{\Sigma} := (l_i)_{\Sigma}$,
$l^{'i}_{\Sigma} := (l'_i)_{\Sigma}$, $l^i_{S^1} := (l_i)_{S^1}$, and
$l^{'i}_{S^1} := (l'_i)_{S^1}$ for the $\Sigma$- or $S^1$-projections of these loops
and  we will often consider  $l^j_{\Sigma}$ and $l^{'j}_{\Sigma}$
 as (piecewise smooth) maps $S^1 \to \Sigma$.
Clearly,  $\arc(l^{i}_{\Sigma})$ and $\arc(l^{'i}_{\Sigma})$ can then be considered as
subsets of $\Sigma$.

\smallskip

It is not difficult to see that the two conditions (NCP)' and (NH)' on our simplicial ribbon link
$L$ imply the following conditions:

\begin{description}

\item[(FC1)] For all $i,j \le m$ we have
$\arc(l^i_{\Sigma})  \cap \arc(l^{'j}_{\Sigma}) = \emptyset$
and we also have $\arc(l^i_{\Sigma})  \cap \arc(l^{j}_{\Sigma}) = \emptyset$ and
$\arc(l^{'i}_{\Sigma})  \cap \arc(l^{'j}_{\Sigma}) = \emptyset$ if $i \neq j$.

\item[(FC2)] For each $i \le m$ the open region
 $O_i \subset \Sigma$  ``between''\footnote{more precisely, $O_i$
 is the interior of  the subset $\Image(R^i_{\Sigma})$ of $\Sigma$}  $\arc(l^i_{\Sigma})$ and  $\arc(l^{'i}_{\Sigma})$
does not contain an element of $\face_0(q\cK)$.

\item[(FC3)] For each $i \le m$ and every $F \in \face_2(q\cK)$ with $F \subset  \Image(R^i_{\Sigma})$
exactly one of the four sides of (the tetragon) $F$  will lie on $\arc(l^i_{\Sigma})$
and exactly one side will lie on $\arc(l^{'i}_{\Sigma})$

\item[(FC4)] $l^i_{S^1} = l'^i_{S^1}$ is fulfilled for each $i \le m$

 \end{description}

In order to simplify the notation in Secs \ref{subsec5.1}-\ref{subsec5.3}   below
we will set
 $$n := \max_{i \le m} n_i$$
 and we will  extend each simplicial loop $l^i_{\Sigma}$, $l^{'i}_{\Sigma}$, $l^i_{S^1}$,
$l^{'i}_{S^1}$ to a simplicial loop with ``length'' $n$ in a trivial way, i.e.
by ``adding'' $n -n_i$ empty edges. For the extended simplicial loops we will use the same notation.

\smallskip

Finally we set, for each $i \le m$ and $k \le n$
\begin{equation}\bar{l}^{i(k)}_{\Sigma}  := \pi( l^{i(k)}_{\Sigma}), \quad \quad \text{ and } \quad \quad
  \bar{l}^{'i(k)}_{\Sigma}  := \pi( l^{'i(k)}_{\Sigma})
   \end{equation}
  where $\pi:C_1(q\cK) \to C_1(K) (\subset C_1(q\cK)) $ is the orthogonal projection.
  Observe that for each $i \le m$ we  have
\begin{subequations}  \label{eq_FC5}
\begin{align}
 & \forall k \le n: \bar{l}^{i(k)}_{\Sigma} \in C_1(K_1) \quad  \text{ and }  \quad
                \forall k \le n: \bar{l}^{'i(k)}_{\Sigma} \in C_1(K_2)
 \end{align}
 or
 \begin{align}  & \forall k \le n: \bar{l}^{i(k)}_{\Sigma} \in C_1(K_2) \quad \text{ and } \quad
  \forall k \le n: \bar{l}^{'i(k)}_{\Sigma} \in C_1(K_1)
\end{align}
\end{subequations}

From (FC1) and Eqs \eqref{eq_FC5} it easily follows that
 for all  $i_1,i_2 \le m$, and $k_1, k_2 \le n$
  we have\footnote{or, more precisely, both
 $\star_K \bar{l}^{i_1(k_1)}_{\Sigma} \neq  \bar{l}^{'i_2(k_2)}_{\Sigma}$
 and $\star_K \bar{l}^{i_1(k_1)}_{\Sigma} \neq - \bar{l}^{'i_2(k_2)}_{\Sigma}$       etc.}
\begin{equation} \label{eq_FC6} \star_K  \bar{l}^{i_1(k_1)}_{\Sigma} \neq \pm \bar{l}^{'i_2(k_2)}_{\Sigma},
\quad \quad  \star_K  \bar{l}^{i_1(k_1)}_{\Sigma} \neq \pm \bar{l}^{i_2(k_2)}_{\Sigma}, \quad \quad
\star_K  \bar{l}^{'i_1(k_1)}_{\Sigma} \neq \pm \bar{l}^{'i_2(k_2)}_{\Sigma}
\end{equation}
 provided that $\bar{l}^{i_1(k_1)}_{\Sigma} \neq 0$ and $\bar{l}^{'i_1(k_1)}_{\Sigma} \neq 0$.
Here $\star_K$ is the linear isomorphism on $C_1(K) = C^1(K,\bR)$
which is defined exactly in the same way as the operator
$\star_K$ on $\cA_{\Sigma}(K) = C^1(K,\cG)$ which we introduced in Sec. \ref{subsec4.2} above.

\subsection{Step 1: Performing the $\int\nolimits_{\sim}  \cdots
\exp(iS^{disc}_{CS}(\Check{A}^{\orth},B))  D\Check{A}^{\orth} $
integration in Eq. \eqref{eq_def_WLOdisc}}
\label{subsec5.1}

\begin{lemma} \label{lem1} Under the assumptions  on the simplicial ribbon link $L=(R_1,\ldots,R_m)$
 made above we have for every fixed $A^{\orth}_c \in  \cA^{\orth}_c(K)$ and $B \in \cB_{reg}(q\cK)$
\begin{equation}  \label{eq5.1}\int\nolimits_{\sim}   \prod_i  \Tr_{\rho_i}\bigl( \Hol^{disc}_{R_i}(\Check{A}^{\orth} + A^{\orth}_c,
  B)\bigr)  \exp(iS^{disc}_{CS}(\Check{A}^{\orth},B))  D\Check{A}^{\orth} = Z^{disc}_B \prod_i  \Tr_{\rho_i}\bigl(
\Hol^{disc}_{R_i}(A^{\orth}_c,   B)\bigr)
\end{equation}
where $Z^{disc}_B := \int_{\sim} \exp(iS^{disc}_{CS}(\Check{A}^{\orth},B))  D\Check{A}^{\orth}$
\end{lemma}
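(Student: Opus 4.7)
The strategy is to divide through by $Z^{disc}_B$, so that
$$d\mu := (Z^{disc}_B)^{-1}\exp(iS^{disc}_{CS}(\Check{A}^{\orth},B))\,D\Check{A}^{\orth}$$
becomes a normalized centered oscillatory Gauss-type measure on $\Check{\cA}^{\orth}(K)$ in the sense of Definition \ref{def3.1}, then to recognize the integrand $\prod_i \Tr_{\rho_i}(\Hol^{disc}_{R_i}(\Check{A}^{\orth}+A^{\orth}_c,B))$ as a $\cP_{exp}$-function of an $\bR$-valued family of affine functionals of $\Check{A}^{\orth}$, and finally to invoke Proposition \ref{prop3.3} (in the extended form of Remark \ref{rm3.1}) to replace $\Check{A}^{\orth}$ by $0$. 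Non-degeneracy of $d\mu$, and hence its normalizability, follows from Proposition \ref{prop4.2} combined with Proposition \ref{obs2} under the standing assumption $B\in\cB_{reg}(q\cK)$; the associated symmetric operator is, up to the scalar $-2\pi k$, the restriction of $\star_K L^{(N)}(B)$ to $\Check{\cA}^{\orth}(K)$.

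After fixing a basis $(e_a)_{a\le D}$ of $\cG$, formula \eqref{eq4.21} exhibits the $k$-th factor of $\Hol^{disc}_{R_i}$ as the $\rho_i$-image of $\exp$ applied to a $\cG$-valued expression
$$X^{i}_k(\Check{A}^{\orth})=\tfrac{1}{2}\bigl(\Check{A}^{\orth}(\start l^{i(k)}_{S^1})\bigr)(\bar{l}^{i(k)}_{\Sigma})+\tfrac{1}{2}\bigl(\Check{A}^{\orth}(\start l^{'i(k)}_{S^1})\bigr)(\bar{l}^{'i(k)}_{\Sigma})+c^{i}_k(A^{\orth}_c,B),$$
with $c^{i}_k$ a constant independent of $\Check{A}^{\orth}$ (I used Convention \ref{conv1} to replace $l^{i(k)}_{\Sigma}$ by its projection $\bar{l}^{i(k)}_{\Sigma}$, which is legitimate since $\Check{A}^{\orth}$ takes values in $\cA_{\Sigma}(K)$). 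Writing $X^i_k=\sum_a Y^{i,a}_k e_a$ yields a family $(Y^{i,a}_k)_{i\le m,\,k\le n,\,a\le D}$ of $\bR$-valued affine functionals of $\Check{A}^{\orth}$, and the whole integrand takes the form $\Phi\bigl((Y^{i,a}_k(\Check{A}^{\orth}))_{i,k,a}\bigr)$ for some $\Phi\in\cP_{exp}(\bR^{m\times n\times D})$. Since $d\mu$ is centered, $\int_{\sim}Y^{i,a}_k\,d\mu$ equals the constant part of $Y^{i,a}_k$, i.e.\ $Y^{i,a}_k|_{\Check{A}^{\orth}=0}$. Granting the covariance hypothesis \eqref{eq3.10}, Proposition \ref{prop3.3} would then give
$$\int_{\sim}\Phi\,d\mu=\Phi\bigl((Y^{i,a}_k|_{\Check{A}^{\orth}=0})_{i,k,a}\bigr)=\prod_i\Tr_{\rho_i}\bigl(\Hol^{disc}_{R_i}(A^{\orth}_c,B)\bigr),$$
and multiplying back by $Z^{disc}_B$ yields \eqref{eq5.1}.

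The decisive step, and the main obstacle, is the verification that $\cov_{\sim}(Y^{i,a}_k,Y^{j,b}_l)=0$ for all $(i,a,k),(j,b,l)$. By Example \ref{obs1}(ii) this reduces to showing
$$\ll w^{i,a}_k,\,(\star_K L^{(N)}(B))^{-1}\,w^{j,b}_l\gg_{\cA^{\orth}(q\cK)}=0,$$
where $w^{i,a}_k\in\Check{\cA}^{\orth}(K)$ is the Riesz representative of the linear part of $Y^{i,a}_k$. The structural ingredients I plan to exploit are: by \eqref{eq_Hodge_matrix} the operator $\star_K$ is anti-diagonal with respect to the splitting $\cA_{\Sigma}(K)=C^1(K_1,\cG)\oplus C^1(K_2,\cG)$; by \eqref{def_LN}--\eqref{eq_LN_ident2} the operator $L^{(N)}(B)$ is block-diagonal with respect to the same splitting and acts edge-by-edge over $\face_0(K_1|K_2)$; and by condition (FC5) the two spatial pieces of each $w^{i,a}_k$ sit in opposite blocks (one supported on $\bar{l}^{i(k)}_{\Sigma}\in C_1(K_1)$ at time $\start l^{i(k)}_{S^1}$, the other on $\bar{l}^{'i(k)}_{\Sigma}\in C_1(K_2)$ at time $\start l^{'i(k)}_{S^1}$, or the opposite). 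Only the two ``cross'' contributions to the pairing survive the anti-diagonality, and after applying $\star_K^{-1}$ they reduce to spatial inner products of the form $\ll\bar{l}^{j(l)}_{\Sigma},\,\star_{K_2}^{-1}\bar{l}^{'i(k)}_{\Sigma}\gg$ and the symmetric analogue, each of which vanishes thanks to \eqref{eq_FC6} -- the geometric consequence of (NCP)' and (NH)' that is available here. The remaining technical work is to keep track of how the time-translation / $\ad(B)$ part of $L^{(N)}(B)^{-1}$ affects this: it merely multiplies the spatial pairings by scalar functions of the time labels $\start l^{i(k)}_{S^1}$ and $\start l^{'j(l)}_{S^1}$ and leaves the spatial selection rule intact. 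Once this bookkeeping is in place the covariances vanish, Proposition \ref{prop3.3} applies, and the lemma follows.
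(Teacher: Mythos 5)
Your proposal follows essentially the same route as the paper's proof: normalize $d\nu^{disc}_B$ (non-degeneracy via Propositions \ref{prop4.2} and \ref{obs2}), encode the holonomy traces as $\Phi((Y^{i,a}_k)_{i,k,a})$ with $\Phi\in\cP_{exp}$, reduce the vanishing of the covariances via Example \ref{obs1} to $\ll \Check{j}_a,(\star_K L^{(N)}(B))^{-1}\Check{j}'_{a'}\gg=0$, which holds by the anti-diagonal/block-diagonal structure of $\star_K$ and $L^{(N)}(B)$ together with \eqref{eq_FC6}, and then apply Proposition \ref{prop3.3} with Remark \ref{rm3.1}. The only cosmetic imprecision is the phrase ``scalar functions of the time labels'' (the $e^{\ad(B)/N}$-part acts nontrivially on the $\cG$-factor, though it indeed preserves the edge-by-edge spatial support, which is all that is needed), so the argument is correct.
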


\begin{proof} Let  $A^{\orth}_c \in \cA^{\orth}_c(K)$ and   $B \in \cB_{reg}(q\cK)$ be as in the assertion of the lemma.
 In order to prove the lemma we will apply Proposition \ref{prop3.3} (and Remark \ref{rm3.1} above)
 to the special situation where (cf. Convention \ref{conv_EucSpaces} in Sec. \ref{subsec4.8})
 \begin{itemize}
 \item   $V= \Check{\cA}^{\orth}(K)$,
\item $d\mu= d\nu^{disc}_B$ with  $d\nu^{disc}_B = \tfrac{1}{Z^{disc}_B} \exp(iS^{disc}_{CS}(\Check{A}^{\orth},B))  D\Check{A}^{\orth}$,

\item  $(Y^{i,a}_{k})_{i \le m,k \le n,a \le \dim(\cG)}$  is the
 family   of maps $Y^{i,a}_{k}:  \Check{\cA}^{\orth}(K) \to \bR$ given by\footnote{observe that
 in view of condition (FC4) above the RHS of Eq. \eqref{eq5.4} is very closely related
 to the RHS of Eq. \eqref{eq4.21} in Sec. \ref{subsec4.3}}
\begin{multline}  \label{eq5.4}
Y^{i,a}_k(\Check{A}^{\orth})  :=  \biggl\langle T_a,
 \bigl( \Check{A}^{\orth}(\start l^{i(k)}_{S^1})\bigr)(\tfrac{1}{2} \ l^{i(k)}_{\Sigma}+  \tfrac{1}{2} \ l^{'i(k)}_{\Sigma}) +
 A^{\orth}_c(\tfrac{1}{2} \ l^{i(k)}_{\Sigma}+  \tfrac{1}{2} \ l^{'i(k)}_{\Sigma}) \\
   +  \bigl( \tfrac{1}{2} B(\start l^{i(k)}_{\Sigma}) + \tfrac{1}{2} B(\start l^{'i(k)}_{\Sigma}) \bigr) \cdot dt^{(N)}(l^{i(k)}_{S^1})\biggr\rangle
\end{multline}
where $(T_a)_{a \le \dim(\cG)}$ is an arbitrary
$\langle \cdot, \cdot  \rangle$-ONB of $\cG$ (which will be kept fixed in the following), and

\item $\Phi: \bR^{m \times n \times \dim(\cG)} \to \bC$  is given by
\begin{equation}  \label{eq5.5} \Phi((x^{i,a}_k)_{i,k,a}) =  \prod_{i=1}^m  \Tr_{\rho_i}(\prod_{k=1}^n \exp(\sum_{a=1}^{\dim(\cG)} T_a x^{i,a}_k))
\quad \quad \text{for all $(x^{i,a}_k)_{i,k,a} \in \bR^{m \times n
\times \dim(\cG)}$}
\end{equation}
\end{itemize}
 Observe that
\begin{enumerate}
\item $d\nu^{disc}_{B}$ is a  well-defined normalized  non-degenerate centered oscillatory Gauss-type measure.
   Since by assumption $B \in \cB_{reg}(q\cK)$ this follows from Eq.  \eqref{eq_SCS_expl_discb} and Proposition \ref{prop4.2} in Sec. \ref{subsec4.2} above and from   Proposition \ref{obs2} in Sec. \ref{subsec5.0}.\par

   For later use let us  mention that  according to  Example \ref{obs1} above
  and Eq. \eqref{eq_determinante} in Proposition \ref{obs2} above we have
\begin{multline} \label{eq_def_ZBdisc}
 Z^{disc}_B = \int_{\sim} \exp(iS^{disc}_{CS}(\Check{A}^{\orth},B))  D\Check{A}^{\orth} \\
\sim   \det(L^{(N)}(B))^{-1/2} \sim \prod_{\bar{e} \in  \face_0(K_1 | K_2)}
  \det\bigl(1_{{\ck}}- \ad(B(\bar{e}))_{| {\ck}}\bigr)^{-1}
\end{multline}

\item $\Phi \in \cP_{exp}(\bR^{m \times n \times \dim(\cG)})$  since
\begin{multline}  \label{eq5.6}
\Phi((x^{i,a}_k)_{i,k,a}) =  \prod_i  \Tr_{\rho_i}(\prod_k
\exp(\sum_a T_a x^{i,a}_k))
  =\prod_i  \Tr_{\End(V_i)} \bigl( \prod_k  \rho_i(\exp( \sum_a T_a x^{i,a}_k )) \bigr) \\
 =  \prod_i  \Tr_{\End(V_i)} \bigl( \prod_k \exp_{\End(V_i)}( \sum_a  \bigl(
 (\rho_i)_* T_a \bigr) x^{i,a}_k ) \bigr)
 \end{multline}
where   $\exp_{\End(V_i)}$ is  exponential map of the associative algebra\footnote{observe that
the vector spaces underlying $\End(V_i)$ and $\gl(V_i)$ coincide}
$\End(V_i)$ and  $(\rho_i)_* : \cG \to \gl(V_i)$, for $i \le m$, is the Lie algebra representation
 induced by $\rho_i$.

\item For all $i \le m$, $k \le n$, $a \le \dim(\cG)$ we have
\begin{equation}  \label{eq5.7}
\int\nolimits_{\sim}  Y^{i,a}_k \ d\nu^{disc}_{B} = Y^{i,a}_k(0)
\end{equation}
In order to see this let us introduce
$j:\bZ_N \to C_1(q\cK)$
by
$$j(t):= \begin{cases}
\tfrac{1}{2} \ l^{i(k)}_{\Sigma}+  \tfrac{1}{2} \ l^{'i(k)}_{\Sigma} & \text{ if } t =  \start l^{i(k)}_{S^1}\\
 0 &  \text{ if } t \neq  \start l^{i(k)}_{S^1}
\end{cases}$$
and set $\Check{j}_a:= p(T_a j)$ where  $p:   \cA^{\orth}(q\cK) \to  \Check{\cA}^{\orth}(K)$
is the $\ll  \cdot, \cdot\gg_{\cA^{\orth}(q\cK)}$-orthogonal projection
onto the subspace $\Check{\cA}^{\orth}(K)$ of  $\cA^{\orth}(q\cK)$
and where $T_a j := j \otimes T_a \in \cA^{\orth}(q\cK) \cong \Map(\bZ_N, C_1(q\cK)) \otimes \cG$.
 Then
\begin{multline*}
Y^{i,a}_k(\Check{A}^{\orth}) - Y^{i,a}_k(0) \\
= \langle T_a , \bigl( (\Check{A}^{\orth})(\start l^{i(k)}_{S^1})\bigr)(\tfrac{1}{2} \ l^{i(k)}_{\Sigma}+  \tfrac{1}{2} \ l^{'i(k)}_{\Sigma}) \rangle
  =  \ll  \Check{A}^{\orth} , T_a j\gg_{\cA^{\orth}(q\cK)}
  =  \ll  \Check{A}^{\orth} ,\Check{j}_a \gg_{\cA^{\orth}(q\cK)}
\end{multline*}
On the other hand since  $d\nu^{disc}_{B}$ is centered
  Example \ref{obs1} above
   implies that $\int_{\sim}
 \ll  \cdot , \Check{j}_a\gg_{\cA^{\orth}(q\cK)} d\nu^{disc}_B = 0$.
 Since $d\nu^{disc}_{B}$ is also normalized  we obtain Eq. \eqref{eq5.7}.

\item For all $i,i' \le m$, $k,k' \le n$, $a,a' \le \dim(\cG)$ we have
\begin{equation}  \label{eq5.8} \int\nolimits_{\sim}  Y^{i,a}_k Y^{i',a'}_{k'} \ d\nu^{disc}_{B}
 =   \int\nolimits_{\sim}  Y^{i,a}_k \ d\nu^{disc}_{B}
 \int\nolimits_{\sim} Y^{i',a'}_{k'} \ d\nu^{disc}_{B}
 \end{equation}
 This follows from Eq. \eqref{eq5.7} above and
\begin{align}  \label{eq5.9}
& \int\nolimits_{\sim}  (Y^{i,a}_k - Y^{i,a}_k(0))
 (Y^{i',a'}_{k'} - Y^{i',a'}_{k'}(0)) d\nu^{disc}_{B} =
 \int\nolimits_{\sim}  \ll \cdot , \Check{j}_a \gg_{\cA^{\orth}(q\cK)} \ll \cdot ,
 \Check{j}'_{a'} \gg_{\cA^{\orth}(q\cK)} d\nu^{disc}_{B} \nonumber \\
& \quad \overset{(*)}{\sim}  \quad
\ll \Check{j}_a, \bigl(\star_{K} L^{(N)}(B)  \bigr)^{-1}
\Check{j}'_{a'} \gg_{\cA^{\orth}(q\cK)} \overset{(**)}{=} 0
\end{align}
where $\Check{j}_a$ is as in point iii) above and where
$\Check{j}'_{a'}$ is defined in a completely analogous way with
$i$, $k$, and $a$ replaced by $i'$, $k'$, and $a'$.
  Here step $(*)$ follows from  Example \ref{obs1} above
and step $(**)$ follows from the inequalities \eqref{eq_FC6} appearing at the end
of Sec. \ref{subsec5.0} above.
\end{enumerate}
Thus the assumptions of Proposition \ref{prop3.3} above are fulfilled
 and we obtain
\begin{multline}  \label{eq5.11} \frac{1}{Z^{disc}_B} \int\nolimits_{\sim}
 \prod_i  \Tr_{\rho_i}\bigl( \Hol^{disc}_{R_i}(\Check{A}^{\orth} + A^{\orth}_c,
  B)\bigr)   \exp(iS^{disc}_{CS}(\Check{A}^{\orth},B))
D\Check{A}^{\orth}\\
\overset{(+)}{=} \int\nolimits_{\sim}  \prod_i  \Tr_{\rho_i}(\prod_k \exp(\sum_a T_a Y^{i,a}_k))
d\nu^{disc}_{B} =  \int\nolimits_{\sim}  \Phi((Y^{i,a}_k)_{i,k,a}) \ d\nu^{disc}_{B} \overset{(*)}{=}
 \Phi((\int\nolimits_{\sim} Y^{i,a}_k \ d\nu^{disc}_{B})_{i,k,a}) \\
\overset{(**)}{=}  \Phi((Y^{i,a}_k(0))_{i,k,a}) =  \prod_i \Tr_{\rho_i}( \prod_k
\exp( \sum_a T_a Y^{i,a}_k(0))) = \prod_i \Tr_{\rho_i}\bigl(
\Hol^{disc}_{R_i}(A^{\orth}_c,   B)\bigr)
\end{multline}
Here  step $(+)$ follows from the definitions  and condition (FC4) in Sec. \ref{subsec5.0} above, step $(*)$ follows from Proposition \ref{prop3.3} and Remark \ref{rm3.1}  above, and step $(**)$ follows
from Eq. \eqref{eq5.7} above.
\end{proof}

Using Lemma \ref{lem1} and taking into account the implication
  $$ \prod_{x} 1^{(s)}_{\ct_{reg}}(B(x))
\neq 0 \quad \Rightarrow \quad B \in \cB_{reg}(q\cK)$$
for all $s>0$  we now obtain from Eq. \eqref{eq_def_WLOdisc}
\begin{multline} \label{eq5.14}
\WLO^{disc}_{rig}(L)\\
=  \lim_{s \to 0} \sum_{y \in I}
\int\nolimits_{\sim} \biggl\{
 \bigl( \prod_{x} 1^{(s)}_{\ct_{reg}}(B(x)) \bigr)
  \prod_i  \Tr_{\rho_i}\bigl(\Hol^{disc}_{R_i}( A^{\orth}_c, B)\bigr) \Det^{disc}(B)  \biggr\}\\
  \times \exp\bigl( -  2 \pi i k   \langle y, B(\sigma_0) \rangle \bigr)
  \exp(i  S^{disc}_{CS}(A^{\orth}_c,B))    (DA^{\orth}_c \otimes  DB)
\end{multline}
where we have set
\begin{equation} \label{eq_def_Detdisc}
\Det^{disc}(B) := \Det^{disc}_{FP}(B) Z^{disc}_B
 \end{equation}

\subsection{Step 2: Performing the $\int\nolimits_{\sim}  \cdots
 \exp(i  S^{disc}_{CS}(A^{\orth}_c,B))    (DA^{\orth}_c \otimes  DB)$-integration in \eqref{eq5.14}}
\label{subsec5.2}

With the help of Proposition \ref{prop3.5} above  let us now evaluate
the $\int\nolimits_{\sim}  \cdots
 \exp(i  S^{disc}_{CS}(A^{\orth}_c,B))    (DA^{\orth}_c \otimes  DB)$-integral appearing
 in Eq.  \eqref{eq5.14}. In order to
do so we first rewrite the integrand in Eq. \eqref{eq5.14} in
such a way that it assumes the form of the integrand on the LHS  of
the formula appearing in Proposition \ref{prop3.5} above.
 In order to achieve this we will now exploit
the fact that  all of the
 remaining fields $A^{\orth}_c$ and  $B$  in Eq.  \eqref{eq5.14}
take values in the {\it Abelian} Lie algebra  $\ct$.
For  fixed  $A^{\orth}_c$ and $B$ we can therefore rewrite
 $\Hol^{disc}_{R_i}(A^{\orth}_c,B)$ as an exponential of a sum, namely as
\begin{equation}  \label{eq5.16} \Hol^{disc}_{R_i}( A^{\orth}_c, B)
  = \exp\bigl(  \Phi_i(B) +   \sum_k  A^{\orth}_c(\tfrac{1}{2} \ l^{i(k)}_{\Sigma}+  \tfrac{1}{2} \ l^{'i(k)}_{\Sigma})    \bigr)
\end{equation}
(cf. Eq. \eqref{eq4.21}  above)
where we  have set for each $i \le m$
\begin{equation} \label{eq5.17} \Phi_i(B)  :=  \sum_k  \bigl( \tfrac{1}{2} B(\start l^{i(k)}_{\Sigma})
 + \tfrac{1}{2} B(\start l^{'i(k)}_{\Sigma}) \bigr) \cdot dt^{(N)}(l^{i(k)}_{S^1})
\end{equation}

Moreover, since $\Hol^{disc}_{R_i}( A^{\orth}_c, B) \in T$ we can
replace in Eq. \eqref{eq5.14}  the characters $\chi_i:=
\Tr_{\rho_i}$, $i \le m$,   by their restrictions $\chi_{i|T}$. But
$\chi_{i|T}$ is just a linear combination of global weights, more
precisely, for every $b \in \ct$ we have
\begin{equation} \label{eq5.18}
\Tr_{\rho_i}(\exp(b)) = \chi_{i|T}(\exp(b)) = \sum_{\alpha \in
\Lambda} m_{\chi_i}(\alpha) e^{2 \pi i \langle \alpha, b \rangle}
\end{equation}
where $m_{\chi_i}(\alpha)$  the multiplicity of $\alpha \in \Lambda$ as a weight
in $\chi_i$
(here $\Lambda \subset \ct^* \cong \ct$ denotes
 the lattice of the real weights associated to the pair $(\cG,\ct)$, cf. part \ref{appB} of the Appendix below).
Combining Eqs. \eqref{eq5.16} -- \eqref{eq5.18}  we obtain\footnote{here we are a bit sloppy and use the letter
$i$ both for  the multiplication index and the imaginary unit}
\begin{align} \label{eq5.19}
& \prod_i  \Tr_{\rho_i}\bigl( \Hol^{disc}_{R_i}( A^{\orth}_c,   B) \bigr) \nonumber \\
& = \prod_i  \biggl(  \sum_{\alpha_i \in \Lambda} m_{\chi_i}(\alpha_i) \cdot \exp(2 \pi i \langle \alpha_i,
\Phi_i(B) \rangle )  \cdot
\exp\bigl(  2 \pi i   \sum_k \langle \alpha_i,
A^{\orth}_c(\tfrac{1}{2} \ l^{i(k)}_{\Sigma}+  \tfrac{1}{2} \ l^{'i(k)}_{\Sigma})  \rangle \bigr) \biggr) \nonumber \\
 & =   \sum_{\alpha_1, \alpha_2, \ldots, \alpha_m \in \Lambda} \bigl( \prod_i  m_{\chi_i}(\alpha_i) \bigr)
\bigl( \prod_i \exp(2 \pi i \langle \alpha_i, \Phi_i(B) \rangle )  \bigr)
 \exp\bigl(  2 \pi i  \ll  A^{\orth}_c ,  \sum_{i} \alpha_i \cdot {\mathfrak l}^{i}_{\Sigma} \gg_{\cA^{\orth}(q\cK)}  \bigr)
\end{align}
where  we have set
\begin{equation} \label{eq_mathfrak_l_def}
 {\mathfrak l}^{i}_{\Sigma} :=
  \sum_k  \tfrac{1}{2} ( l^{i(k)}_{\Sigma} + l^{'i(k)}_{\Sigma})  \in C_1(q\cK)
 \end{equation}

Let us now set  for each    $s > 0$, $y \in I$,
$(\alpha_i)_i := (\alpha_1, \alpha_2, \ldots, \alpha_m) \in \Lambda^m$,
and $B \in \cB(q\cK)$:

\begin{multline}  \label{eq5.22}
F^{(s)}_{(\alpha_i)_i,y}(B)  :=  \bigl( \prod_{x}
1^{(s)}_{\ct_{reg}}(B(x)) \bigr)
 \bigl( \prod_i \exp(2 \pi i \langle \alpha_i, \Phi_i(B) \rangle )  \bigr) \Det^{disc}(B) \\
 \times   \exp\bigl( - 2 \pi i k    \langle y, B(\sigma_0)  \rangle \bigr)
 \end{multline}
Then, according to Eq. \eqref{eq5.19},
 we can rewrite  Eq. \eqref{eq5.14} as
\begin{multline} \label{eq5.21} \WLO^{disc}_{rig}(L)
= \lim_{s \to 0} \sum_{(\alpha_i)_i \in \Lambda^m}  \bigl( \prod_i  m_{\chi_i}(\alpha_i) \bigr)
 \sum_{y \in I}\\
 \times \int_{\sim} F^{(s)}_{(\alpha_i)_i,y}(B)
  \exp\bigl(  2 \pi i  \ll  A^{\orth}_c ,  \sum_{i} \alpha_i \cdot {\mathfrak l}^{i}_{\Sigma}  \gg_{\cA^{\orth}(q\cK)}  \bigr) \exp(i  S^{disc}_{CS}(A^{\orth}_c,B))    (DA^{\orth}_c \otimes  DB)
\end{multline}
 (Observe that there are only finitely many $(\alpha_i)_i \in \Lambda^m$
for which the product  $\prod_i  m_{\chi_i}(\alpha_i)$ does not vanish.
Accordingly, the summation  $\sum_{(\alpha_i)_i \in \Lambda^m}  \bigl( \prod_i  m_{\chi_i}(\alpha_i) \bigr) \cdots $
above is a finite and we can interchange it with  $\sum_{y \in I}$).

\smallskip

Let us now fix for a while $s>0$, $y \in I$, and  $(\alpha_i)_i \in \Lambda^m$
and evaluate the corresponding $\int_{\sim} \cdots $-integral
in Eq. \eqref{eq5.21}. In order to do so we will
apply Proposition \ref{prop3.5} above
to the special situation where

\begin{itemize}
\item $V  := \cA^{\orth}_c(K) \oplus  \cB_0(q\cK)$ (cf. Convention \ref{conv_EucSpaces}),

\item $d\mu  := d\nu^{disc}$ where
\begin{multline*} d\nu^{disc} := \tfrac{1}{Z^{disc}} \exp(iS^{disc}_{CS}(A^{\orth}_c,B))
(DA^{\orth}_c  \otimes DB)\\
  = \tfrac{1}{Z^{disc}} \exp(  i
 \ll  A^{\orth}_c,  -   2 \pi  k (\star_K \circ \pi \circ d_{q\cK}) B \gg_{\cA^{\orth}(q\cK)}  ) (DA^{\orth}_c \otimes DB)
\end{multline*}
where\footnote{recall Eq. \eqref{eq_SCS_expl_discc}
 and observe that  $ \ll  \star_K A^{\orth}_c,  d_{q\cK}  B \gg_{\cA^{\orth}(q\cK)}
  =  \ll  \star_K A^{\orth}_c,  \pi( d_{q\cK}  B) \gg_{\cA^{\orth}(q\cK)}
 = - \ll   A^{\orth}_c, \star_K ( \pi( d_{q\cK}  B)) \gg_{\cA^{\orth}(q\cK)}$}
  we have set $Z^{disc}:= \int_{\sim} \exp(i  S^{disc}_{CS}(A^{\orth}_c,B))    (DA^{\orth}_c \otimes  DB)$
and where  $\pi:\cA_{\Sigma,\ct}(q\cK) \to  \cA_{\Sigma,\ct}(K) \cong \cA^{\orth}_c(K)$ is the orthogonal
 projection  (cf. also Convention \ref{conv1neu} above),

\item   $V_1 := \ker(\star_K \circ
\pi \circ d_{q\cK})^{\orth}  = \ker(\pi \circ d_{q\cK})^{\orth} \overset{(*)}{=} (\cB_c(q\cK))^{\orth} \subset \cB_0(q\cK)$
where in $(*)$ we used  Eq. \eqref{eq_obs1} in Sec. \ref{subsec4.10a} above.
(Here and in the next paragraph   $d_{q\cK}$ is a short notation for  $(d_{q\cK})_{| \cB_0(q\cK)}$).

\item  $V_2:= \Image(\star_K \circ \pi \circ d_{q\cK}) \subset  \cA^{\orth}_c(K)$

\item  $V_0 := \cB_c(q\cK) \oplus (V_2)^{\orth}$

 \smallskip

where $(V_2)^{\orth}$ is the orthogonal complement of $V_2$ in $\cA^{\orth}_c(K)$
(cf. Convention \ref{conv_EucSpaces}).

\item  $F := F^{(s)}_{(\alpha_i)_i,y} \circ p$
where $p: V_0 \oplus V_1 =  \cB_0(q\cK) \oplus (V_2)^{\orth} \to \cB_0(q\cK)$
is the obvious projection.

\item $v :=  2 \pi   \sum_i \alpha_i \cdot  {\mathfrak l}^{i}_{\Sigma}$
\end{itemize}

\noindent The following remarks show that  the assumptions of Proposition \ref{prop3.5}  above are indeed fulfilled:
\begin{enumerate}

\item      $d\nu^{disc}$ is  a normalized centered oscillatory Gauss type measure on  $\cA^{\orth}_c(K) \oplus \cB_0(q\cK)$  which has the form as in Proposition \ref{prop3.5}
    with $V_0$, $V_1$, and $V_2$ given as above and where
     $M: V_1 \to V_2$  is the well-defined   linear isomorphism given by
   \begin{equation} \label{eq_def_M} M  =  -   2 \pi  k (\star_K \circ \pi \circ d_{q\cK})_{| V_1}
   \end{equation}

\item The function  $F$ is bounded and uniformly continuous

 \item In order to see that $v$ is an element of $V_2$
  we consider the linear map $m_{\bR}:  C^0(q\cK,\bR) \to  C^1(q\cK,\bR)$
 which is given by
 $$m_{\bR} := \star_K \circ \pi \circ  d_{q\cK}$$
  where
 $d_{q\cK}:C^0(q\cK,\bR) \to  C^1(q\cK,\bR)$,
 $\pi: C^1(q\cK,\bR) \to C^1(K,\bR)$,   and  $\star_{K}:C^1(K,\bR) \to  C^1(K,\bR)$
 are the ``real analogues'' of the three maps appearing on the RHS of Eq. \eqref{eq_def_M} above.
   From Lemma \ref{lem2_pre} in Sec. \ref{subsec5.3} below it follows that
for each $ {\mathfrak
l}^{i}_{\Sigma} \in C_1(q\cK) = C^1(q\cK,\bR)$
 there is a $f_i \in C^0_{\aff}(q\cK,\bR)$
 such that
\begin{equation} \label{eq5.28}
  {\mathfrak l}^{i}_{\Sigma} =   m_{\bR} \cdot f_i
 \end{equation}
 holds.   Moreover,    $f_i$ can be chosen to be constant on $U(\sigma_0) \cap \face_0(q\cK)$.
   In the following we will assume that $f_i$ is chosen in this way.
   (According to Lemma \ref{lem2_pre} below this determines $f_i$ uniquely up to an additive
   constant   which will be fixed later using a suitable normalization
   condition, see Eq. \eqref{eq5.28''} and  Eq. \eqref{eq5.28'}    below).
      Clearly, this implies that
\begin{equation} \label{eq_def_v} v =  2 \pi   \sum_i \alpha_i \cdot  {\mathfrak l}^{i}_{\Sigma} =
   2 \pi   \sum_i \alpha_i \cdot m_{\bR} \cdot f_i   = (\star_K \circ \pi \circ d_{q\cK}) \cdot \bigl( 2 \pi   \sum_i \alpha_i \cdot  f_i\bigr)
   \end{equation}
   is indeed an element of $V_2$.

\end{enumerate}

\noindent
Applying Proposition \ref{prop3.5}  above  to the present situation we therefore obtain
\begin{align} \label{eq5.33} & \tfrac{1}{ Z^{disc}} \int_{\sim} F^{(s)}_{(\alpha_i)_i,y}(B)
  \exp\bigl(  2 \pi i \ll  A^{\orth}_c ,  \sum_{i} \alpha_i \cdot {\mathfrak l}^{i}_{\Sigma}  \gg_{\cA^{\orth}(q\cK)}  \bigr) \exp(iS^{disc}_{CS}(A^{\orth}_c,B)) (DA^{\orth}_c  \otimes DB)   \nonumber \\
  & = \int_{\sim} F(B)
  \exp\bigl(   i \ll  A^{\orth}_c , v  \gg_{\cA^{\orth}(q\cK)}  \bigr)
    d\nu^{disc}(A^{\orth}_c,B)   \nonumber \\
 & \quad \quad \overset{(+)}{\sim}     \int^{\sim}_{V_0}
  F( x_0 - M^{-1} v)  dx_0  \overset{(*)}{\sim}
   \int^{\sim}_{\ct}    F^{(s)}_{(\alpha_i)_i,y}(b - M^{-1} v) db
\end{align}
where the two improper integrals $\int^{\sim} \cdots dx_0$ and
$\int^{\sim} \cdots db$ are defined\footnote{that the last (and therefore also the first)
of these two improper integrals is in fact well-defined follows from Remark \ref{rm_last_sec3} above  and a ``periodicity argument'', which will be given in  Step 4 below}
 according to Convention \ref{conv3.1} in Sec. \ref{sec3} above.
In Step $(+)$ we have applied Proposition \ref{prop3.5}.
Step $(*)$ above follows because $F( x_0 - M^{-1} v)$ depends only on the $\cB_{c}(q\cK)$-component of
$x_0 \in V_0 =  \cB_{c}(q\cK) \oplus (V_2)^{\orth}$
and because $\cB_{c}(q\cK) = \{B \in \cB(q\cK) \mid B \text{ is constant } \} \cong  \ct  $.

\medskip

Recall that $f_i$ as introduced above is uniquely determined
up to an additive constant. We can fix this constant by
demanding that the normalization condition
 \begin{equation} \label{eq5.28''} \sum_{x \in \face_0(q\cK)} f_i(x) = 0
\end{equation}
is fulfilled. With this normalization condition we obtain
 $    \sum_i \alpha_i \cdot f_i  \in V_1 = (\cB_c(q\cK))^{\orth}$.
 From this and Eq. \eqref{eq_def_v} above  we see that
\begin{equation} \label{eq5.34} M^{-1} v =    - \tfrac{1}{k} \sum_i \alpha_i \cdot f_i
\end{equation}
\noindent  Combining  Eqs. \eqref{eq5.22}, \eqref{eq5.21}, \eqref{eq5.33}, and  \eqref{eq5.34}
 we obtain
\begin{multline} \label{eq5.35}
\WLO^{disc}_{rig}(L)
    \sim \lim_{s \to 0} \sum_{(\alpha_i)_i \in \Lambda^m}  \bigl( \prod_i  m_{\chi_i}(\alpha_i) \bigr)
 \sum_{y \in I} \\
 \times      \int^{\sim}_{\ct} db \ \biggl[ \exp\bigl( - 2 \pi i k    \langle y, B(\sigma_0)  \rangle \bigr)  \bigl( \prod_{x}
1^{(s)}_{\ct_{reg}}(B(x)) \bigr) \\
\times  \bigl( \prod_i \exp(2 \pi i \langle \alpha_i, \Phi_i(B) \rangle )  \bigr) \Det^{disc}(B)
   \biggr]_{| B  = b + \tfrac{1}{k}      \sum_i \alpha_i    f_i }
    \end{multline}
   Apart from the remaining
$\int^{\sim}_{\ct} \cdots db$-integration
 (which will be taken care of in Step 4 below) we have now completed
the evaluation of the  $\int\nolimits_{\sim}  \cdots
\exp(iS^{disc}_{CS}(A^{\orth}_c,B)) (DA^{\orth}_c  \otimes DB)$-integral in  Eq.
\eqref{eq5.14}.

\begin{remark}   \label{rm4.1} \rm
 It is not difficult to see\footnote{this follows from Eq. \eqref{eq1_lastrmsec3}
 in Remark \ref{rm_last_sec3} above
 and the periodicity properties of the integrand in
 $\int^{\sim}_{\ct} \cdots db$ (for fixed $y$ and $\alpha_1, \ldots, \alpha_m$),
 cf.  Step 4 below} that
Eq. \eqref{eq5.35} also holds   if we (re)define $f_i$
using  the following normalization condition (instead of the normalization condition \eqref{eq5.28''} above):
 \begin{equation} \label{eq5.28'}  f_i(\sigma_0)=0
\end{equation}
\end{remark}

Since condition \eqref{eq5.28'} is technically more convenient than \eqref{eq5.28''}
we will use the latter normalization condition in  the following, i.e. we assume that
$f_i$ is defined as in \eqref{eq5.28} above in combination with \eqref{eq5.28'}.

\subsection{Step 3: Some simplifications}
\label{subsec5.3}

The next lemma will prove a claim made in Sec. \ref{subsec5.2} above
and it will also allow us to simplify the RHS of  Eq. \eqref{eq5.35} above.

\begin{lemma} \label{lem2_pre}
Assume that $ {\mathfrak l}^i_{\Sigma}  \in C_1(q\cK) \cong C^1(q\cK,\bR)$ is as in Eq. \eqref{eq_mathfrak_l_def} above. Then we have:
\begin{enumerate}
\item  There is a $f \in C^0_{\aff}(q\cK,\bR)$ such that
$ {\mathfrak l}^i_{\Sigma}  = m_{\bR} \cdot f $
and such that  $f$ is  constant on $U(\sigma_0) \cap \face_0(qK)$ (cf. the end of Sec. \ref{subsec4.10a} above).
Moreover, these properties determine $f$ uniquely up to an additive constant.

\item If $f$ is as in part i) of the present lemma then the map $f : \face_0(q\cK) \ni \sigma \mapsto f(\sigma) \in \bR$  is constant on $\arc(l^j_{\Sigma}) \cap \face_0(q\cK)$ and on
 $\arc(l^{'j}_{\Sigma}) \cap \face_0(q\cK) $ for all $j \le m$.
\end{enumerate}
\end{lemma}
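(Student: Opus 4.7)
The plan is to construct $f$ explicitly as the indicator function of the closed inner disk bounded by $l^{'i}_\Sigma$, verify the three required properties directly, and then deduce uniqueness up to an additive constant from Eq.~\eqref{eq_obs1}. Using (NH)' together with the fact that $R^i_\Sigma$ is an embedded annulus (by (NCP)'), I would first choose open disks $D^{\mathrm{in}}_i, D^{\mathrm{out}}_i \subset \Sigma$ with $\partial D^{\mathrm{in}}_i = \arc(l^{'i}_\Sigma)$, $\partial D^{\mathrm{out}}_i = \arc(l^i_\Sigma)$, $D^{\mathrm{in}}_i \subset D^{\mathrm{out}}_i$, and $R^i_\Sigma = \overline{D^{\mathrm{out}}_i} \setminus D^{\mathrm{in}}_i$ (interchanging the roles of $l^i_\Sigma$ and $l^{'i}_\Sigma$ if needed so that $\sigma_0 \notin \overline{D^{\mathrm{out}}_i}$); then I set $f\colon \face_0(q\cK) \to \bR$ with $f(x) = 1$ if $x \in \overline{D^{\mathrm{in}}_i}$ and $f(x) = 0$ otherwise.

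Next I would verify the three properties for (i). For a tetragon $F \in \face_2(q\cK)$ with $F \not\subset R^i_\Sigma$, (FC2) forces all four vertices of $F$ to lie on a single side of the ribbon, so $f$ is constant on them and the affine condition is trivial; for $F \subset R^i_\Sigma$, (FC3) identifies one side of $F$ on $\arc(l^i_\Sigma)$ and the opposite side on $\arc(l^{'i}_\Sigma)$, which forces the diagonally opposite pair $\{p_1, p_4\}$ to consist of a $K_1$-vertex $v$ on $l^i_\Sigma$ (so $f(v) = 0$) and a $K_2$-vertex $b_F$ on $l^{'i}_\Sigma$ (so $f(b_F) = 1$), while $\{p_2, p_3\}$ consists of two midpoints, one on $\arc(l^i_\Sigma)$ and the other on $\arc(l^{'i}_\Sigma)$, giving $f(p_1)+f(p_4) = f(p_2)+f(p_3) = 1$. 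Constancy of $f$ on $U(\sigma_0) \cap \face_0(q\cK)$ holds as soon as $U(\sigma_0) \cap \partial R^i_\Sigma = \emptyset$, which follows from $\sigma_0 \notin \bigcup_j \Image(R^j_\Sigma)$ for a sufficiently fine mesh. For $m_\bR f = \mathfrak{l}^i_\Sigma$, I would use the fact (derivable from (FC5) as in Sec.~\ref{subsec5.0}) that both boundary loops traverse full $K$-edges pairwise, so that $\mathfrak{l}^i_\Sigma$ lies in $C^1(K) \subset C^1(q\cK)$ with coefficient $\tfrac{1}{2}$ on each full $K$-edge of $l^i_\Sigma$ or $l^{'i}_\Sigma$; a direct calculation gives $(\pi d_{q\cK} f)(e) = \tfrac{1}{2}(f(p_2) - f(p_1))$ for a $K_1$-edge $e = p_1 p_2$ (and the symmetric formula for $K_2$-edges), which is non-zero precisely for the $K_1$-edges on $\arc(l^i_\Sigma)$ with value $\pm \tfrac{1}{2}$, and applying $\star_K$ transfers this onto the corresponding dual $K_2$-edges on $\arc(l^{'i}_\Sigma)$.

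Uniqueness up to an additive constant is then immediate: if $f, f'$ both satisfy (i), then $f - f'$ belongs to the $\bR$-valued analogue of $\cB^{loc}_{\sigma_0}(q\cK)$ and lies in $\ker(\pi \circ d_{q\cK})$ (since $\star_K$ is invertible), so Eq.~\eqref{eq_obs1} forces it to be constant. Part (ii) for $j=i$ follows at once from the construction, since $f$ is identically $0$ on $\arc(l^i_\Sigma) \cap \face_0(q\cK)$ and identically $1$ on $\arc(l^{'i}_\Sigma) \cap \face_0(q\cK)$; for $j \neq i$, (NCP)' and (FC1) imply that $\arc(l^j_\Sigma)$ and $\arc(l^{'j}_\Sigma)$ are connected subsets of $\Sigma$ disjoint from $\Image(R^i_\Sigma)$, hence each lies entirely in one of the two components $D^{\mathrm{in}}_i$ or $\Sigma \setminus \overline{D^{\mathrm{out}}_i}$, on each of which $f$ is constant.

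The main obstacle will be the sign and orientation bookkeeping in the verification of $m_\bR f = \mathfrak{l}^i_\Sigma$: the definition of $\star_K$, the identification of $C_1(K)$ with a subspace of $C_1(q\cK)$ via $\psi$, and the coorientations of $l^i_\Sigma$ and $l^{'i}_\Sigma$ must all combine consistently to produce the correct coefficient $\tfrac{1}{2}$ on each full $K$-edge underlying $\mathfrak{l}^i_\Sigma$; however, the underlying geometric mechanism---that $f$ is a discrete potential whose coboundary, pushed through $\star_K \circ \pi$, recovers the averaged boundary loop of the ribbon---is transparent.
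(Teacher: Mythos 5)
Your construction and overall route coincide with the paper's: the paper's $f$ (Eq. \eqref{eq_def_f}) takes the value $c$ on the closure of one of the two outer components $C_1,C_2$ of $\Sigma\setminus(\arc(l^i_\Sigma)\cup\arc(l^{'i}_\Sigma))$ and $c\pm1$ on the other, affineness is checked via (FC3) on the tetragons inside the ribbon exactly as you do, constancy on $U(\sigma_0)$ follows from $\sigma_0\notin\Image(R^i_\Sigma)$, uniqueness comes from the real analogue of Eq. \eqref{eq_obs1} together with the bijectivity of $\star_K$, and part ii) is argued as in your last paragraph. However, the step you yourself single out as delicate contains a genuine error as written: $(\pi\, d_{q\cK}f)(e)=\tfrac12(f(p_2)-f(p_1))$ is \emph{not} non-zero for the full $K_1$-edges lying on $\arc(l^i_\Sigma)$ --- such an edge has both endpoints on $\arc(l^i_\Sigma)$, hence in the closure of the same complementary component, so the difference vanishes there. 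The full $K_1$-edges carrying the coefficient $\pm\tfrac12$ are the ones crossing the ribbon transversally, i.e. those whose midpoint lies on $\arc(l^{'i}_\Sigma)$; their two $K_1$-endpoints lie one in $\overline{C_1}$ and one in $\overline{C_2}$, and their $\star_K$-duals are exactly the full $K_2$-edges along $\arc(l^{'i}_\Sigma)$. Symmetrically, the transverse $K_2$-edges (those with midpoint on $\arc(l^i_\Sigma)$) are dual to the $K_1$-edges along $\arc(l^i_\Sigma)$. With this corrected identification your conclusion --- that $\star_K\circ\pi\circ d_{q\cK}$ applied to $f$ reproduces ${\mathfrak l}^i_\Sigma$ with coefficient $\tfrac12$ on each full edge underlying the two boundary loops --- is precisely the content of the paper's Eq. \eqref{eq_star_pi_e}, which works on individual $q\cK$-edges: $d_{q\cK}f$ is supported on the edges whose interior lies in the open ribbon interior $C_0$, and $\star_K\circ\pi$ rotates each of these into one generalized edge of $l^i_\Sigma$ or $l^{'i}_\Sigma$.

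Two smaller points. The normalization $f\equiv1$ on the inner region cannot be correct for both coorientations of the ribbon; as in Eq. \eqref{eq_def_f} the jump must be $+1$ or $-1$ according to whether the normal $\star(\tfrac{d}{dt}l^i_\Sigma(t))$ points into the ribbon, otherwise you obtain $-{\mathfrak l}^i_\Sigma$ (you flagged this). Also, on a surface of positive genus only one of the two complementary components of a null-homotopic embedded annulus need be a disk, so the nested picture $D^{\mathrm{in}}_i\subset D^{\mathrm{out}}_i$ with both regions disks is not always available; this is harmless, because the argument only uses the two closed complementary regions, not that they are disks, and no interchange of roles is needed for constancy on $U(\sigma_0)$.
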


\begin{proof}   From conditions (NCP)' and (NH)' on the simplicial ribbon link $L$ it follows that
 there are  three connected components $C_0$, $C_1$ and $C_2$ of $\Sigma \backslash (\arc(l^i_{\Sigma}) \cup \arc(l^{'i}_{\Sigma}))$. In the following we assume that these three connected components
 are given as in  Fig. \ref{fig_step3} below.

 \begin{figure}[h]
\begin{center}
  \includegraphics[height=6 cm,width=9 cm]{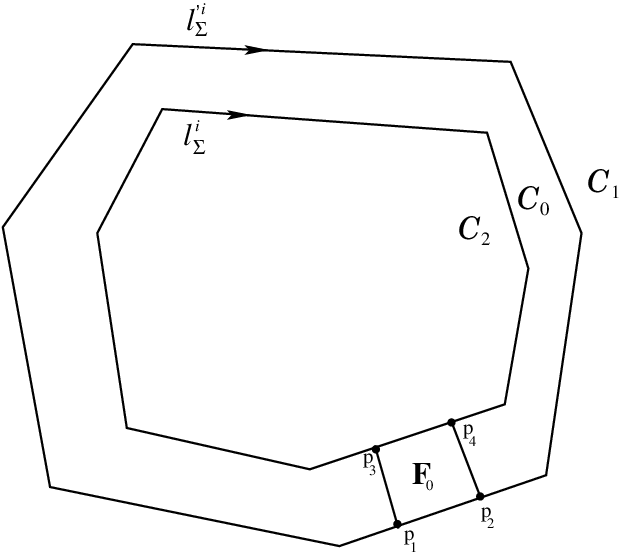}
\caption{}  \label{fig_step3}
\end{center}
\end{figure}

It follows from  Condition (FC2) in Sec. \ref{subsec5.0} that $ \face_0(qK) \cap C_0 = \emptyset$,
or, in other words, that
$\face_0(qK) \subset  \overline{C_1} \cup \overline{C_1}$.
Accordingly, the map  $f: \face_0(q\cK) \to \bR$ given by
\begin{equation} \label{eq_def_f}
f(p):=
 \begin{cases} c & \text{ if } p \in \overline{C_1} \\
c \pm 1 & \text{ if } p \in \overline{C_2} \\
\end{cases}
\end{equation}
for all $p \in \face_0(q\cK)$ is well-defined.
Here $c \in \bR$ is an arbitrary constant which will be kept fixed in the following
and  the sign  $\pm$ is ``$+$'' if for any $t \in [0,1]$ in which  $l^i_{\Sigma}$ is differentiable
the  normal vector $n = \star (\tfrac{d}{dt} l^i_{\Sigma}(t))$
``points into''  $C_0$ and ``$-$'' otherwise.

\medskip

Let us first verify that $f \in C^0_{\aff}(q\cK,\bR)$.
Let $F \in \face_2(q\cK)$ and let $p_1,p_2,p_3,p_4$ be the four vertices of $F$
enumerated such that $p_1$ is diagonal to $p_4$ (and therefore $p_2$ is diagonal to $p_3$).
We have to show that $f(p_1)+f(p_4) = f(p_2)+f(p_3)$  (cf. Eq. \eqref{eq_for_aff_B}).
If $F \subset \overline{C_1}$ or $F \subset \overline{C_2}$
this is obvious. If $F \subset \overline{C_0}$ (for example $F=F_0$ where $F_0$ is as in Fig. \ref{fig_step3}  above)
then Condition (FC3) implies that  $f(p_1)+f(p_4) = c + (c \pm 1) =  f(p_2)+f(p_3)$. \par

Recall from Sec. \ref{subsec4.9} that $\sigma_0 \in \face_0(q\cK)$ was chosen such that
$\sigma_0 \notin \Image(R^i_{\Sigma})$.
This implies that $\sigma_0 \in C_1$ or $\sigma_0 \in C_2$
and from the definition of $U(\sigma_0)$ we obtain
$U(\sigma_0) \subset \overline{C_1}$ or $U(\sigma_0) \subset \overline{C_2}$
so Eq. \eqref{eq_def_f} implies  that $f$ is constant on $U(\sigma_0) \cap \face_0(qK)$.\par
 The uniqueness part of the assertion follows by combining the definition of $m_{\bR}$
 with   the real analogue
 of Eq. \eqref{eq_obs1} in Sec. \ref{subsec4.10a} above and
 the fact that  $\star_{K}:C^1(K,\bR) \to  C^1(K,\bR)$
  is a bijection.

\smallskip

In order to conclude the proof of part i) of  Lemma \ref{lem2_pre}
we have to show that
\begin{equation} \label{eq_lemma3_crucial}
 {\mathfrak l}^i_{\Sigma}  = m_{\bR} \cdot f = \star_K(\pi(d_{q\cK}f))
 \end{equation}
Observe first that
$(d_{q\cK} f)(e) = 0$ unless the interior of $e \in \face_1(q\cK)$ is contained in $C_0$.
In the latter case we have  $(d_{q\cK} f)(e) =  \pm \sgn(e)$
where the sign $\pm$  is the same as in Eq. \eqref{eq_def_f} above and
where $\sgn(e) = 1$ if the (oriented) edge  $e$   ``points from'' the region $C_1$ to the region $C_2$ and $\sgn(e) = -1$ otherwise. \par

Next observe that for every $e \in \face_1(q\cK)$ whose interior is contained in $C_0$
there exists an index $k \le n$  such that\footnote{from the definition of $q\cK$ it follows
that for every edge $e$ in $q\cK$ exactly one endpoint is in
$\face_0(K_1 | K_2)$ and the other endpoint is either in $\face_0(K_1)$ or  in $\face_0(K_2)$}
\begin{equation} \label{eq_star_pi_e}  \star_K (\pi(\pm \sgn(e) \cdot e)) =
\begin{cases}  \bar{l}^{i(k)}_{\Sigma} & \text{ if $e$ has an endpoint in $\face_0(K_2)$} \\
  \bar{l}^{'i(k)}_{\Sigma} & \text{ if $e$ has an endpoint in $\face_0(K_1)$} \\
\end{cases}
\end{equation}
and that  the map\footnote{here $0$ is the zero element of $C_1(K)$}
$$ \psi:\{ e  \in \face_1(q\cK) \mid \text{the interior of $e$ is contained in $C_0$} \}  \to
 ( \{  \bar{l}^{i(k)}_{\Sigma} \mid  k \le n \} \cup \{  \bar{l}^{'i(k)}_{\Sigma} \mid  k \le n \}) \backslash \{0\} $$
given by $\psi(e) =  \star_K (\pi(\pm \sgn(e) \cdot e))$ for all $e \in \dom(\psi)$ is a bijection.

\smallskip

Finally, observe that the sum of the elements of the finite subset\footnote{in order to avoid confusion
recall that according to the standard convention
for sets we have $\{a,a\} = \{a\}$, $\{a,a,b,b\} = \{a,b\}$ etc.
Since $\bar{l}^{i(2k-1)}_{\Sigma} = \bar{l}^{i(2k)}_{\Sigma}$ for $k \le n/2$ this means that
$ \{  \bar{l}^{i(k)}_{\Sigma} \mid  k \le n \} =  \{  \bar{l}^{i(k)}_{\Sigma} \mid  k \le n, k \text{ odd} \}$,
which implies  that the sum of the elements of  $\{  \bar{l}^{i(k)}_{\Sigma} \mid  k \le n \}$
equals  $\sum_{k \le n, k \text{ odd}} \bar{l}^{i(k)}_{\Sigma} = \tfrac{1}{2} \sum_{k=1}^n \bar{l}^{i(k)}_{\Sigma}$.
A similar remark applies to the sum of the set of the elements of  $\{  \bar{l}^{'i(k)}_{\Sigma} \mid  k \le n \}$}
$ \{  \bar{l}^{i(k)}_{\Sigma} \mid  k \le n \} \cup \{  \bar{l}^{'i(k)}_{\Sigma} \mid  k \le n \}$
of $C_1(K) \subset C_1(q\cK)$ equals  $ {\mathfrak l}^i_{\Sigma}$,  cf. Eq. \eqref{eq_mathfrak_l_def} above.
From this  Eq. \eqref{eq_lemma3_crucial} follows.

\medskip

It remains  to show part ii) of  Lemma \ref{lem2_pre}.
Recall that according to Eq. \eqref{eq_def_f},  $f$ is constant on $\face_0(qK) \cap \overline{C_1}$ and
 also on $\face_0(qK) \cap \overline{C_2}$.
 From condition (FC1) and condition (FC2) it follows that for each $j \neq i$  the set
$\arc(l^j_{\Sigma})$ (considered as a subset of $\Sigma$)
 either lies entirely in $C_1$
or entirely in $C_2$.
Similarly,  it follows that for each $j \neq i$ the set
$\arc(l^{'j}_{\Sigma})$ either lies entirely in $ C_1$
or in $C_2$.
Since we also have $\arc(l^i_{\Sigma}) = \partial C_2 \subset \overline{C_2}$
and $\arc(l^{'i}_{\Sigma}) = \partial C_1 \subset \overline{C_1}$
 part ii) of  Lemma \ref{lem2_pre} now follows.
\end{proof}

 \begin{corollary} \label{obs3}  Let
 $B \in \cB(q\cK)$
   be of the form
\begin{equation} \label{eq5.37}
  B= b + \tfrac{1}{k}  \sum_{i=1}^m \alpha_i  f_i
 \end{equation}
with $b \in \ct$, $\alpha_i \in \Lambda$ and where $f_i$ is given by
Eq. \eqref{eq5.28} above in combination with \eqref{eq5.28'}.
 Then the map $\face_0(q\cK) \ni \sigma \mapsto B(\sigma) \in \ct$  is constant on
  $\arc(l^j_{\Sigma}) \cap \face_0(q\cK)$ and on
 $\arc(l^{'j}_{\Sigma}) \cap \face_0(q\cK) $ for all $j \le m$.
 \end{corollary}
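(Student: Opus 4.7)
The plan is to derive Corollary~\ref{obs3} as an essentially immediate consequence of part ii) of Lemma~\ref{lem2_pre}. The key observation is that the expression for $B$ is a $\ct$-valued linear combination of $b$ (which, viewed as a function on $\face_0(q\cK)$, is globally constant) and the real-valued functions $f_1, \ldots, f_m$, each weighted by $\alpha_i/k \in \Lambda/k \subset \ct$.

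First I would fix an arbitrary $j \le m$ and consider the subset $\arc(l^j_\Sigma) \cap \face_0(q\cK)$ of $\face_0(q\cK)$. Applying Lemma~\ref{lem2_pre}(ii) to each index $i \le m$ (for the function $f_i$ associated to $\mathfrak l^i_\Sigma$), we obtain that $f_i$ is constant on this set; denote the corresponding constant by $c_i^j \in \bR$. Then for every $\sigma \in \arc(l^j_\Sigma) \cap \face_0(q\cK)$ we have
\begin{equation*}
B(\sigma) \;=\; b + \tfrac{1}{k}\sum_{i=1}^m \alpha_i\, f_i(\sigma) \;=\; b + \tfrac{1}{k}\sum_{i=1}^m c_i^j\, \alpha_i,
\end{equation*}
which is manifestly independent of $\sigma$. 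The argument for $\arc(l^{'j}_\Sigma) \cap \face_0(q\cK)$ is identical, using the second half of Lemma~\ref{lem2_pre}(ii) to produce constants $c_i^{'j}$ with the analogous property.

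There is no real obstacle here: the content of Corollary~\ref{obs3} is purely that the constancy property of each $f_i$ (established in the lemma) transfers, by linearity, to the $\ct$-valued combination $B$. In particular, Lemma~\ref{lem2_pre}(ii) already absorbs the geometric work (use of conditions (NCP)', (NH)', (FC1)--(FC3)) needed to see that the functions $f_i$ arising from the simplicial reconstruction of $\mathfrak l^i_\Sigma$ do not "cross" any loop $l^j_\Sigma$ or $l^{'j}_\Sigma$. Since $b$ contributes a $\sigma$-independent term, combining the two observations yields the claim at once.
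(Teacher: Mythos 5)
Your proposal is correct and is exactly the argument the paper intends: the corollary is stated without a separate proof precisely because it follows from Lemma \ref{lem2_pre}(ii) applied to each $f_i$, plus the linearity of $B = b + \tfrac{1}{k}\sum_i \alpha_i f_i$ in the $f_i$ and the $\sigma$-independence of the term $b$. Nothing is missing.
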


Let us set
\begin{align} \label{eq5.36} \sigma_i & :=  \start l^{i(1)}_{\Sigma}   \in
\face_0(q\cK), \quad \quad
 \sigma'_i  :=  \start l^{'i(1)}_{\Sigma}   \in \face_0(q\cK)
\end{align}
According to Corollary  \ref{obs3}  we have
\begin{subequations}
\begin{equation} \label{eq5.38} B(\sigma_i) =
B(\start l^{i(k)}_{\Sigma}) \quad \forall k \le n
\end{equation}
\begin{equation}  B(\sigma'_i) =  B(\start l'^{i(k)}_{\Sigma}) \quad \forall k \le n
\end{equation}
\end{subequations}
for every $B$ of the form in Eq. \eqref{eq5.37}.
  Next observe that
\begin{equation} \label{eq5.39}  \eps_i:= \wind(l^{i}_{S^1})   =  \sum_k dt^{(N)}(l^{i(k)}_{S^1})
\end{equation}
 where $ \wind(l^{i}_{S^1})$ is the winding number of $l^{i}_{S^1}$.\par

 Combining  Eq. \eqref{eq5.35} and Eq. \eqref{eq5.17}   with
 Eqs. \eqref{eq5.38} -- \eqref{eq5.39} and taking into account
the normalization condition \eqref{eq5.28'} appearing at the end of Sec. \ref{subsec5.2}
   we obtain
\begin{multline} \label{eq5.41}
 \WLO^{disc}_{rig}(L) \sim \lim_{s \to 0}\sum_{(\alpha_i)_i \in \Lambda^m}  \bigl( \prod_i  m_{\chi_i}(\alpha_i) \bigr) \sum_{y \in I}  \\
 \times  \biggl( \int^{\sim}_{\ct} db \   \biggl[
 \exp\bigl( - 2 \pi i k  \langle y, b  \rangle \bigr)
   \bigl( \prod_{x} 1^{(s)}_{\ct_{reg}}(B(x)) \bigr) \\
  \times  \bigl(   \prod_{\iprime} \exp( \pi i \eps_{\iprime}  \langle \alpha_{\iprime},
 B(\sigma_{\iprime}) + B(\sigma'_{\iprime})  \rangle )
\Det^{disc}(B)  \bigr)  \biggr]_{| B  = b + \tfrac{1}{k}
    \sum_i \alpha_i f_i } \biggr)
\end{multline}
\noindent Setting
 \begin{multline} \label{eq5.49}
 F^{(s)}_{(\alpha_i)_i}(b) :=  \bigl[ \bigl(  \prod_{x}
1^{(s)}_{\ct_{reg}}(B(x)) \bigr) \\
 \times \bigl(  \prod_{\iprime} \exp(   \pi i \eps_{\iprime}  \langle \alpha_{\iprime},
 B(\sigma_{\iprime}) + B(\sigma'_{\iprime})  \rangle )  \Det^{disc}(B)  \bigr) \bigr]_{| B  = b + \tfrac{1}{k}
   \sum_i \alpha_i f_i }
\end{multline}
 we can rewrite    Eq. \eqref{eq5.41} as
\begin{equation}  \label{eq5.50} \WLO^{disc}_{rig}(L)
 \sim \lim_{s \to 0} \sum_{(\alpha_i)_i \in \Lambda^m}
 \bigl( \prod_i  m_{\chi_i}(\alpha_i) \bigr)   \sum_{y \in I}
  \int^{\sim}_{\ct} db \   e^{ - 2\pi  i k \langle y,  b \rangle}   F^{(s)}_{(\alpha_i)_i}(b)
\end{equation}

\subsection{Step 4: Performing the  remaining limit procedures
 $\int^{\sim} \cdots db$,  $\sum_{y \in I}$,   and $s \to 0$
 in Eq. \eqref{eq5.50}}
\label{subsec5.4}

\noindent  i) Let us first rewrite the  $\int^{\sim} \cdots db$ integral.
The crucial observation is that for fixed $y \in I$, $s > 0$, and $(\alpha_i)_i \in \Lambda^m$
the function
$\ct \ni b \mapsto e^{ - 2\pi  i k \langle y,  b \rangle}   F^{(s)}_{(\alpha_i)_i}(b)
\in \bC$ is invariant under all translations of the form
$b \mapsto b + x$ where $x \in I = \ker(\exp_{| \ct}) \cong \bZ^{\dim(\ct)}$.

Indeed, for all $b \in \ct$ and $x \in I$ we have
\begin{subequations}  \label{eq5.51}
\begin{align}
1^{(s)}_{\ct_{reg}}(b + x) & = 1^{(s)}_{\ct_{reg}}(b)\\
e^{ 2\pi  i \eps \langle \alpha,  b + x \rangle} & = e^{ 2\pi  i \eps \langle \alpha,  b \rangle}
\quad \text{  for all $\alpha \in \Lambda$, $\eps \in \bZ$} \\
\det\nolimits^{1/2}(1_{\ck} - \exp(\ad(b + x))_{| \ck})  & =  \det\nolimits^{1/2}(1_{\ck} - \exp(\ad(b))_{| \ck})\\
e^{ - 2\pi  i k \langle y,  b + x \rangle} & = e^{ - 2\pi  i k \langle y,  b\rangle} \quad \text{
 for all $y \in I$}
\end{align}
\end{subequations}

    The second  equation  follows because the assumption that
 $G$ is simply-connected implies that
\begin{equation} \label{eq5.62} I =  \Gamma
\end{equation}
where $\Gamma \subset \ct$ is the lattice generated by the real coroots and, by definition,
$\Lambda$ is the lattice dual to $\Gamma$.
 The first of these four equations   follows from the assumption in Sec. \ref{subsec4.6}
 that $1^{(s)}_{\ct_{reg}}$ is invariant under $\cW_{\aff}$ (and using again Eq. \eqref{eq5.62}).
 The third  equation follows because (cf. Eq. \eqref{eq_def_det1/2} above)
\begin{align*}
&  \det\nolimits^{1/2}\bigl(1_{{\ck}}-\exp(\ad({b+x}))_{|{\ck}}\bigr) =
 \prod_{{\alpha} \in {\cR_+}}  \bigl( 2  \sin( \pi  \langle \alpha, b +x \rangle) \bigr) \\
& = (-1)^{\sum_{\alpha \in \cR_+} \langle \alpha,x\rangle }   \prod_{{\alpha} \in {\cR_+}}  \bigl( 2  \sin( \pi  \langle \alpha, b \rangle) \bigr)\\
& \overset{(*)}{=}  \prod_{{\alpha} \in {\cR_+}}  \bigl( 2  \sin( \pi  \langle \alpha, b \rangle) \bigr)
= \det\nolimits^{1/2}\bigl(1_{{\ck}}-\exp(\ad({b}))_{|{\ck}}\bigr)
\end{align*}
where in step $(*)$ we used $\sum_{\alpha \in \cR_+} \langle \alpha,x\rangle  = 2 \langle \rho, x \rangle
\in 2 \bZ$ since $\rho \in \Lambda$.
 Finally, in order to see that the fourth equation holds, observe
that because of \eqref{eq5.62}  it is enough to show that
\begin{equation} \label{eq_CartanMatrix} \langle \Check{\alpha}, \Check{\beta} \rangle \in \bZ \quad \text{ for all coroots  $\Check{\alpha}, \Check{\beta}$}
\end{equation}
According to the general theory of semi-simple Lie algebras  we have
 $2\tfrac{\langle \Check{\alpha}, \Check{\beta} \rangle}{\langle \Check{\alpha}, \Check{\alpha} \rangle}
\in \bZ$. Moreover,  there are at most two different (co)roots lengths
and  the quotient between the square lengths of the long and short coroots is either 1, 2, or 3.
Since  the normalization of $\langle
\cdot,  \cdot \rangle$ was chosen such that
$\langle \Check{\alpha}, \Check{\alpha} \rangle = 2$ holds
if $\Check{\alpha}$ is a short coroot we therefore have
$\langle \Check{\alpha}, \Check{\alpha} \rangle/2 \in \{1,2,3\}$
and \eqref{eq_CartanMatrix} follows.

  \medskip

From Eqs. \eqref{eq5.49},  \eqref{eq5.51},  and Eq. \eqref{eq5.66} below
 we conclude that
 $\ct \ni b \mapsto e^{ - 2\pi  i k \langle y,  b \rangle}   F^{(s)}_{(\alpha_i)_i}(b)
\in \bC$ is indeed $I$-periodic
and we can therefore apply  Eq. \eqref{eq2_lastrmsec3} in Remark \ref{rm_last_sec3} above  and  obtain
\begin{equation}  \label{eq5.57}
  \int^{\sim} db \
 e^{ - 2\pi  i k \langle y,  b \rangle}  F^{(s)}_{(\alpha_i)_i}(b)
 \sim  \int_{Q} db \  e^{ - 2\pi  i k \langle y,  b \rangle}  F^{(s)}_{(\alpha_i)_i}(b)
\end{equation}
where on the RHS  $\int_{Q} \cdots db$ is now an ordinary integral
and where  we have set
\begin{equation}  \label{eq5.54}
Q:= \{ \sum_i \lambda_i e_i \mid \lambda_i \in (0,1) \text{ for all $i \le m$}   \} \subset \ct,
\end{equation}
Here  $(e_i)_{i \le m}$ is an (arbitrary) fixed basis of $I$.\par

According to Eq. \eqref{eq5.57} we can now rewrite Eq. \eqref{eq5.50} as
\begin{equation}  \label{eq5.50b} \WLO^{disc}_{rig}(L)
 \sim \lim_{s \to 0} \sum_{(\alpha_i)_i \in \Lambda^m}
 \bigl( \prod_i  m_{\chi_i}(\alpha_i) \bigr)   \sum_{y \in I}
  \int_Q db \   e^{ - 2\pi  i k \langle y,  b \rangle}   F^{(s)}_{(\alpha_i)_i}(b)
\end{equation}

\medskip

\noindent ii) We can now perform the infinite sum $\sum_y$
and the $\int \cdots db$-integral in Eq. \eqref{eq5.50b}: \par

First recall  that, due to Eq. \eqref{eq5.62} above and the definition of $\Lambda$,
 $\Lambda$ is dual to $I$.
According to the (rigorous)  Poisson summation formula for distributions we therefore have
\begin{equation} \label{eq5.61}
 \sum_{y \in I}   e^{ - 2\pi  i k \langle y,  b \rangle}
  = c_{\Lambda} \sum_{x \in \tfrac{1}{k} \Lambda} \delta_x(b)
  \end{equation}
where $\delta_x$ is the  delta distribution in $x \in \ct$ and $c_{\Lambda}$ a constant depending on the lattice $\Lambda$.
Let us now apply  Eq. \eqref{eq5.61} to the RHS  of Eq. \eqref{eq5.50b} above.
In order to see that this is possible note first that not only
 $F^{(s)}_{(\alpha_i)_i}$ is smooth
 but also the product $1_{Q}  F^{(s)}_{(\alpha_i)_i}$
 because  $\partial Q \subset \ct \backslash \ct_{reg}$
and because $F^{(s)}_{(\alpha_i)_i}$ vanishes on an open neighborhood of the set $\ct \backslash \ct_{reg}$
(cf.  the condition $\supp(1^{(s)}_{\ct_{reg}}) \subset \ct_{reg}$ in Sec. \ref{subsec4.6}
  so, according to the definition of  $F^{(s)}_{(\alpha_i)_i}$ and Eq. \eqref{eq5.28'},
   there is a factor $1^{(s)}_{\ct_{reg}}(b)$ appearing in $F^{(s)}_{(\alpha_i)_i}(b)$).
  Moreover, since $Q$ is bounded   $1_{Q}  F^{(s)}_{(\alpha_i)_i}$
 has compact support.  Thus we can
indeed apply Eq. \eqref{eq5.61} to the RHS  of Eq. \eqref{eq5.50b} above and
  we then obtain
\begin{equation}\label{eq5.63} \WLO^{disc}_{rig}(L)  \sim \lim_{s \to 0} \sum_{(\alpha_i)_i \in \Lambda^m}
\bigl( \prod_i  m_{\chi_i}(\alpha_i) \bigr)
 \sum_{b \in \tfrac{1}{k} \Lambda} 1_{Q}(b)  F^{(s)}_{(\alpha_i)_i}(b)
 \end{equation}

\noindent iii) Finally, let us also perform the $s \to 0$ limit.
Taking into account that $1^{(s)}_{\ct_{reg}} \to 1_{\ct_{reg}}$
 pointwise we  obtain from Eq.  \eqref{eq5.63} and Eq. \eqref{eq5.49} after  the change of variable
$b \to k b =: \alpha_0$
\begin{align} \label{eq5.64} & \WLO^{disc}_{rig}(L)   \sim \sum_{\alpha_0, \alpha_1, \ldots, \alpha_m \in \Lambda}
 1_{k Q}(\alpha_0)
 \bigl( \prod_{i=1}^m  m_{\chi_i}(\alpha_i) \bigr)     \nonumber \\
&  \quad \quad \times \bigl[ \bigl( \prod_{x}
1_{\ct_{reg}}(B(x)) \bigr)  \prod_{\iprime=1}^m \exp(    \pi i \eps_{\iprime} \langle \alpha_{\iprime},
 B(\sigma_{\iprime}) + B(\sigma'_{\iprime})  \rangle )
\Det^{disc}(B)  \bigr]_{| B  =  \tfrac{1}{k} ( \alpha_0 +    \sum_i \alpha_i f_i)}
\end{align}

\subsection{Step 5: Rewriting $\Det^{disc}(B)$ in Eq. \eqref{eq5.64}}
\label{subsec5.5}

In Steps 1--4 we have reduced the original ``path integral''
expression for $\WLO^{disc}_{rig}(L)$ to a ``combinatorial expression'', i.e. an
  expression which does not involve any limit procedure. Let
us now have a closer look at $\Det^{disc}(B)$, cf. Eq. \eqref{eq_def_Detdisc} in Step 1 above.

\smallskip

From assumptions (NCP)' and (NH)' above it follows
 that the set  of connected components of  $\Sigma \backslash
  \bigl(\bigcup_j  \arc(l^j_{\Sigma}) \bigr)$
  has exactly $m+1$ elements, which we will denote by
$Y_0, Y_1, \ldots Y_{m}$ in the following.
Moreover,  it follows that
also the set  of connected components of  $\Sigma \backslash \bigcup_j \Image(R^j_{\Sigma}) =
 \Sigma \backslash \bigcup_j \bigl(O_j \cup \arc(l^j_{\Sigma}) \cup \arc(l^{'j}_{\Sigma}) \bigr)$
  has exactly $m+1$ elements, which we will denote by
$Z_0, Z_1, \ldots Z_{m}$.
(Here $O_j \subset \Sigma$,  $j \le m$, denotes  the open ``region between''
$\arc(l^j_{\Sigma})$ and $\arc(l^{'j}_{\Sigma})$, cf. condition (FC2) in Sec. \ref{subsec5.0} above.)\par

In the following we assume without loss of generality that
the numeration of the $Z_i$, $i \le m$, was chosen such that
$$Z_i \subset Y_i \quad \forall i \in \{0,1,\ldots, m\}$$
holds. For later use we remark that\footnote{here $\overline{Z_i}$ denotes the closure of  $Z_i$}
 $\{ \overline{Z_i} \mid 0 \le i \le m\} \cup \{ O_i \mid i \le m\}$,
is a partition of $\Sigma$.
Thus  condition (FC2) implies  that
\begin{equation} \label{eq_disj_Ver}
\face_0(q\cK) = \bigsqcup\nolimits_{i=0}^m (\face_0(q\cK) \cap \overline{Z_i})
\end{equation}

Observe also that
\begin{equation} \label{eq_disj_Ver2}
\face_0(q\cK) = \face_0(K_1) \sqcup \face_0(K_1|K_2) \sqcup \face_0(K_2)
\end{equation}
(Here and in the following we use the notation  $S \sqcup T$ for the disjoint union of two sets $S$ and $T$).

\smallskip

In the following we assume that  $B \in \cB(q\cK)$  is of the form
\begin{equation}  \label{eq5.37b}
  B=  \tfrac{1}{k}\bigl( \alpha_0 + \sum_i \alpha_i f_i\bigr),
  \quad \text{with  $\alpha_0, \ldots, \alpha_m \in \Lambda$}
 \end{equation}
and with $f_i$ as in Eq. \eqref{eq5.28} above in combination with \eqref{eq5.28'}.

\begin{lemma} \label{lem2a} If $B \in \cB(q\cK)$ is of the form \eqref{eq5.37b}
then the restriction of $B: \face_0(q\cK) \to \ct$
to $\face_0(q\cK) \cap \overline{Z_i}$ is constant for each $i$.
\end{lemma}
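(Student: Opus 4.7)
The plan is to reduce the claim to showing that each $f_i$ is constant on $\face_0(q\cK)\cap\overline{Z_j}$, since the constant term $\alpha_0/k$ in Eq.~\eqref{eq5.37b} is already constant on all of $\face_0(q\cK)$.

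I will exploit the explicit description of $f_i$ given in the proof of Lemma~\ref{lem2_pre}: by Eq.~\eqref{eq_def_f}, $f_i$ takes only two values on $\face_0(q\cK)$, namely a constant $c_i$ on $\face_0(q\cK)\cap\overline{C_1^{(i)}}$ and $c_i\pm 1$ on $\face_0(q\cK)\cap\overline{C_2^{(i)}}$, where $C_0^{(i)}, C_1^{(i)}, C_2^{(i)}$ denote the three connected components of $\Sigma\setminus(\arc(l^i_\Sigma)\cup\arc(l'^i_\Sigma))$, with $C_0^{(i)}=O_i$ being the interior of $\Image(R^i_\Sigma)$ (cf.~condition (FC2)).

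The key geometric observation will be that $\arc(l^i_\Sigma)\cup\arc(l'^i_\Sigma)\subset \Image(R^i_\Sigma)$, and hence
\[
\Sigma\setminus\bigcup_k\Image(R^k_\Sigma)\ \subset\ \Sigma\setminus(\arc(l^i_\Sigma)\cup\arc(l'^i_\Sigma)).
\]
Since each $Z_j$ is by definition a connected subset of the left-hand side, it must be contained in a single connected component of the right-hand side. The component $C_0^{(i)}=O_i$ is ruled out because $O_i\subset \Image(R^i_\Sigma)$ is removed from $\Sigma$ on the left-hand side, so one has $Z_j\subset C_1^{(i)}$ or $Z_j\subset C_2^{(i)}$. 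Passing to closures gives $\overline{Z_j}\subset\overline{C_1^{(i)}}$ or $\overline{Z_j}\subset\overline{C_2^{(i)}}$, and in either case $f_i$ is constant on $\face_0(q\cK)\cap\overline{Z_j}$ by the two-valued description above.

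Applying this argument to each $i\le m$ and summing, $B = \tfrac{1}{k}(\alpha_0+\sum_i\alpha_i f_i)$ is constant on $\face_0(q\cK)\cap\overline{Z_j}$, proving the lemma. No serious obstacle is anticipated: the proof is essentially a bookkeeping argument combining the explicit two-valued form of $f_i$ from Lemma~\ref{lem2_pre} with elementary connectedness, and the only point requiring a moment's care is the exclusion of $Z_j\subset C_0^{(i)}$, which follows from the inclusion $O_i\subset\Image(R^i_\Sigma)$.
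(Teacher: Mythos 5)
Your proof is correct and takes essentially the same route as the paper, whose entire argument is the one-line assertion that the claim ``follows easily from (the proof of) Lemma~\ref{lem2_pre}''. The two-valued description of $f_i$ from Eq.~\eqref{eq_def_f} together with the connectedness argument (each $Z_j$ lies in $\Sigma\setminus\bigcup_k\Image(R^k_\Sigma)\subset\Sigma\setminus(\arc(l^i_\Sigma)\cup\arc(l'^i_\Sigma))$ and cannot meet $O_i$) is exactly the intended filling-in of that remark.
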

\begin{proof} This lemma is a generalization of Corollary  \ref{obs3} in Sec. \ref{subsec5.3} above
and follows easily from (the proof of) Lemma \ref{lem2_pre} above.
\end{proof}

 In the following we  set
 $B(Y_i):=B(Z_i) :=  B(x)$ for any $x \in Z_i \cap \face_0(q\cK)$
(according to  Lemma \ref{lem2a} the value of $B(Y_i) = B(Z_i)$ does not depend
 on the choice of $x \in Z_i \subset Y_i$).

\begin{lemma} \label{lem2} For every $B \in \cB(q\cK)$
of the form \eqref{eq5.37b} and fulfilling  $\prod_{x}
1_{\ct_{reg}}(B(x)) \neq 0$   we have
\begin{align} \label{eq5.70}
\Det^{disc}(B) &  \sim \prod_{i=0}^m
  \det\nolimits^{1/2}\bigl(1_{\ck}-\exp(\ad(B(Y_i)))_{|\ck}\bigr)^{\chi(Y_i)}
 \end{align}
 where $\chi(Y_i)$  is the Euler characteristic of $Y_i$.
 \end{lemma}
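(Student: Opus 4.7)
The plan is to unfold $\Det^{disc}(B) = \Det^{disc}_{FP}(B) \cdot Z^{disc}_B$ via Eqs.~\eqref{eq_def_DetFPdisc} and \eqref{eq_def_ZBdisc}, then to regroup the resulting product over $\face_0(q\cK)$ according to the partition \eqref{eq_disj_Ver} into the regions $\overline{Z_i}$ (exploiting the constancy of $B$ on each region, which is Lemma~\ref{lem2a}), so as to reduce the problem to a combinatorial identity relating a signed vertex count on $\overline{Z_i}$ to $\chi(Y_i)$.

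More concretely, with $D(b) := \det(1_{\ck} - \exp(\ad(b))_{|\ck})$ and the decomposition \eqref{eq_disj_Ver2} of $\face_0(q\cK)$ into $V_1 := \face_0(K_1)$, $V_{12} := \face_0(K_1|K_2)$ and $V_2 := \face_0(K_2)$, the first step of the plan gives
\begin{equation*}
\Det^{disc}(B) \sim \prod_{x \in V_1} D(B(x))^{1/2} \prod_{x \in V_{12}} D(B(x))^{-1/2} \prod_{x \in V_2} D(B(x))^{1/2}.
\end{equation*}
Invoking Lemma~\ref{lem2a} and \eqref{eq_disj_Ver} to regroup by region then yields $\Det^{disc}(B) \sim \prod_{i=0}^m D(B(Y_i))^{\nu_i}$ with
\begin{equation*}
\nu_i := \tfrac{1}{2}\bigl(|V_1 \cap \overline{Z_i}| - |V_{12} \cap \overline{Z_i}| + |V_2 \cap \overline{Z_i}|\bigr),
\end{equation*}
so the lemma reduces to proving that $2\nu_i = \chi(Y_i)$ for each $i$.

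To establish this identity, I would first reduce $\chi(Y_i)$ to $\chi(\overline{Z_i})$: the difference $\overline{Z_i} \setminus Z_i = \partial Z_i$ is a disjoint union of circles, while $Y_i \setminus Z_i$ is a disjoint union of half-open cylinders of the form $O_j \cup \arc(l^{'j}_\Sigma)$ (each homotopy equivalent to $S^1$), so additivity of $\chi$ gives $\chi(Y_i) = \chi(Z_i) = \chi(\overline{Z_i})$. Since $\overline{Z_i}$ is a subcomplex of $q\cK$ (its boundary being a union of simplicial loops in $q\cK$), Euler's formula gives $\chi(\overline{Z_i}) = |F_0| - |F_1| + |F_2|$ with $F_j := \face_j(q\cK) \cap \overline{Z_i}$, and the claim $2\nu_i = \chi(Y_i)$ then reduces to the combinatorial identity $|F_1| - |F_2| = 2\,|V_{12} \cap \overline{Z_i}|$.

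This last identity is the principal combinatorial obstacle, and I plan to attack it by double counting incidences at $V_{12}$-vertices. Two structural features of $q\cK$ drive the argument: every edge of $q\cK$ has exactly one endpoint in $V_{12}$ (because the edges of $q\cK$ are exactly those edges of the barycentric subdivision $b\cK$ that connect $V_1$ to $V_{12}$ or $V_{12}$ to $V_2$), and every $q\cK$-tetragon has exactly two $V_{12}$-vertices, diagonal to each other. Writing $d_{Z_i}(v)$ and $f_{Z_i}(v)$ for the numbers of edges, resp.\ tetragons, of $\overline{Z_i}$ incident to $v \in V_{12} \cap \overline{Z_i}$, these facts yield $|F_1| = \sum_v d_{Z_i}(v)$ and $2|F_2| = \sum_v f_{Z_i}(v)$, so the identity is equivalent to $2 d_{Z_i}(v) - f_{Z_i}(v) = 4$ for every such $v$. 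The hard part will be the case analysis required to verify this, which uses (NCP)', (NH)' and (FC1)--(FC4) to pin down the local picture at $v$: in the interior case all four $q\cK$-edges and tetragons at $v$ survive, giving $2\cdot 4 - 4 = 4$, while in the boundary case (where $v$ lies on some loop $l^j_\Sigma$ or $l^{'j}_\Sigma$) exactly three edges and two tetragons survive, giving $2\cdot 3 - 2 = 4$. Summing over $v$ then produces the desired identity and completes the proof.
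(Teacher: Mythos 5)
Your proposal is correct, and its skeleton (unfold $\Det^{disc}(B)=\Det^{disc}_{FP}(B)Z^{disc}_B$, regroup by the regions $\overline{Z_i}$ via Lemma~\ref{lem2a} and Eqs.~\eqref{eq_disj_Ver}--\eqref{eq_disj_Ver2}, and reduce to the identity $\chi(Y_i)=\#(\face_0(K_1)\cap\overline{Z_i})-\#(\face_0(K_1|K_2)\cap\overline{Z_i})+\#(\face_0(K_2)\cap\overline{Z_i})$) coincides with the paper's. Where you genuinely diverge is in the proof of that identity. The paper computes $\chi(\overline{Y_i})$ using the \emph{coarse} cell structure of $K_1$ on the subcomplex $\overline{Y_i}$, converting the alternating cell count into a signed vertex count of $q\cK$ via the bijections $Cell_1(K_1)\leftrightarrow\face_0(K_1|K_2)$ and $Cell_2(K_1)\leftrightarrow\face_0(K_2)$, and then passes from $\overline{Y_i}$ to $\overline{Z_i}$ by checking that the signed counts over each $O_j$ and each $\arc(l^{'j}_{\Sigma})$ vanish. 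You instead pass first to $\chi(\overline{Z_i})$ (your half-open-annulus argument is the exact topological counterpart of the paper's vanishing of those correction terms), use the \emph{fine} cell structure of $q\cK$ on $\overline{Z_i}$, and close the gap between $|F_1|-|F_2|$ and $2\,\#(V_{12}\cap\overline{Z_i})$ by double counting incidences at $\face_0(K_1|K_2)$-vertices, reducing everything to the local identity $2d_{Z_i}(v)-f_{Z_i}(v)=4$. Both structural facts you invoke about $q\cK$ (each edge has exactly one endpoint in $\face_0(K_1|K_2)$; each tetragon has exactly two such vertices, diagonal) are correct, and your two local cases exhaust the possibilities because (NCP)' and (FC1) guarantee that at most one boundary loop passes through $v$, and then exactly once and ``straight through'' along the relevant $K_1$- or $K_2$-edge, so that exactly one transversal half-edge and two tetragons are lost. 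The trade-off: the paper's route is shorter but hides the combinatorics in the cell--vertex bijections and in the claim that $\overline{Y_i}$ is a $K_1$-subcomplex; yours is more elementary and works directly on $\overline{Z_i}$, at the cost of the local case analysis at boundary vertices. Only cosmetic gaps remain (e.g.\ you should note explicitly that $\overline{Z_i}$ inherits a CW structure from $q\cK$ because the boundary loops are simplicial, and that the hypothesis $\prod_x 1_{\ct_{reg}}(B(x))\neq 0$ is what makes the negative powers in your first display well defined), so I consider the argument complete in substance.
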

\begin{proof}
From the definition of $ \Det^{disc}(B)$ in Eq. \eqref{eq_def_Detdisc}, from Eq. \eqref{eq_def_ZBdisc}
and  Eq. \eqref{eq_ Det_disc_FP1}  in Sec. \ref{sec4} above
it follows that
\begin{equation} \label{eq5.66} \Det^{disc}(B) \sim  \frac{ \prod_{x \in \face_0(K_1)}
\det\nolimits^{1/2}\bigl(1_{\ck}-\exp(\ad(B(x)))_{|\ck}\bigr) \prod_{x \in
\face_0(K_2)} \det\nolimits^{1/2}\bigl(1_{\ck}-\exp(\ad(B(x)))_{|\ck}\bigr)}{
\prod_{x \in \face_0(K_1|K_2)}
\det\nolimits^{1/2}\bigl(1_{\ck}-\exp(\ad(B(x)))_{|\ck}\bigr)}
\end{equation}
(observe that the expression on the RHS  is well-defined since by assumption $\prod_{x}
1_{\ct_{reg}}(B(x)) \neq 0$, which implies that the denominator  is non-zero).\par

According to Lemma \ref{lem2a} and
Eqs. \eqref{eq_disj_Ver} and \eqref{eq_disj_Ver2}
 it is enough to prove that for each $i \in \{0,1,\ldots, m\}$ we have
\begin{equation} \label{eq_chi1} \chi(Y_i) = \# ( \face_0(K_1) \cap \overline{Z_i}) - \# ( \face_0(K_1|K_2) \cap \overline{Z_i})
+ \# ( \face_0(K_2) \cap \overline{Z_i})
\end{equation}
Clearly, $\chi(Y_i)=\chi(\overline{Y_i})$
where $\overline{Y_i}$ is  the closures of $Y_i$.
Moreover,  $\overline{Y_i}$ is a  subcomplex of the CW complex $K_1 = \cK = (\Sigma,\cC)$
so setting $Cell_p(\overline{Y_i})  :=
 \{ \sigma \in Cell_p(K_1) \mid \sigma \subset \overline{Y_i}\}$
where $Cell_p(K_1)$ is the set of (open) p-cells of $K_1$   we obtain
\begin{equation}  \label{eq_chi2} \chi(\overline{Y_i}) = \sum_{p=0}^2 (-1)^p \# Cell_p(\overline{Y_i}) \overset{(*)}{=}
\# ( \face_0(K_1) \cap \overline{Y_i}) - \# ( \face_0(K_1|K_2) \cap \overline{Y_i})
+ \# ( \face_0(K_2) \cap \overline{Y_i})
\end{equation}
(step $(*)$ follows by taking into account the natural 1-1-correspondences
$Cell_0(K_1) \leftrightarrow \face_0(K_1)$,  $Cell_1(K_1) \leftrightarrow \face_0(K_1 | K_2)$,
and $Cell_2(K_1) \leftrightarrow \face_0(K_2)$).\par

In order to complete the proof of Lemma \ref{lem2} it is therefore
enough to show that the RHS  of Eq. \eqref{eq_chi1} and the RHS  of Eq. \eqref{eq_chi2} coincide.\par

In order to see this observe that  for each $0 \le i \le m$
there is $J \subset \{0,1,\ldots,m\}$ such that
$$\overline{Y_i} = \overline{Z_i} \sqcup  \bigsqcup_{j \in J} \bigl( O_j
\sqcup \arc(l^{'j}_{\Sigma}) \bigr)$$
so our claim follows from
\begin{subequations}
\begin{multline}
\# ( \face_0(K_1) \cap \arc(l^{'j}_{\Sigma}) ) - \# ( \face_0(K_1|K_2) \cap \arc(l^{'j}_{\Sigma}))
+ \# ( \face_0(K_2) \cap \arc(l^{'j}_{\Sigma})) \\
 =  - \# ( \face_0(K_1|K_2) \cap \arc(l^{'j}_{\Sigma})) +  \# ( \face_0(K_2) \cap \arc(l^{'j}_{\Sigma}) )
= 0  \end{multline}
and from
\begin{equation}
\# ( \face_0(K_1) \cap O_j) - \# ( \face_0(K_1|K_2) \cap  O_j)
+ \# ( \face_0(K_2) \cap  O_j) = \# \emptyset -   \# \emptyset +  \# \emptyset = 0
\end{equation}
(cf.   condition (FC2) and Eq. \eqref{eq_disj_Ver2}).
\end{subequations}
\end{proof}

\begin{lemma} \label{lem3} For every $B \in \cB(q\cK)$
of the form \eqref{eq5.37b} we have
\begin{equation} \label{eq_lem3}
 \prod_{x \in \face_0(q\cK)} 1_{\ct_{reg}}(B(x)) = \prod_{i=0}^m 1_{\ct_{reg}}(B(Y_i))
\end{equation}
\end{lemma}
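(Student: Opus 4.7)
The plan is to combine Lemma \ref{lem2a} with the partition \eqref{eq_disj_Ver} and exploit the fact that $1_{\ct_{reg}}$ takes only the values $0$ and $1$. First, by Lemma \ref{lem2a}, for any $B$ of the form \eqref{eq5.37b} the function $\face_0(q\cK) \ni x \mapsto B(x) \in \ct$ is constant on each subset $\face_0(q\cK) \cap \overline{Z_i}$, with constant value $B(Z_i) = B(Y_i)$.

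Using the disjoint decomposition \eqref{eq_disj_Ver}, I would therefore rewrite
\begin{equation*}
\prod_{x \in \face_0(q\cK)} 1_{\ct_{reg}}(B(x))
= \prod_{i=0}^m \prod_{x \in \face_0(q\cK) \cap \overline{Z_i}} 1_{\ct_{reg}}(B(Y_i))
= \prod_{i=0}^m 1_{\ct_{reg}}(B(Y_i))^{N_i}
\end{equation*}
where $N_i := \# (\face_0(q\cK) \cap \overline{Z_i})$. Since $1_{\ct_{reg}}$ takes only the values $0$ and $1$, one has $1_{\ct_{reg}}(B(Y_i))^{N_i} = 1_{\ct_{reg}}(B(Y_i))$ as long as $N_i \ge 1$, which yields \eqref{eq_lem3}.

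The only thing that needs to be verified is the non-vanishing assertion $N_i \ge 1$ for every $i \in \{0,1,\ldots,m\}$, i.e. that each closed region $\overline{Z_i}$ actually contains at least one vertex of the cell complex $q\cK$. I expect this to be the only non-trivial step. One can argue it as follows: each $Z_i$ is a connected component of $\Sigma \setminus \bigcup_j \Image(R^j_{\Sigma})$ and $\overline{Z_i}$ is obtained by adjoining parts of $\arc(l^j_{\Sigma}) \cup \arc(l^{'j}_{\Sigma})$. Since the ribbons $R^j$ lie in $q\cK \times \bZ_N$, all the edges on their boundary are edges of $q\cK$, and in particular every such boundary arc contains vertices of $q\cK$; combined with condition (FC2) (which prevents the region from being a sliver entirely contained in an open cell), this guarantees $\face_0(q\cK) \cap \overline{Z_i}$ is non-empty.

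Alternatively, and perhaps more cleanly, I would invoke Eq. \eqref{eq_chi2} from the proof of Lemma \ref{lem2}: since $\overline{Y_i}$ is a non-empty subcomplex of $K_1$, it contains at least one $0$-cell, hence $\face_0(K_1) \cap \overline{Y_i} \neq \emptyset$, and using condition (FC2) one transfers this vertex (or produces a nearby $q\cK$-vertex in $\overline{Z_i} \subset \overline{Y_i}$) to conclude $N_i \ge 1$. This completes the argument.
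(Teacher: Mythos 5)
Your proof is correct and follows essentially the same route as the paper, whose proof of Lemma \ref{lem3} is a one-line appeal to exactly Lemma \ref{lem2a} and the partition \eqref{eq_disj_Ver}. You merely make explicit the two details the paper leaves implicit — that $1_{\ct_{reg}}$ is $\{0,1\}$-valued so powers collapse, and that each $\overline{Z_i}$ contains at least one vertex of $q\cK$ (which holds since the boundary arcs $\arc(l^j_{\Sigma})$, $\arc(l^{'j}_{\Sigma})$ are unions of edges of $q\cK$) — and both observations are sound.
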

\begin{proof} The assertion follows from Lemma \ref{lem2a} and Eq. \eqref{eq_disj_Ver} above.
\end{proof}

\noindent Combining  Eq. \eqref{eq5.64}  with Lemma
\ref{lem2} and Lemma \ref{lem3} we arrive at
\begin{align}  \label{eq5.89} \WLO^{disc}_{rig}(L)  &  \sim
 \sum_{\alpha_0, \alpha_1, \ldots, \alpha_m \in \Lambda} 1_{k Q}(\alpha_0)
\bigl( \prod_{i=1}^m  m_{\chi_i}(\alpha_i) \bigr) \nonumber \\
&  \quad \quad \quad \times \biggl[ \prod_{i=0}^m 1_{\ct_{reg}}(B(Y_i))
 \det\nolimits^{1/2}\bigl(1_{\ck}-\exp(\ad(B(Y_i)))_{|\ck}\bigr))^{\chi(Y_i)}\nonumber \\
&  \quad \quad \quad \times   \prod_{\iprime=1}^m \exp(   \pi i \eps_{\iprime}  \langle \alpha_{\iprime},
 B(\sigma_{\iprime}) + B(\sigma'_{\iprime})  \rangle )  \biggr]_{|
B  =  \tfrac{1}{k} ( \alpha_0 +    \sum_i \alpha_i f_i)}
\end{align}

\begin{remark} \label{rm_Step5_full_ribbon} \rm
What would  happen if we had worked with ``full ribbons'' $\bar{R}_j$ instead of ``half ribbons'' $R_j$
(cf. Remark \ref{rm_full_ribbons} and Remark \ref{rm_full_ribbons2} above)? In this case
the RHS of Eq. \eqref{eq_mathfrak_l_def} above and therefore also  the elements $f_i$ of $C^0(q\cK,\bR)$
given by Eq. \eqref{eq5.28}
would have different values, which would make it necessary to  redefine
 the  sets $Z_i$ appearing  above in a suitable way\footnote{recall
 that the ``old'' sets $Z_i$ are the connected components of $\Sigma \backslash \bigcup_j \Image(R^j_{\Sigma})$
 where each $R^j_{\Sigma}$ is the (reduced) projection of the ``half ribbon'' $R_j$
 in $q\cK \times \bZ_{\bN}$.  The ``new'' sets $Z_i$ will be the connected components of
 $\Sigma \backslash \bigcup_j \Image(\bar{R}^j_{\Sigma})$
 where each $\bar{R}^j_{\Sigma}$ is the (reduced) projection of the ``full ribbon'' $\bar{R}_j$
 in $\cK \times \bZ_{\bN}$}.

\begin{enumerate}
\item  Lemma \ref{lem2} above would then still be true. In fact,
the proof would be simpler. Even more importantly, the structure of the
proof of the new version of Lemma \ref{lem2} (or rather, its $BF$-theoretic analogue)
  is exactly what is needed when trying to obtain a result like Eq. \eqref{eq_maintheorem_BF} below\footnote{recall that we do not expect that Theorem  \ref{main_theorem} can  be generalized
successfully to the case of general ribbon links unless we make a transition to the $BF_3$-theoretic
 point of view, cf. the beginning of Sec. \ref{subsec7.1} below}
for general ribbon links.

\item On the other hand,  for the ``new'' definition of the sets $Z_i$ mentioned above,
Eq. \eqref{eq_disj_Ver} would  no longer
 be true and  Lemma \ref{lem3} would no longer hold unless we insert additional indicator functions
 on the RHS of Eq. \eqref{eq_lem3}.
  It is still possible that -- by exploiting suitable algebraic identities -- one can
   recover Eq. \eqref{eq5.89} after all (in spite of the additional indicator functions
    appearing in Eq. \eqref{eq_lem3}).   If Eq. \eqref{eq5.89}  cannot be recovered, which is likely, then one can bypass this complication by using an additional regularization procedure.
    Since at the moment it is not clear whether such an additional regularization procedure
    is indeed necessary or not we decided to work only with half ribbons until now.
     We will begin to work with  full ribbons in  Sec. \ref{sec7} below.
  \end{enumerate}

\end{remark}

\subsection{Step 6: Comparison of $\WLO_{rig}(L)$ with the shadow invariant $|L|$}
\label{subsec5.6}

From the computations in Sec. 5 in \cite{HaHa} it follows that the RHS  of Eq. \eqref{eq5.89} above
coincides with the shadow invariant $|L|$ (associated to $\cG$ and $k$) up to a multiplicative constant.
For the convenience of the reader we will briefly sketch this derivation.
In the following we will use the notation of part \ref{appB} and \ref{appA}  of the Appendix.

\smallskip

For $\alpha_1, \ldots, \alpha_m \in \Lambda$  and
 $\alpha_0 \in \Lambda \cap  k Q$
set
 \begin{equation} \label{eq5.90} B  :=  \tfrac{1}{k}(\alpha_0 +   \sum_i \alpha_i f_i)
\end{equation}
and introduce  the function
$\varphi: \{Y_0,Y_1, \ldots, Y_m\} \to  \Lambda$ by
\begin{equation} \label{eq5.91} \varphi(Y):=kB(Y) - \rho \quad \quad \forall Y \in \{ Y_0,Y_1, \ldots, Y_m \}
\end{equation}
One can show that then (cf. Sec. 5 in \cite{HaHa})
 \begin{subequations} \label{eq5.92}
\begin{align}
\det\nolimits^{1/2}\bigl(1_{\ck}-\exp(\ad(B(Y)))_{|\ck}\bigr) & \sim \dim(\varphi(Y)) \\
\prod_{\iprime} \exp(  \pi i \eps_{\iprime}  \langle \alpha_{\iprime},
 B(\sigma_{\iprime}) + B(\sigma'_{\iprime})  \rangle )& = \prod_Y
 \exp(\tfrac{\pi i}{{k}} \langle \vf(Y),\vf(Y) +2\r \rangle )^{\gleam(Y)}
\end{align}
\end{subequations}

Let $P$ be the unique Weyl alcove which is contained in the Weyl chamber $\CW $ fixed in Sec. \ref{subsec2.1}
and which has $0 \in \ct$ on its boundary.
Moreover, let $\Lambda^k_+$ and  $\cW_k  \cong \cW_{\aff}$ be as in
part \ref{appB} of the Appendix and let $col(L)=
(\Lambda^k_+)^{\{Y_0,Y_1, \ldots, Y_m\}} $ (the set of ``area
colorings''). From  the relation
$\Lambda^k_+ = \Lambda \cap (P k - \rho)$,
the bijectivity of the map $\theta:P \times \cW_{\aff} \ni (b,\sigma) \mapsto \sigma(b) \in  \ct_{reg}$
and the fact that for a suitable finite subset $W$ of $\cW_{\aff} (\cong \cW_k)$ we have $\theta(P \times W) = Q \cap \ct_{reg}$  it follows  that there is a natural  1-1-correspondence
 between the set $col(L) \times W \times \cW_k^{\{Y_1, \ldots, Y_m\}}$
 and the set of those $B$ which are of the form in Eq. \eqref{eq5.90}
above (with $\alpha_0 \in \Lambda \cap  k Q$ and $\alpha_1, \ldots, \alpha_m \in \Lambda$)
 and which have the extra property that $\prod_{Y}
1_{\ct_{reg}}(B(Y)) =1$.\par

Using this and Eq. \eqref{eqA.7} below plus a suitable
 symmetry argument based on  the group $\cW_k \cong \cW_{\aff}$ (cf. the proof of Theorem 5.1 in \cite{HaHa})
one then arrives at\footnote{the multiplicities  $m_{\chi_i}(\alpha_i)$, $i \le m$,
appearing in Eq. \eqref{eq5.89}  lead to the fusion coefficients  $N_{ \mu \nu}^{\lambda}$
appearing in Eq. \eqref{eq5.93} below, cf. the RHS of Eq. \eqref{eqA.7}}
 \begin{multline} \label{eq5.93}
\WLO^{disc}_{rig}(L) \sim \sum_{\vf\in col(L)} \biggl( \prod_{i}
N_{\gamma(l_i)
\varphi(Y^+_{i})}^{\varphi(Y^-_{i})} \biggr) \\
\times \biggl(  \prod_Y \dim(\vf(Y))^{ \chi(Y)}
  \exp(\tfrac{\pi i}{k} \langle \vf(Y),\vf(Y) +2\r \rangle )^{\gleam(Y)} \biggr) = |L|
  \end{multline}

\bigskip

If we  apply Eq. \eqref{eq5.93} to the empty link $\emptyset$ instead of $L$
and take into account that\footnote{this follows easily from Eq. \eqref{eqA.8} below
after taking into account that the set $\Lambda^k_+$ is not empty if  $k \ge \cg$, cf. Remark \ref{rm_app0}} $|\emptyset| \neq 0$ if $k \ge \cg$
and that the symbol $\sim$  denotes equality up to a multiplicative non-zero constant
independent of $L$ we see that also $\WLO^{disc}_{rig}(\emptyset) \neq 0$ for $k \ge \cg$.
Accordingly, $\WLO_{rig}(L)$
is then well-defined and Eq. \eqref{eq5.93} implies that indeed
$$\WLO_{rig}(L) = \frac{\WLO^{disc}_{rig}(L)}{\WLO^{disc}_{rig}(\emptyset)} = \frac{|L|}{ |\emptyset|}$$

\section{A brief comment regarding general simplicial ribbon links}
\label{sec6}

Let us make some  comments
regarding the question if it is possible
to generalize the computations above to general simplicial ribbon links.
The crucial step will be the evaluation of the integral
\begin{equation} \label{eq_Outlook1}
 \int\nolimits_{\sim}   \prod_{i=1}^m  \Tr_{\rho_i}\bigl( \Hol^{disc}_{R_i}(\Check{A}^{\orth} + A^{\orth}_c,   B)\bigr) \exp(iS^{disc}_{CS}(\Check{A}^{\orth},B))
  D\Check{A}^{\orth}
\end{equation}
 for given $B$ and $A^{\orth}_c$, cf. Eq. \eqref{eq5.1}  above.
 For general simplicial ribbon links this is considerably more difficult
than in Sec. \ref{subsec5.1} above.
The good news is that the evaluation of the integral \eqref{eq_Outlook1}
 can be reduced
to the computation of the ``2-clusters''
\begin{equation} \label{eq_2cluster} \int\nolimits_{\sim}  \bigl\{ \rho_i(\exp(\sum_a T_a Y^{i,a}_{k})) \otimes \rho_{i'}(\exp(\sum_{a'} T_{a'} Y^{i',a'}_{k'})) \bigr\} \exp(iS^{disc}_{CS}(\Check{A}^{\orth},B))
  D\Check{A}^{\orth} \text{ \ $\in$ }  \End(V_i) \otimes  \End(V_{i'})
\end{equation}
 for the few $i,i' \le m$ and $k,k ' \le n$ for which
$\star_K \bar{l}^{i(k)}_{\Sigma} = \pm \bar{l}^{'i'(k')}_{\Sigma}$.
Here  $Y^{i,a}_{k}$ and $Y^{i',a'}_{k'}$ are as in Eq. \eqref{eq5.4} above
 and  $V_i$, $V_{i'}$ are the representation spaces of $\rho_i$ and $\rho_{i'}$, cf. Sec. \ref{subsec5.1}.
The integral in \eqref{eq_Outlook1} above can be expressed by these ``2-clusters''
by a similar formula as Eq. (6.4) in  \cite{Ha2} (cf. also Sec. 5.3 in \cite{Ha4} and \cite{Lab,CCFM}).\par

The explicit formula for  $\WLO_{rig}(L)$ for general $L$ which one obtains in this way
 should again be a sum over the set of  ``area
colorings'' $\vf$, but this time every summand
will contain an extra factor involving a product $\prod_{x \in
V(L)} \cdots $.
One could hope that this factor  coincides
with the factor $|L|_4^{\vf}$ (cf. part \ref{appA} of the Appendix
for the notation used here).

\smallskip

In order to evaluate the chances for this being the case
we can\footnote{in fact, the heuristic equation Eq. \eqref{eq2.48}
was only derived for simply-connected compact groups $G$ and therefore does not include the
case $G=U(1)$ or $G=U(1) \times U(1)$. However, it is not difficult to see that  for $G=U(1) \times U(1)$ and
$(k_1,k_2) = (k, -k)$ an analogue of Eq. \eqref{eq_WLO_BF}  below can be derived,
cf. Remark \ref{rm_7.2a} below.}
consider the case of Abelian structure group $G=U(1)$.
The computations are then  analogous to those appearing in
 the continuum setting in Secs 5.1 and 6.1 in \cite{Ha4} (which led
 to the correct result). However, in these computations there is one crucial difference
 in comparison to the computations in the continuum computations:
 there are several factors of  $1/2$, coming from the RHS  of Eq. \eqref{eq4.21} above,
 which ``spoil'' the final result. So ultimately
  we do {\it not} recover the (correct)  expressions which appeared in the  continuum setting.
This complication can be resolved\footnote{observe that in contrast to the RHS of Eq. \eqref{eq4.21} above
not all of the summands in the exponential on the RHS of Eq. \eqref{eq_Hol_disc_BF}
 below are multiplied with a weight factor $1/2$}
by  making the  transition to the ``$BF_3$-theory point of view''.

\begin{remark} \label{rm_sec6_full_ribbon} \rm As mentioned in Sec. \ref{subsec5.5} above (cf. also Sec. \ref{sec8} below) there is something else we have to do in order to have a reasonable chance of finding a generalization
of Theorem \ref{main_theorem} to the case of general ribbon links. Instead of working with ``half ribbons'' we should
rather work with ``full ribbons''. In Remark \ref{rm_Step5_full_ribbon} above we gave already one argument in favor of this claim. Here is another argument:\par

Let $L=(R_1,R_2, \ldots,R_m)$ be a simplicial ribbon link in $q\cK \times \bZ_N$ which does {\it not} fulfill condition (NCP)' above. More precisely, we assume that  for some $i \le m$ and $i' \le m$ the $\Sigma$-projections of $R_{i}$ and $R_{i'}$ intersect each other in a 2-face $F \in \face_2(q\cK)$.
 Let us study  the corresponding ``2-cluster'' given   as in Eq. \eqref{eq_2cluster} above.
Observe that the $\Sigma$-projections of the boundary loops of  $R_{i}$ and $R_{i'}$
intersect in four points\footnote{recall from Sec. \ref{subsec4.3}  that  each of the two half ribbons $R_{i}$ and $R_{i'}$  induces a pair of loops which will contribute  to  $\Hol^{disc}_{R_j}(A^{\orth},B)$
 (with $j = i,i'$) and therefore to the 2-cluster; the $\Sigma$-projections
    of these two pairs of loops intersect in four points, which are
    simply the four vertices of the intersection 2-face $F$ mentioned above} which contribute  to the corresponding 2-cluster.
 However, these four intersection points are not treated equally.
 Two of these intersection points do {\it not} ``detect'' an interaction\footnote{more precisely,
 the corresponding ``covariance expression'', ie the term analogous to the expression
 in Eq. \eqref{eq5.9} in Sec. \ref{subsec5.1} above will vanish}
  while the two other intersection points do.
This asymmetry is already a strong indication that things will go wrong and we cannot
expect to obtain the correct result
when evaluating the 2-cluster explicitly.\par

Now, if we work with full ribbons then there will be nine intersection points
contributing to each 2-cluster because to each of the two intersecting full ribbons we have three associated loops
(cf. Remark \ref{rm_full_ribbons} in Sec. \ref{subsec4.3} above).
These nine intersection points are again not treated equally.
Five  intersection points do not detect an interaction (one intersection point inside the intersection 2-face
 $F \in \face_2(\cK)$ and four on its four vertices)  but four intersection points do detect an interaction
(namely, the four points in the middle of each edge of $F$).
This time we have a symmetric situation and can be more optimistic that we obtain the correct result
when evaluating the 2-cluster explicitly in the $BF_3$-setting.
\end{remark}

\section{Transition to the ``$BF_3$-theory point of view''}
\label{sec7}

\subsection{Motivation}
\label{subsec7.1}

 The  simplicial program for Abelian CS theory (cf.
Sec. 3 in \cite{Ha7a}) was completed successfully
by D.H. Adams, see \cite{Ad0,Ad1}. A crucial step in \cite{Ad0,Ad1} was
the transition to the ``BF-theory point of view'', which can be divided into
two steps, namely ``field doubling''\footnote{which can come in the form of ``group doubling'' (see Step 1 below)
or ``base manifold doubling'';
observe that the word ``doubling'' is slightly misleading because it ignores
a sign change: we have $k_2 = - k_1$ where $k_1, k_2$ are as in the first paragraph of ``Step 1'' below}
 followed by  a suitable linear change of variables, cf. part \ref{appC'} of the Appendix below.\par
Adams' results  seem to suggest that
-- if one wants to have a chance of carrying out the simplicial program successfully also
for Non-Abelian CS theory -- then a similar strategy
will have to be used.\par

So far we have worked with the original CS point of view
because this helped us to reduce the lengths of many formulas
and because  for Theorem \ref{main_theorem} (which deals only with a special class of simplicial ribbon links $L$)
the  original CS point of view is sufficient.
On the other hand,  the Abelian ``test situation''
which we considered  at the end of Sec. \ref{sec6}
showed us that we can not expect to obtain correct results within the original CS point of view
when  dealing with  general simplicial ribbon links.
This is why from now on we will work with the ``$BF_3$-theory point of view''.

\subsection{The ``$BF_3$-theory point of view''}
\label{subsec7.2}

 In the following we will   make the   transition
 from non-Abelian CS theory in the torus gauge to the corresponding ``$BF_3$-theory point of view''
 at a heuristic level.

\subsection*{Step 1: ``Group doubling''}

 Let us now consider   the version of Eq. \eqref{eq2.48} in  the special case where
 $G=\tilde{G} \times \tilde{G}$ where $\tilde{G}$ is a simple, simply-connected
 compact Lie group and  where $(k_1,k_2) $ fulfills $k_1 = - k_2$,
 cf. Remark \ref{rm2.1} above.   We set $k:= k_1 = -k_2$.\par

  For simplicity,  let us consider the special case   where each of the representations $\rho_i$
   appearing in Eq. \eqref{eq2.48} is of the form
$\rho_i(\tilde{g}_1,\tilde{g}_2)= \tilde{\rho}_i(\tilde{g}_1)$, $\tilde{g}_1, \tilde{g}_2 \in \tilde{G}$,
for a some $\tilde{G}$-representation  $\tilde{\rho}_i$.
In this situation we should have\footnote{the first ``$\sim$'' follows from a short
heuristic computation}
 \begin{equation} \label{eq2.51} \WLO(L)  \sim \WLO_{\tilde{G}}(L) \overline{\WLO_{\tilde{G}}(\emptyset)} \sim
|L| \cdot \overline{|\emptyset|} \overset{(*)}{=}  |L| \cdot |\emptyset|
\end{equation}
 where $\WLO_{\tilde{G}}(L)$ on the RHS  is
 defined as  $\WLO(L)$  in Sec. \ref{subsec2.1b} for the group $\tilde{G}$
 instead of $G$ and where $|\cdot|$ is now the shadow invariant for  $\tilde{\cG}$ and $k$.
   In step $(*)$ we used the fact that $|\emptyset|$ is a real number.

Let us fix a maximal torus $\tilde{T}$ of $\tilde{G}$ and set
$T = \tilde{T} \times \tilde{T}$.
Let $\cB$,  $\cA^{\orth}$,
$\Check{\cA}^{\orth}$,   $\cA^{\orth}_c$, and
$\ll \cdot, \cdot \gg_{\cA^{\orth}}$  be defined  as in Sec. \ref{subsec2.1} and Sec. \ref{subsec2.2} for the
group $G=\tilde{G} \times \tilde{G}$.

\smallskip

In the following  $B_1, B_2$ (resp. $A^{\orth}_1$ and $A^{\orth}_2$) will denote the two
components of $B \in C^{\infty}(\Sigma,\ct) = C^{\infty}(\Sigma,\tilde{\ct}) \oplus C^{\infty}(\Sigma,\tilde{\ct})$
(resp. $A^{\orth} \in  C^{\infty}(S^1,\cA_{\Sigma,\cG}) = C^{\infty}(S^1,\cA_{\Sigma,\tilde{\cG}})
\oplus C^{\infty}(S^1,\cA_{\Sigma,\tilde{\cG}})$).
Moreover, we denote by
 $\tilde{I}$  the kernel of $\exp_{|\tilde{\ct}}:\tilde{\ct} \to \tilde{T}$.

\subsection*{Step 2: Linear change of variable}

As we  explain in  part \ref{appC'} of the Appendix,
CS theory with group $G= \tilde{G} \times
 \tilde{G}$ and  $(k_1,k_2)= (k,-k)$
 is equivalent to  $BF_3$-theory with group $\tilde{G}$ and ``cosmological constant'' $\Lambda$
 given by  $\Lambda = \tfrac{1}{k^2}$.
 More precisely, at the heuristic level, these two theories are related by a
 simple linear change of variables\footnote{oberve that $\kappa=\tfrac{1}{k}$
 in  Eqs. \eqref{eqB.5} and \eqref{eqB.8'}},
 cf. Eqs. \eqref{eqB.5} or Eqs.  \eqref{eqB.8'}  in part \ref{appC'} of the Appendix
  depending on whether we are dealing with the non-gauge fixed path integral
   or the path integral in the torus gauge.

\medskip

In order to simplify the notation a bit (and to avoid the appearance
of multiple $k$-factors)
we will work with the following simplified change of variable  $A^{\orth}   \to \tilde{A}^{\orth}$,
 $B \to  \tilde{B}$
instead of the one in Eq. \eqref{eqB.8'}:
 \begin{subequations} \label{eqB.8}
\begin{align}
 \label{eqB.8a}
 \tilde{A}^{\orth}  & :=  \bigl(\tfrac{ A^{\orth}_1 +
A^{\orth}_2}{2},  \tfrac{A^{\orth}_1 - A^{\orth}_2}{2}\bigr),\\
\label{eqB.8c} \tilde{B} & := \bigl(\tfrac{B_1 + B_2}{2},  \tfrac{B_1 - B_2}{2} \bigr)
\end{align}
\end{subequations}

By applying this  linear change of variable to the RHS  of  Eq. \eqref{eq2.48}
(in the special case $G=\tilde{G} \times \tilde{G}$, $(k_1,k_2) = (k,-k)$)
 we  arrive at\footnote{here we have used that -- according to the assumption made at the beginning of ``Step 1'' --
 we have $\rho_i(\tilde{g}_1,\tilde{g}_2)= \tilde{\rho}_i(\tilde{g}_1)$, for all $\tilde{g}_1, \tilde{g}_2 \in \tilde{G}$,  which implies  $\Tr_{\rho_i}\bigl( \Hol_{l_i}(A^{\orth},B)\bigr)  =
 \Tr_{\rho_i}\bigl( \Hol_{l_i}(A^{\orth}_1,B_1),\Hol_{l_i}(A^{\orth}_2,B_2) \bigr) =
  \Tr_{\tilde{\rho}_i}\bigl(\Hol_{l_i}(A^{\orth}_1,B_1)\bigr)$}
\begin{multline} \label{eq_WLO_BF}  \WLO(L)
 \sim \sum\limits_{(y_1,y_2) \in \tilde{I} \times \tilde{I}}  \int_{\tilde{\cA}^{\orth}_c \times \tilde{\cB}}
 1_{C^{\infty}(\Sigma,\tilde{\ct}_{reg} \times \tilde{\ct}_{reg} )}((\tilde{B}_1 + \tilde{B}_2,\tilde{B}_1 - \tilde{B}_2))  \Det_{FP}((\tilde{B}_1 + \tilde{B}_2,\tilde{B}_1 - \tilde{B}_2)) \\
 \times   \biggl[ \int_{\Check{\tilde{\cA}}^{\orth}} \prod_i  \Tr_{\tilde{\rho}_i}\bigl(
 \Hol_{l_i}((\Check{\tilde{A}}^{\orth}+ \tilde{A}^{\orth}_c)_1+(\Check{\tilde{A}}^{\orth}+ \tilde{A}^{\orth}_c)_2,
  \tilde{B}_1 + \tilde{B}_2)\bigr)
  \exp(i \bS(\Check{\tilde{A}}^{\orth},\tilde{B}) ) D\Check{\tilde{A}}^{\orth} \biggr] \\
 \times \exp\bigl( - 2\pi i k  \langle (y_1,y_2), ((\tilde{B}_1 + \tilde{B}_2)(\sigma_0),(\tilde{B}_1 - \tilde{B}_2)(\sigma_0)) \rangle \bigr) \bigr)
 \exp(i  \bS(\tilde{A}^{\orth}_c, \tilde{B})) (D\tilde{A}^{\orth}_c \otimes D\tilde{B})
\end{multline}
where for reasons of notational consistency,
 we have written $\Check{\tilde{\cA}}^{\orth}$
 instead of $\Check{{\cA}}^{\orth}$, $\tilde{\cA}^{\orth}_c$ instead of $\cA^{\orth}_c$,  and  $\tilde{\cB}$
 instead of  ${\cB}$ and  we have set
 \begin{align*}
\bS(\Check{\tilde{A}}^{\orth},\tilde{B}) & := S_{CS}(\Check{A}^{\orth},B)\\
 \bS(\tilde{A}^{\orth}_c,\tilde{B})  & := S_{CS}(A^{\orth}_c,B)
\end{align*}
More explicitly, we have\footnote{the first  appearance of $S_{CS}$ on the RHS  of the following
equation is a short hand for $S_{CS}\bigl(M,\tilde{G} \times \tilde{G},(k_1,k_2)\bigr)$
while the other appearances are a shorthand for
$S_{CS}\bigl(M,\tilde{G},k_1\bigr)  = S_{CS}\bigl(M,\tilde{G},- k_2\bigr)$.}
 \begin{align}  \label{eq_tildeS_CS_explizit}
  \bS(\Check{\tilde{A}}^{\orth},\tilde{B}) &  =  S_{CS}((\Check{\tilde{A}}^{\orth}_1+\Check{\tilde{A}}^{\orth}_2,\Check{\tilde{A}}^{\orth}_1-\Check{\tilde{A}}^{\orth}_2),
 (\tilde{B}_1+\tilde{B}_2, \tilde{B}_1-\tilde{B}_2)) \nonumber \\
& = S_{CS}(\Check{\tilde{A}}^{\orth}_1+\Check{\tilde{A}}^{\orth}_2,\tilde{B}_1+\tilde{B}_2) -
S_{CS}(\Check{\tilde{A}}^{\orth}_1-\Check{\tilde{A}}^{\orth}_2,\tilde{B}_1-\tilde{B}_2) \nonumber \\
& = S_{CS}(\Check{\tilde{A}}^{\orth}_1+\Check{\tilde{A}}^{\orth}_2 + (\tilde{B}_1+\tilde{B}_2)dt) -
S_{CS}(\Check{\tilde{A}}^{\orth}_1-\Check{\tilde{A}}^{\orth}_2 + (\tilde{B}_1-\tilde{B}_2)dt) \nonumber \\
& =   \pi k \ll (\Check{\tilde{A}}^{\orth}_1,\Check{\tilde{A}}^{\orth}_2), \biggl( \begin{matrix}
  \star   \ad( \tilde{B}_2) &&  \star
 \bigl(\tfrac{\partial}{\partial t} + \ad(\tilde{B}_1) \bigr)\\
  \star  \bigl(\tfrac{\partial}{\partial t} + \ad(\tilde{B}_1) \bigr)
  && \star   \ad(\tilde{B}_2)
\end{matrix} \biggr) \cdot (\Check{\tilde{A}}^{\orth}_1,\Check{\tilde{A}}^{\orth}_2) \gg_{\Check{\tilde{\cA}}^{\orth}}
\end{align}
and
 \begin{align}  \label{eq_tildeS_CS_explizit2}
  \bS(\tilde{A}^{\orth}_c,\tilde{B}) &  =  S_{CS}(((\tilde{A}^{\orth}_c)_1+(\tilde{A}^{\orth}_c)_2,(\tilde{A}^{\orth}_c)_1-(\tilde{A}^{\orth}_c)_2),
 (\tilde{B}_1+\tilde{B}_2,
\tilde{B}_1-\tilde{B}_2)) \nonumber \\
& = \ldots  \nonumber \\
& =  4 \pi k \ll \star \cdot
 (( \tilde{A}^{\orth}_c)_2,(\tilde{A}^{\orth}_c)_1),
  (d \tilde{B}_1,d \tilde{B}_2) \gg_{\cA_{\Sigma,\tilde{\ct} \oplus \tilde{\ct}}}
\end{align}
where $\Check{\tilde{A}}^{\orth} = (\Check{\tilde{A}}^{\orth}_1,\Check{\tilde{A}}^{\orth}_2)$,
$\tilde{A}^{\orth}_c = ((\tilde{A}^{\orth}_c),(\tilde{A}^{\orth}_c)_2)$,
and   $\tilde{B} = (\tilde{B}_1,\tilde{B}_2)$.

\begin{remark} \rm \label{rm_7.2a}
Recall that above we have assumed that $\tilde{G}$ is a simple, simply-connected compact
(and therefore non-Abelian) Lie group.
For sake of completeness (and in view of the discussion at the end of Sec. \ref{sec6} above
and Remark \ref{rm_7.2b} and  Remark \ref{rm_7.3} below)  let us mention that,
 in fact, one can derive (an analogue of) Eq. \eqref{eq_WLO_BF}  also  if $\tilde{G}$ is an Abelian compact Lie group.
Of course, in this case the RHS  of Eq. \eqref{eq_WLO_BF} simplifies drastically\footnote{we then have
 $\tilde{T}=\tilde{G}$, $\tilde{\ct}_{reg}=\tilde{\ct}$,  $\ad(\tilde{B}_j)=0$,
 $\Det_{FP}((\tilde{B}_1 + \tilde{B}_2,\tilde{B}_1 - \tilde{B}_2))$ is a constant function,
   and the sum $\sum_{y_1,y_2}$ is trivial.
    So we obtain:  $\WLO(L)  \sim   \int  \bigl[ \int \prod_i  \Tr_{\tilde{\rho}_i}\bigl(
 \Hol_{l_i}((\Check{\tilde{A}}^{\orth}+ \tilde{A}^{\orth}_c)_1+  (\Check{\tilde{A}}^{\orth}+ \tilde{A}^{\orth}_c)_2,
  \tilde{B}_1 + \tilde{B}_2)\bigr)
  \exp(i \bS(\Check{\tilde{A}}^{\orth},\tilde{B}) ) D\Check{\tilde{A}}^{\orth} \bigr] \exp(i  \bS(\tilde{A}^{\orth}_c, \tilde{B})) (D\tilde{A}^{\orth}_c \otimes D\tilde{B})$}.
\end{remark}

\begin{remark} \rm \label{rm_7.2b}
It might seem surprising
that the second of the aforementioned two steps,
i.e. the linear change of variable, really makes an essential difference.
Clearly, the original  heuristic path integral and the heuristic
 path integral after the application of the change of variable are equivalent.
However, once the problem of discretizing the
corresponding  path integral  is considered the difference really matters.
A detailed look at \cite{Ad0,Ad1}
will convince the  reader that this is indeed the case at least in the Abelian situation.\par

That a linear change of variables is useful also for the  discretization
 of non-Abelian CS (with doubled group)  is  less obvious.
Observe, for example, that there is a  $\star$-operator on the main diagonal of the $2 \times 2$-matrix
appearing in Eq. \eqref{eq_tildeS_CS_explizit} above.  Because of this we cannot hope
to be able to find a discretized version of the path integral on the RHS  of Eq. \eqref{eq_WLO_BF}
where each of the two components $\Check{\tilde{A}}^{\orth}_1$ and $\Check{\tilde{A}}^{\orth}_2$ ``lives'' either on
 $K_1 \times \bZ_N$ or on $K_2 \times \bZ_N$.
 Instead, each component $\Check{\tilde{A}}^{\orth}_1$ and $\Check{\tilde{A}}^{\orth}_2$
  must be implemented in a ``mixed'' fashion (which is what we did in the discretization approach
   of the present paper, cf. Sec. \ref{sec4}).
 This is a crucial difference compared to the Abelian situation where
it was indeed possible to find a non-mixed discretization
for the relevant simplicial fields.
This difference is one of the reasons
why we decided to postpone the transition to the $BF_3$-theory point of view
until now.
\end{remark}

\subsection{Simplification of some of the notation in Sec. \ref{subsec7.2}}
\label{subsec7.2b}

Before we discretize the expression on the RHS  of Eq. \eqref{eq_WLO_BF}
let us first simplify the notation somewhat:\par

Firstly,  we will
 drop the  $\tilde{\ }$-signs appearing in the previous subsection,
for example will write $G$ instead of $\tilde{G}$ , $\cB$ instead of $\tilde{\cB}$,
$I$ instead of $\tilde{I}$
and so on. Clearly, we then have
\begin{subequations}
 \begin{align}
 \cB &  =C^{\infty}(\Sigma,\ct \oplus \ct) \\
  \cA^{\orth} & =   C^{\infty}(S^1,\cA_{\Sigma,\cG \oplus \cG})\\
 \Check{\cA}^{\orth} & = \{ A^{\orth} \in  \cA^{\orth} \mid \int A^{\orth}(t) dt
  \in \cA_{\Sigma,\ck \oplus \ck} \}\\
 \cA^{\orth}_c & = \{ A^{\orth} \in  \cA^{\orth} \mid \text{ $A^{\orth}$ is constant and
 $\cA_{\Sigma,\ct \oplus \ct}$-valued} \}
    \end{align}
  \end{subequations}
 Moreover, we will set
 \begin{align*} B_{\pm} & := B_1 \pm B_2 \quad \text{ for $B=(B_1,B_2) \in \cB$}\\
 A^{\orth}_{\pm} & := A^{\orth}_1 \pm A^{\orth}_2 \quad \text{for $A^{\orth}=(A^{\orth}_1,A^{\orth}_2) \in \cA^{\orth}$}
 \end{align*}
  and use the notation  $y_+$ instead of $y_1$ and  $y_-$ instead of $y_2$.
 Then   we can rewrite Eq. \eqref{eq_WLO_BF} in the following way
 \begin{multline} \label{eq_WLO_BF_Eversion}  \WLO(L)
 \sim \sum_{y_+,y_- \in I}  \int_{ \cA^{\orth}_c \times {\cB}}
 \biggl\{ \prod_{\pm} \bigl[ 1_{C^{\infty}(\Sigma,{\ct}_{reg}}(B_{\pm})  \Det_{FP}(B_{\pm}) \bigr] \\
 \times   \biggl[ \int_{\Check{{\cA}}^{\orth}} \prod_i  \Tr_{{\rho}_i}\bigl(
 \Hol_{l_i}((\Check{A}^{\orth} + A^{\orth}_c)_+,  B_+ )\bigr)
  \exp(i \bS(\Check{{A}}^{\orth},{B}) ) D\Check{{A}}^{\orth} \biggr] \\
 \times \exp\bigl( - 2\pi i k  \sum_{\pm} \langle y_{\pm}, B_{\pm}(\sigma_0)  \rangle  \bigr)
   \biggr\}  \exp(i  \bS(A^{\orth}_c, B)) (DA^{\orth}_c \otimes DB)
\end{multline}
where  $\prod_{\pm} \cdots$ (resp. $\sum_{\pm} \cdots$) is the obvious two term product (resp. sum).
Above  we have set  (cf. Eq. \eqref{eq_tildeS_CS_explizit} and Eq. \eqref{eq_tildeS_CS_explizit2} above)
 \begin{align} \label{eq7.9}
  \bS(\Check{A}^{\orth},B) & :=  \pi k \ll (\Check{A}^{\orth}_1,\Check{A}^{\orth}_2) , \star \  \biggl(
 \begin{matrix}
   \ad( {B}_2) &&
 \tfrac{\partial}{\partial t} + \ad({B}_1) \\
 \tfrac{\partial}{\partial t} + \ad({B}_1)
  &&   \ad({B}_2)
\end{matrix} \biggr) \cdot (\Check{A}^{\orth}_1,\Check{A}^{\orth}_2) \gg_{\cA^{\orth}} \\
\label{eq7.10} \bS(A^{\orth}_c,B) & :=
4 \pi k \ll \star  \cdot
 (( A^{\orth}_c)_2,(A^{\orth}_c)_1),
  (d B_1,d B_2) \gg_{\cA^{\orth}}
\end{align}
where    $\star: \cA^{\orth} \to \cA^{\orth}$ is the Hodge star operator
induced by the auxiliary Riemannian metric $\mathbf g$.

\begin{remark} \rm
 Observe that for each $l \in \{l_1,l_2, \ldots, l_m\}$  we have
  (cf. Eqs. \eqref{eq_Hol_heurist} and \eqref{eq_Hol_heurist_abbr} in Sec. \ref{subsec2.1} above
  and  Eq. (5.15) in \cite{Ha7a})
 \begin{multline} \label{eq_Hol_cont_BF}
\Hol_{l}(A^{\orth}_+,  B_+ )
 =\lim_{n \to \infty} \prod_{k=1}^n \exp\biggl(\tfrac{1}{n} \bigl(A^{\orth}_+
  +  B_+ dt \bigr)(l'(t))   \biggr)_{| t=k/n}\\
 =  \lim_{n \to \infty} \prod_{k=1}^n \exp\biggl( \bigl( A^{\orth}_1(l_{S^1}(t))( \tfrac{1}{n} l'_{\Sigma}(t)) + A^{\orth}_2(l_{S^1}(t))(\tfrac{1}{n} l'_{\Sigma}(t))\\
   +  B_1(l_{\Sigma}(t)) dt(\tfrac{1}{n} l'_{S^1}(t)) + B_2(l_{\Sigma}(t)) dt(\tfrac{1}{n} l'_{S^1}(t))   \bigr)   \biggr)_{| t=k/n}
\end{multline}
for $A^{\orth} \in \cA^{\orth}$ and $B \in \cB$.
\end{remark}

\subsection{Discretization of Eq. \eqref{eq_WLO_BF_Eversion}}
\label{subsec7.3}

We will now sketch how -- using  a suitable
discretization of the expression on the RHS  of Eq. \eqref{eq_WLO_BF_Eversion} --
a rigorous definition of  $\WLO(L)$ appearing in Eq. \eqref{eq_WLO_BF_Eversion} can be obtained.
 (In part \ref{appJ} of the Appendix we will sketch an alternative
way of discretizing the RHS  of Eq. \eqref{eq_WLO_BF_Eversion}).

\smallskip

In view of what we have learned in Sec. \ref{subsec5.5}  (cf. Remark \ref{rm_Step5_full_ribbon}) we will
now work with simplicial ribbons in $\cK \times \bZ_N$ (= ``full ribbons'')
 instead of simplicial ribbons in $q\cK \times \bZ_N$ (= ``half ribbons'').

\smallskip

We set
 \begin{subequations} \label{eq_basic_spaces_discr_BF}
    \begin{align}
 \cB(q\cK)  & :=  C^0(q\cK,\ct \oplus \ct) \\
  \cA_{\Sigma}(q\cK)  & :=  C^1(q\cK,\cG \oplus \cG) \\
  \cA^{\orth}(q\cK) & := \Map(\bZ_N, \cA_{\Sigma}(q\cK))
 \end{align}
\end{subequations}
 and introduce the scalar product $\ll \cdot, \cdot \gg_{\cA^{\orth}(q\cK)}$  on $\cA^{\orth}(q\cK)$  in an analogous way as in Sec. \ref{subsec4.0} above.\par

For technical reasons  we will again
introduce the  subspaces  $\cA_{\Sigma}(K)$ and $\cA^{\orth}(K)$
of $\cA_{\Sigma}(q\cK)$ and $\cA^{\orth}(q\cK)$ given by\footnote{Here and in the following
we use again the notation $\cA_{\Sigma,V}(K) := C^1(K_1,V) \oplus C^1(K_2,V)$ for a finite-dimensional
real vector space $V$}
 \begin{subequations}
\begin{align}
\cA_{\Sigma}(K) & :=\cA_{\Sigma, \cG \oplus \cG}(K)  \subset \cA_{\Sigma}(q\cK)\\
\cA^{\orth}(K) & := \Map(\bZ_N,\cA_{\Sigma}(K)) \subset \cA^{\orth}(q\cK)
\end{align}
\end{subequations}

Moreover, we will introduce the subspace
$\cB_0(q\cK):= \psi(\cB(\cK))$ of $\cB(q\cK)$
where $\cB(\cK):= C^0(\cK,\ct \oplus \ct)$ and where
$\psi: \cB(\cK) \to \cB(q\cK)$  is given exactly like in Choice \ref{example3} in Sec. \ref{subsec4.10a}
above with $\ct$ replaced by $\ct \oplus \ct$.

\smallskip

As in Sec. \ref{subsec4.2} we have a
well-defined  operator $\star_{K}: \cA^{\orth}(K) \to \cA^{\orth}(K)$
and as in Sec. \ref{subsec4.0} we have a  decomposition
$$ \cA^{\orth}(K) = \Check{\cA}^{\orth}(K) \oplus  \cA^{\orth}_c(K)$$
where
 \begin{subequations}
    \begin{align}
 \label{eq_basic_spaces_discr_BFb}      \Check{\cA}^{\orth}(K) & := \{ A^{\orth} \in \cA^{\orth}(K)
       \mid \sum_{t \in \bZ_N}  A^{\orth}(t) \in \cA_{\Sigma, \ck \oplus \ck}(K)\} \\
   \label{eq_basic_spaces_discr_BFc}   \cA^{\orth}_c(K) & := \{ A^{\orth} \in \cA^{\orth}(K)
       \mid \text{ $A^{\orth}(\cdot)$ is constant and $ \cA_{\Sigma,\ct \oplus \ct}(K)$-valued}\} \cong \cA_{\Sigma,\ct \oplus \ct}(K)
\end{align}
 \end{subequations}

 Moreover, we set
 \begin{align*} B_{\pm} & := B_1 \pm B_2 \quad \text{ for $B=(B_1,B_2) \in \cB(q\cK)$}\\
 A^{\orth}_{\pm} & := A^{\orth}_1 \pm A^{\orth}_2 \quad \text{ for $A^{\orth}=(A^{\orth}_1,A^{\orth}_2) \in \cA^{\orth}(K)$}
 \end{align*}
\noindent
 As the discrete analogues of Eq. \eqref{eq7.9} and  Eq. \eqref{eq7.10} above we now take\footnote{Recall that  $L^{(N)}(B_0)$ is a discrete ``approximation'' of $\partial_t + \ad(B_0)$ so $\tfrac{1}{2} (L^{(N)}(B_{+}) + L^{(N)}(B_{-}))$
is a discrete analogue of $\tfrac{1}{2} ( \partial_t + \ad(B_+) +\partial_t + \ad(B_-)) =\tfrac{1}{2} ( \partial_t + \ad(B_1 + B_2) +\partial_t + \ad(B_1 - B_2)) =
\partial_t +  \ad(B_1)$;  similarly $\tfrac{1}{2} (L^{(N)}(B_{+}) - L^{(N)}(B_{-}))$
is a discrete approximation   of  $\tfrac{1}{2} ( \partial_t + \ad(B_1 + B_2) - (\partial_t + \ad(B_1 - B_2))) =
\ad(B_2)$}
  \begin{align} \label{eq_7.12}
   \bS^{disc}(\Check{A}^{\orth},B) & :=  \pi k \ll (\Check{A}^{\orth}_1,\Check{A}^{\orth}_2) ,\star_{K} R^{(N)}(B)  \cdot (\Check{A}^{\orth}_1,\Check{A}^{\orth}_2) \gg_{\cA^{\orth}(q\cK)} \\
 \bS^{disc}(A^{\orth}_c,B) & :=   4 \pi k \ll \star_K \cdot
 (( A^{\orth}_c)_2,(A^{\orth}_c)_1),
  ( d_{q\cK} B_1, d_{q\cK} B_2) \gg_{\cA^{\orth}(q\cK)}
\end{align}
where
\begin{equation} \label{eq_RNB-def}
R^{(N)}(B) := \biggl( \begin{matrix} \tfrac{L^{(N)}(B_{+}) - L^{(N)}(B_{-})}{2}
   &&  \tfrac{L^{(N)}(B_{+}) + L^{(N)}(B_{-})}{2}  \\
      \tfrac{L^{(N)}(B_{+}) + L^{(N)}(B_{-})}{2}  &&
       \tfrac{L^{(N)}(B_{+}) - L^{(N)}(B_{-})}{2} \end{matrix} \biggr)
\end{equation}

\smallskip

As mentioned above we
 will now work with ``full ribbons'', i.e. closed simplicial ribbons $R$  in $\cK \times \bZ_N$.
Recall from Remark \ref{rm_full_ribbons} that each such $R$
induces three simplicial loops $l^+ = (l^{+(k)})_{k \le n}$, $l^- = (l^{-(k)})_{k \le n}$, and
     $l = (l^{(k)})_{k \le n}$, $n \in \bN$, in $q\cK \times \bZ_N$.
As the discrete analogue $\Hol^{disc}_{R}(A^{\orth}_+,  B_+ ) $
 of the continuum expression $\Hol_{l}(A^{\orth}_+,  B_+ )$
 in Eq. \eqref{eq_Hol_cont_BF} above we now take
\begin{multline}  \label{eq_Hol_disc_BF}
\Hol^{disc}_{R}(A^{\orth}_+,  B_+ )
:=  \prod_{k=1}^n \exp\biggl(  \bigl( \sum_{\pm} \tfrac{1}{2}  A^{\orth}_1(\start l^{\pm(k)}_{S^1})(l^{\pm(k)}_{\Sigma}) \bigr) +  A^{\orth}_2(\start l^{(k)}_{S^1})  (l^{(k)}_{\Sigma})  \\
  +   \bigl( \sum_{\pm} \tfrac{1}{2}  B_1(\start l^{\pm(k)}_{\Sigma})  dt^{(N)}(l^{\pm(k)}_{S^1}) \bigr)
  +    B_2(\start l^{(k)}_{\Sigma})   dt^{(N)}(l^{(k)}_{S^1})   \biggr)
\end{multline}

In view of Eq. \eqref{eq4.21_full} above and   the list of replacements  in Sec. 5.4.1 in \cite{Ha7a}
the ansatz in Eq. \eqref{eq_Hol_disc_BF} is quite natural.
The only point that requires an explanation\footnote{one could, of course,
ask the analogous question with respect to the field components
$B_1$ and $B_2$. However, the latter question can easily be avoided
since it turns out  that the value of  $\WLO_{rig}(L)$ as defined
in Eq. \eqref{eq_WLO_BF_norm} below does not change if in Eq. \eqref{eq_Hol_disc_BF} we replace the  expression
$\bigl( \sum_{\pm} \tfrac{1}{2}  B_1(\start l^{\pm(k)}_{\Sigma})  dt^{(N)}(l^{\pm(k)}_{S^1}) \bigr)
  +    B_2(\start l^{(k)}_{\Sigma})   dt^{(N)}(l^{(k)}_{S^1})$ by the symmetric expression
 $\sum_{j=1}^2 \bigl[ \bigl( \sum_{\pm} \tfrac{1}{4}  B_j(\start l^{\pm(k)}_{\Sigma})  dt^{(N)}(l^{\pm(k)}_{S^1}) \bigr)  +  \tfrac{1}{2} B_j(\start l^{(k)}_{\Sigma}) dt^{(N)}(l^{(k)}_{S^1}) \bigr]$.}
  is why the field component $A^{\orth}_2$ ``interacts'' only with the loop $l$ while
 $A^{\orth}_1$ ``interacts'' with the two  loops $l^+$ and $l^-$.
 We will give this explanation in  ``Observation 1'' in part \ref{appJ.1} of the Appendix.

\begin{remark} \label{rm_7.3} \rm
 In the special case where $G$ is Abelian (cf. Remark \ref{rm_7.2a} above)
    there are several     simplifications.
    For example, in the special case where the (complex)
 representation  $\rho$ of $G$ is irreducible (and therefore 1-dimensional) we  have from Eq. \eqref{eq_Hol_disc_BF}
 \begin{multline*}  \Tr_{\rho}\bigl( \Hol^{disc}_{l}(A^{\orth}_+,  B_+ )  \bigr)
=  \rho\biggl( \prod_{k=1}^n \exp\biggl( \sum_{\pm} \tfrac{1}{2} A^{\orth}_1(\start l^{\pm(k)}_{S^1})(l^{\pm(k)}_{\Sigma})
    +     \tfrac{1}{2}  B_1(\start l^{\pm(k)}_{\Sigma})  dt^{(N)}(l^{\pm(k)}_{S^1}) \biggr) \biggr)\\
  \times  \rho\biggl(  \prod_{k=1}^n \exp\biggl(  A^{\orth}_2(\start l^{(k)}_{S^1})  (l^{(k)}_{\Sigma}) +
    B_2(\start l^{(k)}_{\Sigma})   dt^{(N)}(l^{(k)}_{S^1}) \biggr) \biggr)
\end{multline*}
We can work with a generalization of the last expression
where the second ``$\rho$'' appearing above  is replaced by another finite-dimensional representation $\rho'$ of $G$.
 By doing so we obtain a kind\footnote{Let us emphasize that we do not get a strict
analogue: in the case of Abelian $G$ our approach is less general than \cite{Ad0,Ad1}.
 We remark  that for an Abelian $G$ one actually can construct  a strict torus gauge analogue of the approach in \cite{Ad0,Ad1}  by modifying our approach in a suitably way. However, this modified approach will
 not be useful for dealing with the case of non-Abelian $G$} of torus gauge ``analogue'' of
the approach in \cite{Ad0,Ad1}.
\end{remark}

As the discrete version for the two expressions $\Det_{FP}(B_{\pm})$ appearing in Eq. \eqref{eq_WLO_BF_Eversion}
we choose again (as in  Eq. \eqref{eq_ Det_disc_FP1} in Sec. \ref{subsec4.10a} above)
\begin{align}  \label{eq_def_DetFPdisc_BF}
 \Det^{disc}_{FP}(B_{\pm}) & :=
   \prod_{x \in \face_0(q\cK)}  \det\nolimits^{1/2}\bigl(1_{{\ck}}-\exp(\ad(B_{\pm}(x)))_{| {\ck}}\bigr)
  \end{align}

The remaining steps  for discretizing the RHS  of  Eq. \eqref{eq_WLO_BF_Eversion}
 can be carried out easily (as in  in Secs \ref{subsec4.6}--\ref{subsec4.9} above).
 There is only one exception: since we are now working with ``full ribbons''
 it will be necessary to use an additional regularization procedure, cf. Remark \ref{rm_Step5_full_ribbon} in Sec. \ref{subsec5.5} above.

 \smallskip

 We then arrive at a rigorous version $\WLO^{disc}_{rig}(L)$
 of $\WLO(L)$ and its normalization
\begin{equation} \label{eq_WLO_BF_norm} \WLO_{rig}(L):= \tfrac{\WLO^{disc}_{rig}(L)}{\WLO^{disc}_{rig}(\emptyset)}
\end{equation}
In view of the heuristic formula Eq. \eqref{eq2.51}
we expect that for $k \ge \cg$ we have\footnote{Here $|\cdot|$ is the shadow invariant for  $\cG$ and $k$;
recall that we write the Lie algebra $\tilde{\cG}$ now simply as $\cG$, i.e. without the $\sim$}
 \begin{equation} \label{eq_maintheorem_BF} \WLO_{rig}(L) = \frac{|L|  |\emptyset|}{|\emptyset|  |\emptyset|}=\frac{|L|}{|\emptyset|}
\end{equation}
and in contrast to the situation in Theorem \ref{main_theorem}
there is now a reasonable chance that Eq. \eqref{eq_maintheorem_BF} even holds
for general simplicial ribbon links $L$ in $\cK \times \bZ_N$.

\section{Discussion \& Outlook}
\label{sec8}

In   \cite{Ha7a} we proposed a  rigorous ``simplicial'' realization
 of the torus-gauge-fixed non-Abelian CS-path integral for manifolds $M$ of the form $M=\Sigma \times S^1$.
In the present paper we proved the  main result of \cite{Ha7a}, Theorem \ref{main_theorem} above,
which deals with a special class of simplicial ribbon links.
During the proof of Theorem \ref{main_theorem} and in Sec \ref{sec6} it became clear that
in order to have a reasonable chance of generalizing our computations successfully to the case
of general simplicial ribbon links it seems to be necessary to make (at least) the following two modifications
of our approach:
 \begin{enumerate}

\item we should make the ``transition to the $BF$-theory point of view''
  (cf. ``Step 1'' and ``Step 2'' in Sec. \ref{subsec7.2} above),

  \item we should work with simplicial ribbons in $\cK \times \bZ_N$ (``full ribbons'') rather than with
 simplicial ribbons in $q\cK \times \bZ_N$ (``half ribbons'').

 \end{enumerate}
We sketched such a modification of our approach in Sec. \ref{subsec7.3}
(and a suitable reformulation of it in part \ref{appJ} of the Appendix).
In \cite{Ha7c} we will study this new approach in more
detail\footnote{In fact, since the computations  for general links are quite complicated
we will restrict our attention to the special
group $G=SU(2)$ and we will first compute only the low order terms  in the
 expansion of $\WLO_{rig}(L)$ as an asymptotic series of powers of $1/k$ (for  $k \to \infty$)}.
It remains to be seen whether we really obtain the correct values for $\WLO_{rig}(L)$
in the case of general $L$. Here are some points which suggest that the chances for this being the case
 are quite good.

\begin{itemize}

\item We mentioned in Sec. 6.2 in \cite{Ha7a}
 that also for simplicial ribbon links $L$  fulfilling\footnote{in this case
 some of the simplicial ribbons $R_i$, $i \le m$, contained in $L$
are allowed to be   ribbon analogues of non-trivial framed torus knots}
 only a  weaker version of condition (NCP)' we can evaluate
   $\WLO_{rig}(L)$ explicitly using a suitable modification of the approach used in Sec. \ref{sec5} above
   for proving Theorem \ref{main_theorem}, cf. \cite{Ha9}.

\item The heuristic argument in Appendix B.2 in \cite{Ha7a} shows that
      also for general links we can expect the value of $\WLO(L)$ to be a sum over step functions $B$.
      Using a rigorous version of this heuristic argument it
      should not be difficult to prove that also  $\WLO_{rig}(L)$ defined as in Sec. \ref{sec4} above
      will be a sum over step functions $B$.
      At the moment it is not yet clear whether these step functions have the correct values (or, rather, ``step sizes'')
       and wether the symmetry argument based on the affine Weyl group $\cW_{\aff}$
        we referred to in Sec. \ref{subsec5.6} above will  lead to a sum over  area colorings
       (cf. the sum $\sum_{\varphi \in col(L)} \cdots$ in Eq. \eqref{eqA.4} below)
       also in the case of general $L$.

\item If the sum $\sum_{\varphi \in col(L)} \cdots$ indeed arises also in the general situation
then it is clear\footnote{cf.   Remark \ref{rm_Step5_full_ribbon} above and recall that in \cite{Ha7c}
we will be working with ``full ribbons''}  from Sec. \ref{subsec5.5}
that also for general $L$ we will again obtain the factor $|L|^{\varphi}_1$ appearing in Eq. \eqref{eqA.4} below.
Moreover, also the factor $|L|^{\varphi}_3$ should appear again.
Things are a bit less clear regarding  the factor $|L|^{\varphi}_2$.
What remains to be clarified is whether we  obtain  the correct gleam expressions $\gleam(Y)$ also  for general $L$.

\item The crucial open question is related to the factor $|L|^{\varphi}_4$ in Eq. \eqref{eqA.4},
     which is definitely the most complicated and interesting factor.
         At the moment it is completely open whether this  factor can be obtained using our approach\footnote{
         or a suitable modification  of our approach} by evaluating the appropriate ``2-cluster'' expressions  (which will be very similar to the ones we studied in Sec. \ref{sec6}).\par

     We remark that
      in the case of Chern-Simons theory on $\bR^3$ in axial gauge  we already performed similar calculations in \cite{Ha2} (cf. also \cite{CCFM,Lab})
     and found that the analogous expressions for the ``2-clusters''
       were problematic.  One  serious problem in \cite{Ha2} was that
       the values of the 2-clusters turned out to depend on the implementation of a regularization
       procedure  called   ``loop smearing'' in \cite{Ha2}.
    It is possible that in the continuum approach in \cite{Ha4,Ha6} to CS theory on $\Sigma \times S^1$
    in the torus gauge, which also makes use of ``loop smearing'', there will be a similar ``loop smearing'' dependence.     Fortunately, within the simplicial approach developed and studied
     in \cite{Ha7a},  the present paper, and \cite{Ha7c}
     this complication is essentially\footnote{in fact, the ``half ribbons vs full ribbons'' issue
      mentioned above could be seen as a (fortunately very harmless) simplicial analogue of the issue of ``loop smearing dependence'' in the continuum setting} absent.

\end{itemize}

  \bigskip

 {\it Acknowledgements:}
 I want to thank the anonymous referee of
my paper \cite{Ha4} whose comments motivated me to look for an
alternative approach for making sense of the RHS  of (the original version of) Eq.
\eqref{eq2.48}, which is less technical than the continuum approach
in \cite{Ha3b,Ha4,Ha6}.  This eventually led  to \cite{Ha7a}  and the present paper.\par

I am also grateful to Jean-Claude Zambrini for several comments
which led to improvements in the presentation of  the present paper. \par

Finally, it is a great pleasure for me to  thank  Benjamin Himpel
for  many useful and important comments and suggestions which
 had a major impact on the presentation and
overall structure of the present paper.

 \renewcommand{\thesection}{\Alph{section}}
\setcounter{section}{0}

\section{Appendix: Lie theoretic notation II}
\label{appB}

The following two lists extend the two lists in Appendix A in \cite{Ha7a}.

\subsection{List of notation in the general case}
Recall that in Sec. \ref{subsec2.1} we
fixed a simply-connected compact Lie group $G$ (with Lie algebra $\cG$),
 a maximal $T$ of $G$ (with Lie algebra $\ct$),
  and a  Weyl chamber   $\CW \subset \ct$.

\smallskip

Apart from the  notation given in  Appendix A of \cite{Ha7a}
we  also use the following Lie theoretic notation in the present paper:

\begin{itemize}

\item  $\langle \cdot, \cdot \rangle $: the unique $\Ad$-invariant scalar product  on $\cG$
  such that\footnote{which is equivalent to the condition
 $\langle \alpha, \alpha \rangle_* = 2$ for every {\it long} real root $\alpha \in \ct^*$  where
$\langle \cdot, \cdot \rangle_*$ is the scalar product on $\ct^*$ induced by $\langle \cdot, \cdot \rangle$} $\langle \Check{\alpha}, \Check{\alpha} \rangle = 2$ holds
 for every short real coroot $\Check{\alpha}$ associated to $(\cG,\ct)$.
 Using $\langle \cdot, \cdot \rangle$ we now make the identification $\ct \cong \ct^*$.

\item $\ck$:  the $\langle \cdot , \cdot \rangle$-orthogonal complement of $\ct$ in $\cG$.

\item $\cR_{\bC}$:  the set of {\it complex} roots $\ct \to \bC$  associated to $(\cG,\ct)$

\item $\cR \subset \ct^*$: the set $\{ \tfrac{1}{2 \pi i}  \alpha_c \mid
 \alpha_c \in \cR_{\bC}\}$ of {\it real} roots associated to $(\cG,\ct)$

\item  $\cR_+ \subset \cR$:  the set of positive (real) roots
 corresponding to $\CW$

\item $\Gamma \subset \ct$: the lattice generated by the
set of real coroots associated to $(\cG,\ct)$, i.e. by the set $\{ \Check{\alpha} \mid \alpha \in \cR\}$
  where $\Check{\alpha} =  \tfrac{2 \alpha}{\langle \alpha, \alpha \rangle} \in \ct^* \cong \ct$ is the coroot associated to the root $\alpha \in \cR$.

\item $I \subset \ct$: the kernel of $\exp_{|\ct}:\ct \to T$. From the assumption that $G$ is simply-connected
 it follows that $I = \Gamma$.

\item  $\Lambda \subset \ct^* (\cong \ct)$:  the {\it real} weight lattice associated to $(\cG,\ct)$, i.e.
$\Lambda$ is the lattice which is dual to $\Gamma$.

\item $\Lambda_+ \subset \Lambda$:  the set of  dominant  weights corresponding to the Weyl chamber $\cC$,
i.e. $\Lambda_+ := \overline{\cC} \cap \Lambda$

\item $\rho$: half sum of positive roots (``Weyl vector'')
\item $\theta$: unique long root in  $\overline{\cC}$.
\item  $\cg= 1 + \langle \theta,\rho \rangle$:  the dual Coxeter number of
 $\cG$.

\item $P \subset \ct$: a fixed Weyl alcove

\item $Q \subset \ct$: a subset of $\ct$ of the form
$Q = \{ \sum_i \lambda_i e_i | 0 < \lambda_i < 1 \text{ \ } \forall i \le \dim(\ct)\}$ where $(e_i)_{i \le  \dim(\ct)}$
  is a fixed basis of $\Gamma = I$.

\item $\cW \subset \GL(\ct)$: the Weyl group of the pair $(\cG,\ct)$

\item $\cW_{\aff} \subset \Aff(\ct)$: the ``affine Weyl group of  $(\cG,\ct)$'',
i.e. the subgroup of $\Aff(\ct)$ generated by $\cW$ and the set of translations $\{ \tau_x \mid x \in \Gamma\}$
where $\tau_x: \ct \ni b \mapsto b + x \in \ct$.

\item $\cW_{k} \subset \Aff(\ct)$, $k \in \bN$:  the subgroup of $\Aff(\ct)$
given by $\{ \psi_k \circ \sigma \circ \psi_k^{-1} \mid \sigma \in \cW_{\aff} \}$
where $\psi_k : \ct \ni b \mapsto b \cdot k - \rho \in \ct$
(the ``quantum Weyl group corresponding to the  level $l := k - \cg$'')

\item  $\Lambda^k_+ \subset \Lambda$, $k \in \bN$: the subset of $\Lambda_+$ given by
$\Lambda^k_+ :=   \{ \l \in \Lambda_+  \mid  \langle \l ,\th \rangle \leq k - \cg \}$
(the ``set of dominant weights which are integrable at level $l = k - \cg$'').

\end{itemize}

In the main text, the number $k \in \bN$ appearing above will
be the integer $k$  fixed  in Sec. \ref{subsec2.1}
(which later is assumed to fulfill $k \ge \cg$).

\subsection{List of notation in the special case $G=SU(2)$}

Let us now consider the special group $G = SU(2)$ with the standard  maximal torus $T = \{\exp( \theta \tau) \mid  \theta \in \bR\}$ where
$$\tau :=\left( \begin{matrix} i  && 0 \\ 0 && -i  \end{matrix} \right)$$
Then $\cG = su(2)$ and $\ct = \bR \cdot \tau$.
There are two Weyl chambers, namely $\CW_+$ and $\cC_-$ where $\cC_{\pm}:= \pm [0,\infty) \tau$.
Let us fix $\CW := \CW_+$ in the following.
\begin{itemize}

\item $\langle \cdot, \cdot \rangle$ is  the scalar product on $\cG$ given by\footnote{in view of the formula
 $\Check{\alpha}= \alpha = 2 \pi \tau$ below we see that $\langle \cdot, \cdot \rangle$
indeed fulfills the normalization condition $\langle\Check{\alpha}, \Check{\alpha} \rangle = 2$}
 $$\langle A, B \rangle = -\tfrac{1}{4\pi^2} \Tr_{\Mat(2,\bC)}(AB)  \quad \quad \text{ for all } A, B \in \cG \subset \Mat(2,\bC)$$

\item $\cR_{\bC} = \{\alpha_c, - \alpha_c\}$  where $\alpha_c: \ct \to \bC$ is given by
$\alpha_c(\tau) = 2 i$

\item $\cR = \{\alpha, - \alpha\}$ where $\alpha := \tfrac{1}{2 \pi i} \alpha_c$.
 A short computation shows that $\alpha = 2 \pi \tau \in \ct$
 (recall  that we made  the identification  $\ct \cong \ct^*$).

\item $\cR_+ = \{\alpha\}$
\item $I = \Gamma = \bZ \cdot \Check{\alpha}$ where $\Check{\alpha} = \tfrac{2 \alpha}{\langle \alpha, \alpha \rangle} = \alpha$
\item $\Lambda = \bZ \cdot  \tfrac{\alpha}{2}$
\item $\Lambda_+ = \bN_0 \cdot  \tfrac{\alpha}{2} $
\item $\rho = \tfrac{\alpha}{2}$
\item $\theta = \alpha$
\item $\cg = 2$

\item possible choices for $P$ and $Q$ are  $P = (0,\tfrac{1}{2}) \alpha$
and $Q= (0,1) \alpha$

\item $\cW = \{1, \sigma\}$ where $1 = \id_{\ct}$ and $\sigma(b) = - b$ for $b \in \ct$;
using this and the explicit description of $I=\Gamma$ above
one easily obtains an explicit description of $\cW_{\aff}$ and $\cW_k$

\item $\Lambda^k_+ = \{0, \tfrac{1}{2} \alpha, \ldots,
\tfrac{k-2}{2} \alpha \}$ for $k \in \bN$

\end{itemize}

\section{Appendix: Turaev's shadow invariant}
\label{appA}

Let us briefly recall the definition of
Turaev's shadow invariant in the  situation relevant for us, i.e. for  manifolds $M$
of the form $M=\Sigma \times S^1$
where  $\Sigma$ is an oriented  surface.\par

Let $L= (l_1, l_2, \ldots, l_m)$, $m \in \bN$,  be a framed piecewise smooth link\footnote{this includes
the case of simplicial ribbon links in $q\cK \times \bZ_N$ as a special case, cf. Remark \ref{rm_appA_Ende}
below} in  $M= \Sigma \times S^1$.
For simplicity we will assume that each $l_i$, $i \le m$ is equipped with a ``horizontal'' framing,
cf. Remark 4.5 in Sec. 4.3 in \cite{Ha7a}.
Let $V(L)$ denote the set of points $p \in \Sigma$ where the loops
 $l^i_{\Sigma}$, $i \le m$, cross themselves or each other (the ``crossing points'')
 and $E(L)$ the set of
curves in $\Sigma$ into which the loops $l^1_{\Sigma}, l^2_{\Sigma},
\ldots, l^m_{\Sigma}$ are decomposed when being ``cut''  in the
points of $V(L)$.
 We assume that there are only finitely many
 connected components $Y_0, Y_1, Y_2, \ldots, Y_{m'}$, $m' \in \bN$ (``faces'')
 of  $\Sigma \backslash ( \bigcup_i \arc(l^i_{\Sigma}))$ and set
 $$F(L):= \{ Y_0, Y_1, Y_2, \ldots, Y_{m'} \}.$$
As explained in \cite{Tu2} one can associate in a natural way a
 half integer $\gleam(Y)  \in \tfrac{1}{2} \bZ$, called ``gleam'' of $Y$,
to each face $Y \in F(L)$. In the special case  where the two conditions (NCP)
and (NH) appearing in Sec. 6.1 in \cite{Ha7a} are fulfilled\footnote{cf. Remark \ref{rm_appA_Ende} below
for the relevance of these two conditions for the present paper}   we have the explicit formula
\begin{equation} \label{eqA.1}
\gleam(Y) = \sum_{i \text{ with } \arc(l^i_{\Sigma}) \subset
\partial Y}  \wind(l^i_{S^1}) \cdot \sgn(Y;l^i_{\Sigma}) \in \bZ,
\end{equation}
where $ \wind(l^i_{S^1})$ is the winding number of the loop
 $l^i_{S^1}$ and where $ \sgn(Y;l^i_{\Sigma})$
 is given by
 \begin{equation} \label{eqA.2}
  \sgn(Y; l^{i}_{\Sigma}):=
\begin{cases} 1 & \text{ if  $Y \subset R^+_i$ }\\
-1 & \text{ if  $Y \subset R^-_i$ }\\
\end{cases}
\end{equation} Here  $R^{+}_i$ (resp. $R^{-}_i$) is the unique
connected component $R$ of $\Sigma \backslash \arc(l^i_{\Sigma})$
such that $l^i_{\Sigma}$ runs around $R$ in the ``positive'' (resp.
``negative'') direction. \par

Let $G$  be a simply-connected and simple compact Lie group
  with maximal torus $T$.
  In the following we will use the notation from part \ref{appB} of the Appendix.
  In particular, we have
\begin{equation} \label{eqA.3}
 \Lambda^k_+    = \{ \l \in \Lambda_+  \mid  \langle \l,\th \rangle \leq k - \cg\}
\end{equation}
where $k \in \bN $ is as in Sec. \ref{sec2} above and where
 $\cg= 1 + \langle \theta,\rho \rangle$ is the dual Coxeter number of
 $\cG$.

 \begin{remark} \label{rm_app0} \rm
 Observe that for $k < \cg$ the set $\Lambda^k_+$  is empty
 so  $|L|$ as defined in Eq. \eqref{eqA.4} below will then vanish.
 If $k = \cg$ then $|L|$ will in general not vanish but will still be rather trivial.
 Not surprisingly, the definition of $|L|$ in the literature
 often excludes the situation $k \le \cg$ so our definition of  $|L|$  in Eq. \eqref{eqA.4}
 below is more general than usually.
 \end{remark}

Assume that each loop $l_i$ in the link $L$  is equipped with a ``color'' $\rho_i$,
i.e. a finite-dimensional complex representation of $G$.
By   $\gamma_i \in \Lambda_+$ we denote the highest weight of
 $\rho_i$ and set     $\gamma(e):= \gamma_i$
 for each $e \in E(L)$  where $i \le m$ denotes the unique
    index such that  $\arc(e) \subset \arc(l_i)$.
Finally, let  $col(L)$ be the set of all mappings $\vf: \{Y_0,
Y_1, Y_2, \ldots, Y_{m'}\} \to \Lambda^k_+$ (``area
colorings'').\par

We can now define the ``shadow invariant'' $|L|$ of the
(colored and ``horizontally framed'')   link $L$
associated to the pair $(\cG,k)$  by
 \begin{equation} \label{eqA.4}
|L|:= \sum_{\vf\in col(L)}
|L|_1^{\vf}\,|L|_2^\vf\,|L|_3^\vf\,|L|_4^\vf~
\end{equation}
  with\footnote{see footnote 93 in \cite{Ha7a} for a brief comment on the exact relationship between
  our formula for $|L|$ and the corresponding formula for $|L|$ in \cite{turaev}.}
  \begin{subequations} \label{eqA.5}
\begin{align} |L|_1^\vf&=\prod_{Y \in F(L)} \dim(\vf(Y))^{\chi(Y)}\\
|L|_2^\vf&= \prod_{Y \in F(L)}   \exp(\tfrac{\pi i}{{k}} \langle \vf(Y),\vf(Y) +2\r\rangle)^{\gleam(Y)}\\
\label{eq_XL3}  |L|_3^\vf&= \prod_{e \in E_*(L)} N_{\gamma(e)
\varphi(Y^+_e)}^{\varphi(Y^-_e)} \\
|L|_4^\vf&= \bigl( \prod_{e \in E(L) \backslash E_*(L)} S(e,\vf) \bigr) \times \bigl( \prod_{x \in V(L)} T(x,\vf) \bigr)
  \end{align}
 \end{subequations}
 Here $Y^+_e$ (resp. $Y^-_e$) denotes the  unique face $Y$
  such that $\arc(e) \subset \partial Y$ and,
  additionally,  the orientation on $\arc(e)$ described above
coincides with (resp. is opposite to) the orientation which is
obtained by restricting the orientation on $\partial Y$ to $e$.
Moreover,    we have set
 (for $\lambda, \mu, \nu \in \Lambda^k_+$)
  \begin{equation}  \label{eqA.6}
\dim(\lambda) := \prod_{\a \in \cR_+}{\sin{\pi \langle \l+\r,\a
\rangle \over {k} }\over\sin{\pi \langle \r,\a \rangle \over
{k} }}
 \end{equation}
    \begin{equation} \label{eqA.7}
 N_{ \mu \nu}^{\lambda} := \sum_{\tau \in \cW_{{k}}}
  \sgn(\tau) m_{\mu}(\nu-\tau(\lambda))
 \end{equation}
where $m_{\mu}(\beta)$ is the multiplicity of the weight $\beta$
in the  unique (up to equivalence) irreducible
representation $\rho_{\mu}$ with  highest weight $\mu$
and   $\cW_{{k}}$ is as in part \ref{appB} of the Appendix.
$E_*(L)$ is a suitable subset of $E(L)$ (cf. the notion of ``circle-1-strata''
in  Chap. X, Sec. 1.2 in \cite{turaev}). \par

The  explicit expression for the factors  $T(x,\vf)$ appearing in $|L|_4^\vf$  above
involves the so-called  ``quantum 6j-symbols'' (cf. Chap. X, Sec. 1.2 in \cite{turaev}) associated to $U_q(\cG_{\bC})$
 where $q$ is the root of unity\footnote{We remark that there are  different conventions
  for the definition of $U_q(\cG_{\bC})$. Accordingly, one finds different formulas for $q$ in the literature. For example, using the convention in \cite{Saw03}
  one would be led to the formula
  $q:=  e^{\frac{\pi i}{ D k}}$  where   $D$ is the quotient of the square lengths
of the long and the short roots of $\cG$}
\begin{equation} \label{eq_rootunity} q:= \exp( \tfrac{2 \pi i}{k})
\end{equation}
We omit the explicit formulae for $T(x,\vf)$ and $S(e,\vf)$ since they irrelevant for the present paper.
Indeed, for links $L$ fulfilling
the  aforementioned conditions (NCP) and (NH)
of Sec. 6.1 in \cite{Ha7a}   the set $V(L)$ is empty and
the set $E_*(L)$ coincides with $E(L)$, so  Eq. \eqref{eqA.4} (combined with Eqs  \eqref{eqA.5}) then reduces to
 \begin{equation} \label{eqA.8}
|L| = \sum_{\vf\in col(L)} \biggl( \prod_{i=1}^m N_{\gamma(l_i)
\varphi(Y^+_{i})}^{\varphi(Y^-_{i})} \biggr)  \biggl(  \prod_{Y \in F(L)}
\dim(\vf(Y))^{\chi(Y)}
  \exp(\tfrac{\pi i}{{k}} \langle \vf(Y),\vf(Y) +2\r \rangle )^{\gleam(Y)} \biggr)
  \end{equation}
   where  we have set $Y^{\pm}_{i}:= Y^{\pm}_{l^i_{\Sigma}}$.

\begin{remark} \label{rm_appA_Ende} \rm
The shadow invariant can be defined in a straightforward for every (colored) simplicial ribbon link $L$
in $q\cK \times \bZ_N$ (or in $\cK \times \bZ_N$) in by setting
$$|L| := |L_f|$$
where $L_f$ is the (colored and horizontally framed) piecewise smooth  link  associated to $L$.
Observe that if $L$ fulfills the conditions  (NCP)' and (NH)' of Sec. \ref{subsec4.10} above then
$L_f$ will fulfill the conditions (NCP) and (NH) mentioned above, so in this case Eq. \eqref{eqA.8}
above will again hold.
\end{remark}

\section{Appendix: $BF_3$-theory in the torus gauge}
\label{appC'}

\subsection{$BF_3$-theory}

Let $M$ be a closed oriented 3-manifold, let
$\tilde{G}$ be a simple simply-connected compact Lie group
with Lie algebra  $\tilde{\cG}$, and let
$\tilde{\G} := C^{\infty}(M,\tilde{G})$.\par

For $\tilde{A}\in \tilde{\cA} :=\Omega^1(M,\tilde{\cG})$
and   $\tilde{C} \in  \tilde{\cC} :=
\Omega^1(M,\tilde{\cG})$ and $\Lambda \in \bR$ (the ``cosmological
constant'')  we define\footnote{Here we assume
    for simplicity (cf. Remark \ref{rm_sec1.1} above) that $\tilde{G}$  is Lie subgroup of $U(\tilde{N})$ for some
 $\tilde{N} \in \bN$ and we set $\Tr := \tilde{c} \Tr_{\Mat(\tilde{N},\bC)}$
where $\tilde{c} \in \bR$ is chosen suitably}
\begin{equation} \label{eqB.1}
S_{BF}(\tilde{A},\tilde{C}):= \frac{1}{\pi} \int_{M}
\Tr\bigl(F^{\tilde{A}} \wedge \tilde{C} + \frac{\Lambda}{3}
\tilde{C} \wedge \tilde{C} \wedge \tilde{C}\bigr)
\end{equation}
 where $F^{\tilde{A}} := d\tilde{A} + \tilde{A} \wedge \tilde{A}$.
Let us assume in the following that
 $\Lambda \in \bR_+$  and set
\begin{equation} \kappa := \sqrt{\Lambda}
\end{equation}
 Note that $S_{BF}: \tilde{\cA} \times
\tilde{\cC} \to \bC$ is $\tilde{\G}$-invariant under the
$\tilde{\G}$-operation on
  $\tilde{\cA} \times \tilde{\cC}$
 given by
 $(A,C) \cdot \tilde{\Omega} =
 ( \tilde{\Omega}^{-1}A \tilde{\Omega} + \tilde{\Omega}^{-1}d\tilde{\Omega},\tilde{\Omega}^{-1}C
 \tilde{\Omega})$.

\medskip

It is well-known that in the situation   $\kappa \neq 0$ the
relation
\begin{equation} \label{eqB.2}
S_{BF}(\tilde{A},\tilde{C})=  S_{CS}(\tilde{A} + \kappa \tilde{C}) -
S_{CS}(\tilde{A} - \kappa \tilde{C})
\end{equation}
holds with $S_{CS}=S_{CS}(M,G,k)$
where  $G:= \tilde{G}$ and $ k := \frac{1}{\kappa}$.
 Using the change of
variable $(\tilde{A},\tilde{C}) \to ({A}_1,{A}_2)$ given by
\begin{equation} \label{eqB.3} {A}_1 := \tilde{A}+ \kappa \tilde{C}, \quad {A}_2 := \tilde{A}-\kappa \tilde{C}
\end{equation}
or, equivalently,
\begin{equation} \label{eqB.5}
\tilde{A}:= \tfrac{1}{2} ({A}_1 + {A}_2), \quad
\tilde{C}:= \tfrac{1}{2\kappa} ({A}_1 - {A}_2)
 \end{equation}
we therefore obtain, informally,  for every $\tilde{\chi}:
\tilde{\cA} \times \tilde{\cC} \to \bC$
\begin{align} \label{eqB.4}
& \iint \tilde{\chi}(\tilde{A},\tilde{C}) \exp(i S_{BF}(\tilde{A},\tilde{C})) D\tilde{A} D\tilde{C} \nonumber \\
& \sim \iint {\chi}((A_1,A_2))  \exp(i S_{CS}({A}_1)) \exp(- i S_{CS}({A}_2)) D{A}_1 D{A}_2
\end{align}
where ${\chi}: \cA_1 \times \cA_2 \to \bC$  with $\cA_j:= \Omega^1(M,\cG)$, $j=1,2$,
is the function given
by $\chi((A_1,A_2)) = \tilde{\chi}(\tilde{A},\tilde{C})$.

\medskip

If -- instead of setting $S_{CS}:=S_{CS}(M,G,k)$
with  $G:= \tilde{G}$ and $ k := \frac{1}{\kappa}$ --
we use $S_{CS}:=S_{CS}(M,G,(k_1,k_2))$ with
$G= \tilde{G} \times \tilde{G}$ and $(k_1,k_2) = (1/\kappa,
-1/\kappa)$  (cf. Remark 2.2 in \cite{Ha7a})
then we can\footnote{using  $\cA = \cA_1 \oplus \cA_2$ and $DA=DA_1 DA_2$}   rewrite
Eq. \eqref{eqB.4} as
\begin{align} \label{eqB.6}
& \iint_{\tilde{\cA} \times \tilde{\cC}}
  \tilde{\chi}(\tilde{A},\tilde{C}) \exp(i
S_{BF}(\tilde{A},\tilde{C})) D\tilde{A} D\tilde{C}
 \sim \int_{\cA} {\chi}({A})  \exp\bigl(i S_{CS}({A})\bigr) D{A}
\end{align}
 Thus we see that $BF_3$-theory on $M$ with group $\tilde{G}$ and $\kappa \neq 0$ is
essentially equivalent to CS theory on $M$ with group $G= \tilde{G} \times \tilde{G}$ and $(k_1,k_2) = (1/\kappa,
-1/\kappa)$.

\subsection{$BF_3$-theory on $M=\Sigma \times S^1$
 ``in the torus gauge''}

Let us now consider the special case
where $M=\Sigma \times S^1$,   $\kappa \neq 0$ and $1/\kappa
\in \bN$, and  where
 $\tilde{\chi}: \tilde{\cA} \times \tilde{\cC} \to \bC$ is of the form
\begin{equation}\tilde{\chi}(\tilde{A},\tilde{C})  =
\prod_{i=1}^m \Tr_{\rho_i}(\Hol_{l_i}(
 \tilde{A} + \kappa \tilde{C}, \tilde{A} - \kappa \tilde{C}))
 \end{equation}
(with    $(l_1, l_2, \ldots, l_m)$
and $(\rho_1,\rho_2, \ldots, \rho_m)$ as in Sec. \ref{subsec2.1}).

\smallskip

Let us now apply
``torus gauge fixing''to the expression
\begin{equation} \label{eqBF_expr} \iint_{\tilde{\cA} \times \tilde{\cC}}
  \tilde{\chi}(\tilde{A},\tilde{C}) \exp(i
S_{BF}(\tilde{A},\tilde{C})) D\tilde{A} D\tilde{C}
\end{equation}
More precisely, we will perform  the following three steps:
\begin{itemize}
\item we make a change of variable from
 ``BF-variables''  to
``CS-variables''\footnote{i.e. from $(\tilde{A},\tilde{C})$ to $(A_1,A_2)$} (Step 1)
\item we apply torus gauge fixing (Step 2)
\item we change back to ``$BF$-variables'' (Step 3)
\end{itemize}
Concretely, these three steps are given as follows:
\begin{description}
\item[Step 1:] We replace the expression \eqref{eqBF_expr} by the RHS  of Eq. \eqref{eqB.6}

\item[Step 2:] We perform torus gauge-fixing on the RHS
of Eq. \eqref{eqB.6}, i.e. we replace   the RHS
of Eq. \eqref{eqB.6} by the RHS  of\footnote{cf.  Remark 2.8 in \cite{Ha7a}}  Eq. \eqref{eq2.48}  in Sec. \ref{subsec2.2} above
(in the situation $G= \tilde{G} \times \tilde{G}$, $T:= \tilde{T} \times \tilde{T}$, and $(k_1,k_2) = (1/\kappa,
-1/\kappa)$  where $\tilde{T}$  is  a fixed maximal torus of $\tilde{G}$)

\item[Step 3:] We apply the change of variable
$(A^{\orth} , B) \to (\tilde{A}^{\orth},\tilde{B})$
given by
\begin{subequations} \label{eqB.8'}
\begin{align}
\tilde{A}^{\orth}  & :=  \bigl(\tfrac{ A^{\orth}_1 +
A^{\orth}_2}{2},  \tfrac{A^{\orth}_1 - A^{\orth}_2}{2\kappa}\bigr),\\
 \tilde{B} & := \bigl(\tfrac{B_1 + B_2}{2}, \tfrac{B_1 - B_2}{2\kappa} \bigr)
\end{align}
\end{subequations}
 to the RHS  of Eq. \eqref{eq2.48} (in the situation $G= \tilde{G} \times \tilde{G}$, $T:= \tilde{T} \times \tilde{T}$, and $(k_1,k_2) = (1/\kappa, -1/\kappa)$)
\end{description}
The expression which we obtain after performing the three steps
 above is the analogue of the RHS  of Eq. \eqref{eq_WLO_BF}  above
where instead of the change of variable \eqref{eqB.8} the change of variable
  \eqref{eqB.8'} is used.

\section{Appendix: Some alternatives to  the discretization approach in Sec. \ref{sec7}}
\label{appJ}

\subsection{A reformulation of Sec. \ref{subsec7.3} using the spaces  $\cA^{\orth}_{2N}(K)$, $\cA^{\orth}_{altern,1}(K)$, and $\cA^{\orth}_{altern,2}(K)$}
\label{appJ.1}

We will now reformulate/modify the discretization approach
in  Sec. \ref{subsec7.3} in a suitable way. This will not only lead to
certain stylistic improvements but also to several  insights which should be useful in \cite{Ha7c}.
Let us consider  the space
\begin{equation}
\cA^{\orth}_{2N}(K) := \Map(\bZ_{2N}, \cA_{\Sigma,\cG}(K)) = \Map(\bZ_{2N}, C^1(K_1,\cG) \oplus C^2(K_1,\cG))
\end{equation}
 Clearly, we  have
\begin{equation} \label{decomp_good1}
\cA^{\orth}_{2N}(K) \cong \cA^{\orth}_{altern,1}(K) \oplus \cA^{\orth}_{altern,2}(K)
\end{equation}
where
\begin{subequations}
\begin{align}
\cA^{\orth}_{altern,1}(K) & := \Map(\bZ^{even}_{2N},C^1(K_1,\cG)) \oplus \Map(\bZ^{odd}_{2N},C^1(K_2,\cG)) \\
\cA^{\orth}_{altern,2}(K) & := \Map(\bZ^{even}_{2N},C^1(K_2,\cG)) \oplus \Map(\bZ^{odd}_{2N},C^1(K_1,\cG))
\end{align}
\end{subequations}
with
\begin{subequations}
 \begin{align}
 \bZ_{2N}^{even} & := \{ \pi(t)   \mid t \in  \{2, 4, 6, \ldots, 2N\}  \} \subset  \bZ_{2N}\\
 \bZ_{2N}^{odd} & := \{  \pi(t)  \mid t \in  \{1, 3, 5, \ldots, 2N-1\} \} \subset  \bZ_{2N}
 \end{align}
 \end{subequations}
    $\pi:\bZ \to \bZ_{2N}$ being the canonical projection. \par

Let us now make the connection with the  constructions of Sec. \ref{subsec7.3}.
It is convenient to use the notation  $\cA^{\orth}_{double}(K)$ for what was
denoted by  $\cA^{\orth}(K)$ in Sec. \ref{subsec7.3}. Observe that
\begin{equation}\cA^{\orth}_{double}(K) = \Map(\bZ_N,\cA_{\Sigma,\cG \oplus \cG}(K)) \cong \cA^{\orth}_1(K) \oplus \cA^{\orth}_2(K)
\end{equation}
where we have set for $j=1,2$
\begin{equation} \label{eq_appD_Sec3_dec} \cA^{\orth}_j(K) := \Map(\bZ_N,\cA_{\Sigma,\cG}(K)) \cong
\Map(\bZ_{N},C^1(K_1,\cG)) \oplus \Map(\bZ_{N},C^1(K_2,\cG))
\end{equation}

We now make the identifications
\begin{equation} \label{eq_appJ_identi} \bZ_{2N}^{even}  \cong \bZ_{N} \cong \bZ_{2N}^{odd}
\end{equation}
which are induced by the bijections
$j_{even}:\bZ_{N} \to \bZ_{2N}^{even} $ and $j_{odd}:\bZ_{N} \to \bZ_{2N}^{odd} $ which are given by
$$j_{even}(\pi(t)) = \pi(2t), \quad \quad j_{odd}(\pi(t)) = \pi(2t-1) \quad \forall t \in \{1,2,\ldots,N\}$$
Clearly, the identifications \eqref{eq_appJ_identi} give rise to   identifications
\begin{equation} \label{eq_appJ_identif} \cA^{\orth}_{altern,j}(K) \cong \cA^{\orth}_j(K),  \quad j=1,2,  \quad
\text{ and } \quad
\cA^{\orth}_{double}(K) \cong \cA^{\orth}_{2N}(K)
\end{equation}

Recall that in Sec. \ref{subsec7.3} we worked with ``full ribbons'' $R$, i.e. closed simplicial ribbons
in $\cK \times \bZ = K_1 \times \bZ$ and recall also that such a $R$ induces three loops $l^+$, $l^-$, and $l$ in a natural way, $l^{\pm}$ being simplicial loops in $K_1 \times \bZ$ and $l$ being a simplicial loop in $K_2 \times \bZ$.
 In view of the identification $\bZ_{N} \cong \bZ_{2N}^{even}$
we will now consider $R$ as a closed simplicial ribbon
in $K_1 \times \bZ_{2N}^{even}$ and the loops $l^{\pm}$
 as simplicial  loops in $K_1 \times \bZ^{even}_{2N}$
 and $l$ as a simplicial  loop in $K_2 \times \bZ^{even}_{2N}$
 (here  we have equipped $\bZ^{even}_{2N}$ with the polyhedral cell complex structure
   inherited from $\bZ_N$).

\medskip

Using the identifications above we can now rewrite the discretization approach
in  Sec. \ref{subsec7.3}. Doing so will not only lead to  several stylistic improvements and ``insights''
(cf. Observation 1 and, possibly, Observation 3)
but it also suggests certain modifications which we will incorporate in \cite{Ha7c}
(cf. Observation 2):

\medskip

\noindent {\bf Observation 1:} Recall that on the RHS of  Eq. \eqref{eq_Hol_disc_BF}
    the field component $A^{\orth}_1$  ``interacts'' only with the loops $l^+$ and $l^-$
   while  $A^{\orth}_2$  ``interacts'' with the loop $l$.
   In view of the reformulation we just made this ansatz is very natural:
   If $A^{\orth}_1$ and  $A^{\orth}_2$
  are given as the components w.r.t. the decomposition \eqref{decomp_good1}
  then $A^{\orth}_1$ and $A^{\orth}_2$  will  ``live'' on different edges.
  More precisely, for $t \in \bZ_{2N}^{even}$ we have $A^{\orth}_1(t) \in C^1(K_1,\cG)$
  and $A^{\orth}_2(t) \in C^1(K_2,\cG)$.
  Since the $l^+$ and $l^-$ are simplicial loops in  $K_1 \times \bZ^{even}_{2N}$
  and $l$ is a simplicial loop in  $K_2 \times \bZ^{even}_{2N}$
$A^{\orth}_1$ will indeed only interact with $l^+$ and $l^-$
and $A^{\orth}_2$ will interact only with $l$.

\medskip

\noindent {\bf Observation 2:}
 Recall that the operators $L^{(N)}(B_{\pm})$ appearing on the RHS of Eq. \eqref{eq_RNB-def}
 are given by
 \begin{equation}  \label{eq_LB_Def.Wiederh}
 L^{(N)}(B_{\pm}) := \left( \begin{matrix}
 \hat{L}^{(N)}(B_{\pm}) && 0 \\
0 && \Check{L}^{(N)}(B_{\pm})
\end{matrix} \right)
\end{equation}
with $\hat{L}^{(N)}(B_{\pm})$ and $\Check{L}^{(N)}(B_{\pm}) $
 as in Eq. \eqref{eq_LN_ident1} and Eq. \eqref{eq_LN_ident2}
  in Sec. \ref{subsec4.1} above (with $B$ replaced by $B_{\pm}$
and where the matrix notation refers to the decomposition appearing in Eq. \eqref{eq_appD_Sec3_dec} above.\par
It is natural to ask whether it
 is possible to rewrite or redefine
(if not $L^{(N)}(B_{\pm})$ itself then at least) the product  $\star_K L^{(N)}(B_{\pm})$
  by a formula which involves  the (anti-symmetrized) operators $\bar{L}^{(2N)}(b)$ ,
    $b \in \ct$,      as in\footnote{or rather, the equation which is analogous to Eq. \eqref{eq_def_LOp_c}
     but with $N$ replaced by $2N$}   Eq. \eqref{eq_def_LOp_c}  instead of
  the operators  $\hat{L}^{(N)}(b)$ and $\Check{L}^{(N)}(b)$ appearing in
  Eq. \eqref{eq_def_LOp_a} and Eq. \eqref{eq_def_LOp_b}.
  And indeed, using the aforementioned reformulation of the
  discretization approach of Sec. \ref{subsec7.3} this is  possible,
 as we will now explain.\par

   Let $\star^{(2N)}_K: \cA^{\orth}_{2N}(K) \to \cA^{\orth}_{2N}(K)$
be the operator defined totally analogously as the  operator $\star_K: \cA^{\orth}(K) \to \cA^{\orth}(K)$
in Sec. \ref{subsec4.2} above but with $2N$ playing the role of $N$.
Moreover, let $L^{(2N)}(B_{\pm})$ be the operator on
\begin{equation}
 \cA^{\orth}_{2N}(K) \cong  \bigl( \oplus_{\bar{e} \in  \face_0(K_1 | K_2)}
\Map(\bZ_{2N},\cG)\bigr) \oplus  \bigl(\oplus_{\bar{e} \in  \face_0(K_1 | K_2)}
\Map(\bZ_{2N},\cG) \bigr)
\end{equation}
which is  given by
\begin{equation}
L^{(2N)}(B_{\pm}) := \bigl(\oplus_{\bar{e} \in  \face_0(K_1 | K_2)}
 \bar{L}^{(2N)}(B_{\pm}(\bar{e}))\bigr) \oplus \bigl(\oplus_{\bar{e} \in  \face_0(K_1 | K_2)}
 \bar{L}^{(2N)}(B_{\pm}(\bar{e}))\bigr)
\end{equation}
where  each $ \bar{L}^{(2N)}(b)$ with $b=B_{\pm}(\bar{e})$ is defined  totally analogously
as in Eq. \eqref{eq_def_LOp_c}  above (with $N$ replaced by $2N$).
 Observe that, even though neither of the two operators   $\star^{(2N)}_K$ nor $L^{(2N)}(B_{\pm})$
 leaves the two subspaces $\cA^{\orth}_{altern,j}(K)$, $j=1,2$, of $\cA^{\orth}_{2N}(K)$
 invariant  the composition  $\star^{(2N)}_K L^{(2N)}(B_{\pm}): \cA^{\orth}_{2N}(K) \to \cA^{\orth}_{2N}(K)$
  does. It turns out that  under the identifications \eqref{eq_appJ_identif}  the operator
 $\star^{(2N)}_K L^{(2N)}(B_{\pm})$ is similar but does not quite coincide with the operator
 $ \star_K L^{(N)}(B_{\pm})$ where $L^{(N)}(B_{\pm})$ is as in Eq. \eqref{eq_LB_Def.Wiederh} above.
 In \cite{Ha7c} we will  work with the ``new'' operators $\star^{(2N)}_K L^{(2N)}(B_{\pm})$.

\medskip

\noindent {\bf Observation 3:}
\noindent
Using the reformulation of the discretization approach
in  Sec. \ref{subsec7.3} sketched above
it might be possible to obtain  a  better understanding of the origin or the ``meaning''
 of the $1/2$-exponent appearing in Eq. \eqref{eq_def_DetFPdisc_BF}  in Sec. \ref{subsec7.3} above.\par

       We begin by  having a closer look at the spaces which are relevant in
       continuum  $BF_3$-theory\footnote{in contrast to  part \ref{appC'}  of the Appendix we write $G$ and $T$ etc instead of $\tilde{G}$, $\tilde{T}$; moreover, we often use the subscript ``double''} (cf.  part \ref{appC'} of the Appendix)  and some candidates for a simplicial realization:

  \begin{itemize}
 \item {\it Continuum spaces:} The full space is $\cA_{double}= \Omega^1(\Sigma \times S^1, \cG \oplus \cG)$.
      We have $\cA_{double} = \cA^{\orth}_{double} \oplus \cA^{||}_{double}$ where
     $\cA^{\orth}_{double} = \{ A \in \cA_{double}  \mid A(\partial/\partial t) = 0 \}
      \cong C^{\infty}(S^1,\Omega^1(\Sigma,  \cG \oplus \cG))$ and $\cA^{||}_{double} = \{ A_0 dt \mid A_0 \in  C^{\infty}(\Sigma \times S^1,\cG \oplus \cG)\} \cong C^{\infty}(\Sigma,\Omega^1(S^1,\cG \oplus \cG))$.
     By applying torus gauge fixing we can reduce $\cA^{||}_{double}$ essentially\footnote{here we ignore
     the issue of the well-known  topological obstructions, cf. Sec. 2.2.4 in \cite{Ha7a}}
      to $\cB_{double}= C^{\infty}(\Sigma,\ct \oplus \ct)$ (here we have identified $\cB_{double}$ with the subspace $\{ B dt | B \in  \cB_{double}\}$ of $\cA^{||}_{double}$).  The application of torus gauge fixing
      ``produces'' the heuristic determinant
        $ \prod_{\pm} \bigl[  \Det_{FP}(B_{\pm}) \bigr] =  \prod_{\pm} \bigl[   \det\bigl(1_{\ck}-\exp(\ad(B_{\pm}))_{|\ck}\bigr)\bigr]
        =  \prod_{\pm} \bigl[   \det\bigl(1_{\ck}-\Ad(\exp(B_{\pm}))_{|\ck}\bigr)\bigr]$
      where $1_{\ck}-\Ad(\exp(B_{\pm}))_{|\ck}: C^{\infty}(\Sigma,\ck) \to C^{\infty}(\Sigma,\ck) $
        (cf.  Sec. 2.2.3 in \cite{Ha7a}).

\item {\it Simplicial~spaces,~1.~choice:} A  natural choice for the simplicial analogues of the
continuum spaces mentioned above are the spaces  $\cA_{double}(q\cK) := C^1(q\cK \times \bZ_N, \cG \oplus \cG)$, $\cA^{\orth}_{double}(q\cK):=  \Map(\bZ_N,C^1(q\cK,\cG \oplus \cG))$,
$\cA^{||}_{double}(q\cK): = C^0(q\cK, C^1(\bZ_N,\cG \oplus \cG))$, and $\cB_{double}(q\cK) := C^0(q\cK,\ct \oplus \ct)$.
And in fact, in Sec. \ref{subsec7.3} above we used this definition of $\cB_{double}(q\cK)$ and of $\cA^{\orth}_{double}(q\cK)$
(and for technical reasons we also introduced the subspace
$\cA^{\orth}_{double}(K) := \Map(\bZ_N,C^1(K,\cG \oplus \cG)) $ of $\cA^{\orth}_{double}(q\cK)$).\par
One can hope\footnote{\label{ft_80} as mentioned in Appendix D in \cite{Ha7a} discrete torus gauge fixing can easily be implemented rigorously when working in a setting where spaces of Lie group valued maps are used;
it is not yet clear if/how this is also possible  in the present setting
which works with spaces consisting of  Lie algebra valued maps} that by applying ``discrete torus gauge fixing''
the space  $\cA^{||}_{double}(q\cK)$ can be reduced to the space $\cB_{double}(q\cK)$
(considered as a subspace of $\cA^{||}_{double}(q\cK)$ in the obvious way)
and that this   produces  the determinant
$\prod_{\pm} \bigl[ \prod_{x \in \face_0(q\cK)}  \det\bigl(1_{\ck}-\exp(\ad(B_{\pm}(x)))_{|\ck}\bigr)\bigr]
\bigr]$.
However, even if this argument could be made rigorous it would be totally unclear where the $1/2$-exponent appearing in  Eq. \eqref{eq_def_DetFPdisc_BF}  in Sec. \ref{subsec7.3} above should come from.
Let us therefore consider an alternative choice for the simplicial spaces.

\item {\it Simplicial spaces,~2. choice:}
Above (cf. Observation 1 and Observation 2) we observed\footnote{in fact, for
technical reasons we concentrated on the subspace $\cA^{\orth}_{2\bZ}(K):= \Map(\bZ_{2N},C^1(K,\cG))$
of $\cA^{\orth}_{2\bZ}(q\cK)$ but this is not essential for the present discussion} that
it has several advantages to use  the space
$\cA^{\orth}_{2\bZ}(q\cK):= \Map(\bZ_{2N},C^1(q\cK,\cG))$
instead of the space $\cA^{\orth}_{double}(q\cK)=  \Map(\bZ_N,C^1(q\cK,\cG \oplus \cG))$  as the simplicial analogue of the continuum space $\cA^{\orth}_{double}$
and to work with
 a suitable ``intertwined/interlinked'' direct sum decomposition of $\cA^{\orth}_{2\bZ}(q\cK)$.
  Similarly, we can introduce
$\cA_{2\bZ}(q\cK) := C^1(q\cK \times \bZ_{2N}, \cG)$ and
$\cA^{||}_{2\bZ}(q\cK) := C^0(q\cK, C^1(\bZ_{2N}, \cG))$ as  simplicial analogues
of the continuum spaces
$\cA_{double}$ and $\cA^{||}_{double}$.
It is now natural to ask
\begin{itemize}
\item if one can  find a ``good''
 ``intertwined/interlinked'' direct sum decomposition
 of $\cA^{||}_{2\bZ}(q\cK)$ (possibly,  in the spirit of Appendix \ref{appJ.2} below, i.e. involving
the root space decomposition \eqref{eq_root_dec} in some way),
\item if one can embed $\cB_{double}(q\cK)$
 as a suitable subspace $\cB^{embedded}_{double}(q\cK)$ of $\cA^{||}_{2\bZ}(q\cK)$ which respects the
 ``intertwined/interlinked'' direct sum decomposition of $\cA^{||}_{2\bZ}(q\cK)$
 and, finally,
  \item if one can use a version of ``discrete torus gauge fixing''\footnote{cf. footnote \ref{ft_80} above}  in order to reduce  the space $\cA^{||}_{2\bZ}(q\cK)$ to $\cB^{embedded}_{double}(q\cK)$.
 \end{itemize}
 If all this is possible
 then there  might indeed be a chance of   obtaining a satisfactory justification for
 the $1/2$-exponent appearing in Eq.   \eqref{eq_def_DetFPdisc_BF}  in Sec. \ref{subsec7.3} above.
\end{itemize}

\subsection{Another version of Eq. \eqref{eq_WLO_BF} in Sec. \ref{subsec7.2}}
\label{appJ.2}

 We observed in Remark \ref{rm_7.2b} in Sec. \ref{sec7} above that
 because of the   $\star$-operators on the main diagonal of the $2 \times 2$-matrix appearing in Eq. \eqref{eq_tildeS_CS_explizit} above  we cannot hope
to be able to find a discretized version of the path integral on the RHS  of Eq. \eqref{eq_WLO_BF}
where each of the two components $\Check{\tilde{A}}^{\orth}_1$ and $\Check{\tilde{A}}^{\orth}_2$ ``lives'' either on
 $K_1 \times \bZ_N$ or on $K_2 \times \bZ_N$. \par

 We will now show that by using a more sophisticated change of variable $A^{\orth} \to \tilde{A}^{\orth}$
 instead of   \eqref{eqB.8a}  the $\star$-operator on the main diagonal
 can be eliminated after all, cf. Eq. \eqref{eq_tildeS_CS_explizit_mod2} below.
  This new  change of variable, which we will introduce below,  is based on the ``root space decomposition''
\begin{equation} \label{eq_root_dec}
\tilde{\cG} = \tilde{\ct} \oplus \bigl(\oplus_{\alpha \in \tilde{\cR}_+}  \tilde{\cG}_{\alpha} \bigr)
\end{equation}
of $\tilde{\cG}$ where $\tilde{\cR}_+$ is the set of positive real roots of $\tilde{\cG}$ associated
to  $\tilde{\ct}$ (and to a fixed Weyl chamber of $\tilde{\ct}$)
and  $\tilde{\cG}_{\alpha} \cong \bR^2$ is the ``root space''  corresponding to $\alpha \in \tilde{\cR}_+$.
 For simplicity let us consider only the special case
 $\tilde{G}=SU(2)$ and $\tilde{T} = \{\exp( \theta \tau_0) \mid  \theta \in \bR\}$
with $\tau_0$ given below.
 In this case we  can rewrite Eq. \eqref{eq_root_dec}  as
 \begin{equation} \label{eq_root_dec_simpl}
 \tilde{\cG}=su(2) = \bR \cdot \tau_0 \oplus \bigl(\bR \cdot \tau_{1} \oplus \bR \cdot \tau_{2}\bigr)
 \end{equation}
 where
  $$\tau_0: = \left( \begin{matrix} i  && 0 \\ 0 && -i  \end{matrix} \right), \quad
  \tau_{1} =\left( \begin{matrix} 0  && i \\ i && 0  \end{matrix} \right), \quad
  \tau_{2} =\left( \begin{matrix} 0  && 1 \\ -1 && 0  \end{matrix} \right)$$
Using the concrete basis $(\tau_0,\tau_1,\tau_2)$ we can identify
 $\tilde{\cG}=su(2)$ with $\bR^3$ in the obvious way, which in turn leads to the identification
 \begin{equation} \label{eq_identif.} C^{\infty}(S^1, \cA_{\Sigma,\tilde{\cG}}) \cong C^{\infty}(S^1, \cA_{\Sigma,\bR})^3
 \end{equation}
The new change of variable $ A^{\orth} \to \tilde{A}^{\orth}$
mentioned above is  defined by
\begin{equation} \label{eqB.8a_new}
\tilde{A}^{\orth}   :=  \bigl( \tfrac{Q}{2} ( A^{\orth}_1 + A^{\orth}_2), \tfrac{Q^{-1}}{2} ( A^{\orth}_1 - A^{\orth}_2)\bigr),\\
\end{equation}
where $Q$ is the operator  on
$ C^{\infty}(S^1, \cA_{\Sigma,\tilde{\cG}}) \cong C^{\infty}(S^1, \cA_{\Sigma,\bR})^3   $
given by
$$ Q := \left( \begin{matrix} 1 && 0 && 0 \\
0 && 1 && \star  \\ 0 && 1 && -\star   \end{matrix} \right)$$
Here $1$ denotes the identity operator on $C^{\infty}(S^1, \cA_{\Sigma,\bR})$
and $\star$ the Hodge star operator on $C^{\infty}(S^1, \cA_{\Sigma,\bR})$
 which is induced by the auxiliary Riemannian metric $\mathbf g$
 fixed in Sec. \ref{subsec2.2} above.

\smallskip

Using the changes of variable \eqref{eqB.8a_new} and \eqref{eqB.8c} we arrive at the following modification of Eq. \eqref{eq_tildeS_CS_explizit}
above
\begin{multline}  \label{eq_tildeS_CS_explizit_mod2}
  \bS(\Check{\tilde{A}}^{\orth},\tilde{B})   =   \pi k \ll (\Check{\tilde{A}}^{\orth}_1,\Check{\tilde{A}}^{\orth}_2), \biggl( \begin{matrix}
 J   \ad( \tilde{B}_2) && \star \tfrac{\partial}{\partial t} + J \ad(\tilde{B}_1) \\
\star \tfrac{\partial}{\partial t} + J \ad(\tilde{B}_1)
  && J   \ad(\tilde{B}_2)
\end{matrix} \biggr) \cdot (\Check{\tilde{A}}^{\orth}_1,\Check{\tilde{A}}^{\orth}_2) \gg_{\Check{\tilde{\cA}}^{\orth}}
\end{multline}
where $J$ is the linear operator on  $ C^{\infty}(S^1, \cA_{\Sigma,\tilde{\cG}})$
which is induced in the obvious way by the
linear operator  $J_0$ on $\tilde{\cG}=su(2)$  given by $J_0 \cdot \tau_0 = 0$,
$ J_0 \cdot \tau_{1} =  \tau_{2}$, and $ J_0 \cdot \tau_{2} =  \tau_{1}$.\par

Observe that in contrast to Eq. \eqref{eq_tildeS_CS_explizit} above,
there is no $\star$-operator appearing on the main diagonal
of the matrix operator in Eq. \eqref{eq_tildeS_CS_explizit_mod2}.
It turns out, however, that when trying to discretize the analogue of Eq. \eqref{eq_WLO_BF}  which is obtained after
applying the change of variable \eqref{eqB.8a_new}  instead of \eqref{eqB.8a}
it is still not possible to
 to find an implementation where each of the two components $\Check{\tilde{A}}^{\orth}_1$ and $\Check{\tilde{A}}^{\orth}_2$ ``lives'' either on
 $K_1 \times \bZ_N$ or on $K_2 \times \bZ_N$.
In other words, it is still necessary to use a ``mixed'' implementation (cf.  Remark \ref{rm_7.2b} above).
 This time we have more freedom in choosing the ``mixed'' implementation.
It remains to be seen  whether this has any advantages compared to the original approach in Sec. \ref{subsec7.3}.

\end{document}